\newcommand{\card}{\operatorname{card}}
\newcommand{\degree}{\operatorname{degree}}
\newcommand{\IPoA}{\operatorname{\mathsf{PoA}}}
\newcommand{\IPoS}{\operatorname{\mathsf{PoS}}}
\newcommand{\PoA}{\operatorname{\mathsf{PoA}}}
\newcommand{\PoS}{\operatorname{\mathsf{PoS}}}
\newcommand{\diff}{\ \mathrm{d}}
\newcommand{\OUT}{\textup{OUT}}
\newcommand{\simplex}{\mathcal{S}}
\newcommand{\ignore}[1]{}
\newtheorem{theorem}{Theorem}
\newtheorem{claim}[theorem]{Claim}
\newtheorem{corollary}[theorem]{Corollary}
\newtheorem{lemma}[theorem]{Lemma}
\newtheorem{proposition}[theorem]{Proposition}
\theoremstyle{definition}
\newtheorem{definition}[theorem]{Definition}
\newtheorem{example}[theorem]{Example}
\newtheorem{remark}[theorem]{Remark}
\numberwithin{equation}{section}
\numberwithin{theorem}{section}
\begin{document}

\title{Location Games on Networks:\\
Existence and Efficiency of Equilibria\thanks{
This work was partially supported by PRIN 20103S5RN3, Galileo G15-30, and MOE2013-T2-1-158. Part of this research was completed while Ga\"etan Fournier was visiting Singapore University of Technology and Design in 2013 and 2014 and both authors were visiting the Institute for Mathematical Sciences, National University of Singapore in 2015. Support from the ANR Labex IAST is gratefully acknowledged.
}
}
\author{Ga\"etan Fournier\\
IAST\\
Manufacture des Tabacs\\
26 All\'ee de Brienne\\
31000 Toulouse\\
\texttt{fournier.gtn@gmail.com}
\and
Marco Scarsini\thanks{This  author is a member of GNAMPA-INdAM.}\\
Dipartimento di Economia e Finanza\\
LUISS\\
Viale Romania 32\\
00197 Roma, Italy\\
\texttt{marco.scarsini@luiss.it}
} 

\maketitle

\begin{abstract}
We consider a game where a finite number of retailers choose a location, given that their potential consumers are distributed on a network. Retailers do not compete on price but only on location, therefore each consumer shops at the closest store. We show that when the number of retailers is large enough, the game admits a pure Nash equilibrium and we construct it. We then compare the equilibrium cost borne by the consumers with the cost that could be achieved if the retailers followed the dictate of a benevolent planner. We perform this comparison in term of the  price of anarchy, i.e., the ratio of the worst equilibrium cost and the optimal cost, and the  price of stability, i.e., the ratio of the best equilibrium cost and the optimal cost. We show that, asymptotically in the number of retailers, these  ratios are bounded by two and one, respectively.

\end{abstract}

\noindent
\emph{MSC Subject Classification}: Primary	91A43; secondary 	91A06.

\noindent
\emph{OR/MS classification}: Games/noncooperative.

\noindent
\emph{Keywords}: Price of anarchy, price of stability, location games on networks, Hotelling games, pure equilibria, large games.

\section{Introduction}\label{se:intro}

\subsection{The problem}

In his seminal paper \citet{Hot:EJ1929} considers duopoly models, where two retailers compete by choosing a location and a price. 
The article is extremely rich in modeling, motivation, and examples in different areas. 
The most popular model considered by Hotelling involves two retailers who want to sell a homogeneous product to consumers who are uniformly distributed on a segment and make their purchase decision based on transportation costs and price of the product. The two retailers first simultaneously choose their location on the interval and then simultaneously choose the price of the product that they sell with the goal of maximizing their profit. 
Hotelling claims that, if transportation costs are linear, then a  \emph{principle of minimum differentiation} holds, that is, the only equilibrium is achieved when both retailers locate in the middle of the segment. 
\citet{dAsGabThi:E1979} show that there is a flaw in Hotelling's argument and, since payoffs are discontinuous, the principle of minimum differentiation actually does not hold.
This shows that  models where the retailers can choose both location and price are in general difficult to deal with.
 
Several variations of the model have been considered. 
Some of them assume that the price is exogenous and is the same for every retailer. 
In this case the competition is based only on retailers' location.
This model applies for instance to shops that sell products whose price is exogenously determined, for instance newsstands, pharmacies, or franchises of different types of services and products, e.g., brand clothes. 
Models that involve only location have been used also in political science \citep[see, e.g.,][]{Dow:HR1957} to explain why in a two-party system the parties tend to adopt  similar political platforms. 
Some interesting generalizations assume that set of feasible locations for the retailers is not necessarily a segment.
Our contribution goes in this direction.

\subsection{Our contribution}

In this paper we consider a model where consumers are uniformly distributed on a network and a finite number of retailers sell a unique homogeneous good and decide where to set shop. 
They can choose any location on the network and their choice is not limited to the vertices. 
The number of retailers and the price of the good they sell are exogenous. Each consumer buys the same amount of goods and decides to shop at the closest shop. 
Hence, the cost that a consumer incurs is the distance that he needs to travel to buy the good and the utility of the retailer is the share of the market that she can conquer, i.e., the mass of consumers that patronize her shop. 
This defines a normal form game, called \emph{location game} where the players are the retailers. 

In the first part of the paper we 
provide 
conditions for the existence of pure Nash equilibria in location games. 
In particular we show that for every possible network there exists a threshold $\bar{n}$ such that, whenever the number of retailers exceeds $\bar{n}$, the game admits a pure Nash equilibrium. 
The proof of this result is constructive.
We also consider special examples of networks for which more precise results can be obtained. 

In the second part of the paper we turn to analyze how efficient the equilibria of location games are. 
This is usually achieved with the price of anarchy, i.e., the ratio of the optimum social payoff and the social payoff induced by the worst Nash equilibrium.
\citep{KouPap:STACS1999}. 
The price of stability is constructed in a similar way by replacing the worst with the best Nash equilibrium \citep{SchSti:P14ACMSIAM2003}.
In both cases the social payoff is the sum of the payoffs achieved by all players in the game.

A location game is a constant-sum game, therefore, from the retailers' point of view, the social payoff is the same for every possible strategy profile. As a consequence, in their standard form, both the price of anarchy and the price of stability are equal to one.

From the consumers' point of view, a location game is not constant-sum, that is, the retailers' decisions affect the cost incurred by the consumers both individually and socially. 
Therefore something interesting can be said by examining the efficiency of equilibria from the consumers' viewpoint, i.e., by considering the total transportation cost incurred by all consumers to reach the closest shop.
This is the function that we use to compute the price of anarchy and the price of stability of a location game. 
We  prove that, for every network, as the number $n$ of retailers increases, there is a bound on the price of anarchy that is asymptotically not larger than $2$ and a bound on the price of stability that is asymptotically~$1$. 
We show that the bound on the price of anarchy is only asymptotic, i.e., for finite values of $n$ the price of anarchy can be larger than $2$. Moreover the convergence is not monotone: there 
exist
networks for which the price of anarchy is infinitely often strictly larger than $2$ and infinitely often strictly smaller than $2$.
The results on the  price of anarchy and on the  price of stability are proved using majorization techniques.

\subsection{Related literature}

As mentioned before, in general finding equilibria in models where both locations and prices are endogenous is a hard problem. 
For instance \citet{dAsGabThi:E1979} use quadratic transportation costs to obtain an equilibrium with two retailers. 
Interestingly enough, in this equilibrium the two retailers want to locate as far as possible from each other. 

To overcome these issues, some papers consider pure location models with exogenous prices.
For instance, \citet{EatLip:RES1975} study pure Nash equilibria for location games on the segment for an arbitrary number $n$ of retailers and they show that, when consumers are uniformly distributed, an equilibrium exists for any $n\neq 3$. 
A similar phenomenon, where equilibria exist for small and large values of $n$, but not for intermediate values, will be studied in Subsection~\ref{suse:star} for consumers distributed on a star.
Some papers consider consumers distributed on a plane. For instance, 
\citet{Los:YUP1954} and \citet{BolSte:JET1972} show that a strategy profile that splits the plane into hexagonal domains of attraction is socially optimal for the consumers.
\citet{Sal:BJE1979} considers a model with two retailers, where  consumers are distributed on a circle. This assumption simplifies the analysis with respect to the case of the interval, by eliminating the corner effects. 
\citet{EisLap:TS1993} find pure Nash equilibria when three retailers locate their shops on a tree.

Some papers focus on mixed equilibria in  location models.
\citet{Sha:JIE1982} finds a mixed equilibrium for the case of three retailers on the segment.
\citet{OsbPit:IER1986} study mixed Nash equilibria for location games on a segment under general assumptions on the consumers' distribution and they show that, as the number of retailers increases, the mixed strategy in the symmetric equilibrium of the game tends to mimic the distribution of the consumers.
A similar phenomenon is studied in \citet{NunSca:ETB2016} for retailers whose finite choice set is a subset of a general compact metric space.

Closer to the scope of our paper, a few authors consider location models on a graph. For instance \citet{DurTha:ESA2007} and \citet{MavMonPap:Springer2008} study a class of games called \emph{Voronoi games} where players choose a vertex $v$ in a finite graph and the payoff of each player is the Voronoi cell of $v$, that is, the set of vertices that is closer to $v$ than to any other chosen vertex. In our language this would correspond to a setting where both retailers and consumers live only on the vertices of a network and all the edges have the same length. For cyclic graphs \citet{MavMonPap:Springer2008} find bounds for the price of anarchy. 
Their result is similar to ours in the sense that the price of anarchy is not computed in terms of  retailers' payoffs, but rather in terms of consumers' costs. 
The big difference is that their game is finite, since the action set for players in their game is the finite set of vertices.
Our model is the same as the one studied by \citet{Pal:mimeo2011}. 
In his paper he finds conditions for existence of pure equilibria for location games on a graph. 
We make use of several of his intermediate results to prove our existence theorem and we fix a gap in his proof. 
The details of the similarities and differences between our and his existence proof will be described in  Section~\ref{se:existence}.

In an interesting paper \citet{HeiSoe:Tinb2014} consider a model where consumers are uniformly distributed on a graph and two retailers choose prices but not location. They overcome the difficulties of dealing with a network by focusing on prices and keeping locations fixed, whereas we do the opposite: we assume that the price is exogenous and we focus on location.

As mentioned before, we measure inefficiency of equilibria with the price of anarchy and price of stability. These measures were introduced by  
\citet{KouPap:STACS1999} and \citet{SchSti:P14ACMSIAM2003}, respectively and given these names by \citet{Pap:PACM2001} and 
\citet{AnsDasKleTarWexRou:SIAMJC2008}, respectively.
The papers by \citet{Vet:FOCS2002} and \citet{MavMonPap:Springer2008} are two examples where the social cost used to compute the price of anarchy is not the sum of the costs of the individual players. The same happens here, where the game is a payoff game for the retailers and a cost game for the consumers. 

Our efficiency results use majorization techniques \citep[see, e.g.,][]{MarOlkArn:Springer2011}. 
Although majorization is a very well-known tool in various areas of mathematics, probability, statistics, and, more recently, economics, to the best of our knowledge, it has not been used in game theory.

\subsection{Organization of the paper}

In Section~\ref{se:model} the model is introduced. Section~\ref{se:existence} proves existence of pure equilibria for location games with a large number of  players. Section~\ref{se:efficiency} deals with efficiency of these equilibria. 
Several examples are considered in Section~\ref{se:examples}.

\section{The model}\label{se:model}

We start by providing a formal definition of network. Then we describe the normal form location game played on this network.

\subsection{The network}\label{suse:network}

Consider a graph $(V,E)$, where $V$ is a finite set of \emph{vertices} and $E$ is a finite set of \emph{edges}. 
If the edge $e$ joins the vertices $u$ and $v$, we use the notation $(u,v):=e$.
Based on $(V,E)$, we construct a set that we endow with a distance and a measure.
First we associate to each edge $e$ a value $\lambda(e)>0$, called the \emph{length} of $e$. 
We want to treat each edge $e$ like an interval of length $\lambda(e)$, so, for any edge $e=(u,v) \in E$ we call
\begin{equation*}
(u,v,\alpha)=\alpha u + (1-\alpha) v
\end{equation*}
the convex combination of $u$ and $v$ with weights $\alpha$ and $(1-\alpha)$. 
The point $(u,v,0)$ is identified with the vertex $v$ and the point $(u,v,1)$ is identified with the vertex $u$.
Each point on an edge is defined by two different triplets since $(u,v,\alpha) = (v,u,1-\alpha)$. 
If $x_{1}=(u,v,\alpha_{1})$ and $x_{2}=(u,v,\alpha_{2})$, then we define the \emph{interval}
\begin{equation*}
[x_{1},x_{2}]:=\{(u,v,\eta) : \min(\alpha_{1},\alpha_{2}) \le \eta \le \max(\alpha_{1},\alpha_{2})\}.
\end{equation*}
If $e=(u,v)$, with an abuse of language we  use the notation $e$ also for the interval $[u,v]$.
Consider the set  
\begin{equation}\label{eq:network}
S:=\{(u,v,\eta) : u,v \in V, (u,v) \in E, \eta \in [0,1]\}.
\end{equation}
Now we endow  $S$ with a measure $\lambda$ as it follows. First $\lambda$ is defined on intervals: if $x_{1}=(u,v,\alpha_{1})$ and $x_{2}=(u,v,\alpha_{2})$, then 
\begin{equation*}
\lambda([x_{1},x_{2}]) = \lambda ([u,v]) \times  \vert \alpha_{2} - \alpha_{1} \vert.
\end{equation*} 
Then $\lambda$ is additively extended to the $\sigma$-field generated by the intervals. 

We are now ready to define a metric $d$ on $S$ that is coherent with $\lambda$. For any two points $x,y \in S$, the distance $d(x,y)$ is the measure $\lambda$ of the shortest path that joins $x$ and $y$. 
From now on, for the sake of concision, we call $S$ the metric measurable \emph{network} $(S, d, \lambda)$ and we say that $S$ is generated by $(V,E,\lambda)$.

We call \emph{leaf} a vertex $v \in V$ such that $\degree(v)=1$.
The network $S$ generated by a graph $(V,E,\lambda)$ is equivalent to a network generated by a sub-graph whose vertices have degree different from $2$. This subgraph can be obtained by performing this operation: whenever the vertex $u$ has degree two, delete it and replace  $[v,u]$ and $[u,w]$ with  $[v,w]$ so that $\lambda([v,w])=\lambda([v,u])+\lambda([u,w])$. Therefore we will always assume that $V$ contains no vertices of degree $2$. We then extend the definition of the function degree from $V$ to $S$ by assuming that $\degree(x)=2$ for all $x\in S \setminus V$.

\subsection{Retailers and consumers}\label{suse:sellers}

We consider a situation where each of $n$ retailers has to decide where to locate her shop on a network, given that a continuum of consumers is uniformly distributed on the network according to $\lambda$ and each consumer patronizes a shop in the closest location.

Ties may arise and they are solved as follows. Consider the set $A$ of consumers that are equally distant from $k$ different locations having each at least one shop. Then we assume that for each of these $k$ locations, $\lambda(A)/k$ consumers go to that location. Moreover if one of the $k$ locations has $h$ shops, then a fraction $\lambda(A)/(hk)$ patronizes each shop of this location.
Basically the network is decomposed into domains of attraction of different retailers's locations and then within each domain of attraction retailers in the same location split the consumers equally. Some parts of the network can belong to different domains of attraction, as the following example shows. 

\begin{example}
Consider the network in Figure~\ref{fi:attraction} with seven players. Assume that $\lambda(e_{4})=\lambda(e_{5})$ and that two retailers are located in $u$ and five retailers are located in $v$. All the points in $e_{8}$ are equally distant from $u$ and $v$. 
Therefore the retailers in $u$ jointly attract all the consumers on the  solid edges  plus half of the consumers on $e_{8}$. 
The retailers in $v$ jointly attract all the consumers on the dashed edges  plus the remaining half of the consumers on $e_{8}$. That is, each player in $u$ attracts the following quantity of consumers
\[
\frac{1}{2}\left(\lambda(e_{1})+\lambda(e_{2})+\lambda(e_{3})+\lambda(e_{4})+\frac{1}{2}\lambda(e_{8})\right)
\]
and each player in $v$ attracts the following quantity of consumers
\[
\frac{1}{5}\left(\lambda(e_{5})+\lambda(e_{6})+\lambda(e_{7})+\frac{1}{2}\lambda(e_{8})\right).
\]
This example shows that the situation of a location game on a general network is more complicated than the classical case of a game on a circle or a segment. The fact that a set of positive measure may be equidistant from two points imposes some extra care in the definition of the domain of attraction of retailers.
\end{example}

\begin{center}

\begin{figure}[H]
\centering
\tikzstyle{leaf}=[rectangle,fill,scale=0.2]
\tikzstyle{gleaf}=[rectangle,fill=white,scale=0.2]
\tikzstyle{rleaf}=[rectangle,fill=white,scale=0.2]
\tikzstyle{single}=[circle,fill=white,scale=0.4]
\tikzstyle{double}=[rectangle,fill=red,scale=0.8]
\tikzstyle{rple}=[circle,fill=blue,scale=0.8]
\newdimen\mydim
\newdimen\myeps
\begin{tikzpicture}
\mydim=3cm;
\myeps=0.01mm;
\begin{scope}[thick]
   \draw node(l0) [rleaf] at (0,\mydim) {};
   \draw node(l2) [rleaf] at (0,-\mydim) {};
   \draw node(l3) [gleaf] at (10,-\mydim) {};   
   \draw node(l5)  at (5,-\mydim) {};   
   \draw node(l6)  at (2,0) {};
   \draw node(l7)  at (8,0) {};
   \draw node(l8)  at (5,0) {};
   \draw node(19)  at (5.02,0) {};
   \draw node(20)  at (5.02,-\mydim) {};    
   \draw node(21)  at (4.98,0) {};
   \draw node(22)  at (4.98,-\mydim) {};    
   \draw node(23) [rleaf] at (0,0) {};   
   \draw node(24) [gleaf] at (10,\mydim) {}; 
   \draw node(25) [gleaf] at (10,-\mydim) {};   
   \draw  [red] (l0) -- node[left, black] {$e_{1}$} (l6.center);
   \draw  [red] (l2) -- node[left, black] {$e_{3}$} (l6.center);
   \draw  [red] (23) -- node[above, black] {$e_{2}$} (l6.center);   
   \draw  [densely dashed, blue] (24) -- node[right, black] {$e_{6}$} (l7);
   \draw  [densely dashed, blue] (25) -- node[right, black] {$e_{7}$} (l7);
   \draw  [red] (l6.center) -- node[above, black] {$e_{4}$} (l8.center);
   \draw  [densely dashed, blue] (l7) -- node[above, black] {$e_{5}$} (l8.center);
   \draw  [densely dashed, blue] (19.center) -- node[right, black] {$e_{8}$} (20.center);
   \draw  [red] (21.center) -- (22.center);
   \draw node(a) [double] at (11.5, 0.4) {};
   \draw (11.5,0.4) node [right]  {\ {$2$ players in $u$}}; 
   \draw node(b) [rple] at (11.5, -0.4) {};
   \draw (11.5,-0.4) node [right]  {\ {$5$ players in $v$}};
   \draw node(c) [double] at (2,0) {} ;
   \draw node(d) [rple] at (8,0) {} ; 
   \draw (2,-5pt)   node [below] {$u$};
   \draw (8,-5pt)   node [below] {$v$};
   \end{scope}
   \end{tikzpicture}
~\vspace{0cm} \caption{\label{fi:attraction} Domains of attraction when $\lambda(e_{4})=\lambda(e_{5})$.}
\end{figure}
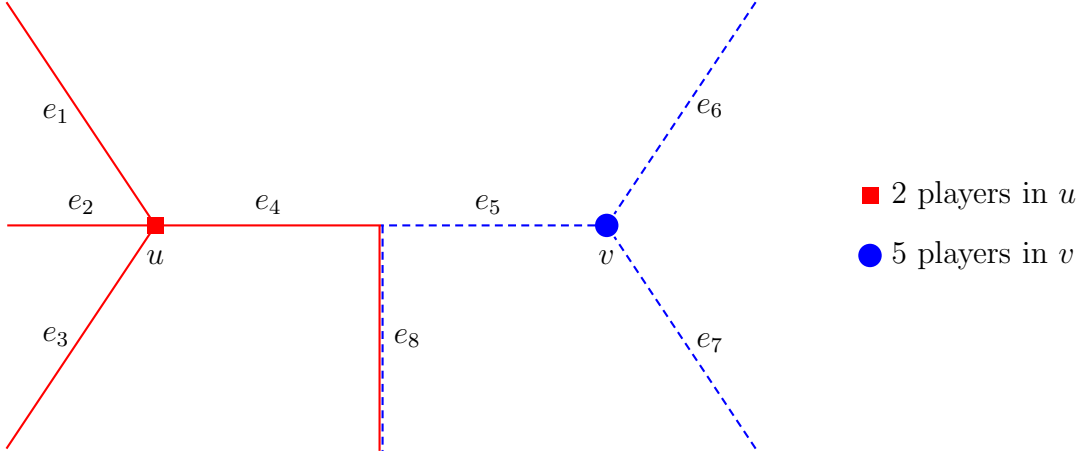
\end{center}

\subsection{The normal-form game}\label{suse:game}

We model this 
situation 
as a normal form game with a finite set $N=\{1, \dots, n\}$ of players having all the same action set $S$. Each player represents a retailer, whose payoff is the measure of the set of consumers who shop at her store. 
In order to formally define the players' payoffs, we need to introduce the following quantities.
Given a strategy profile $\boldsymbol{x}=(x_{1}, \dots, x_{n})\in S^{n}$, call $\Psi(\boldsymbol{x})$ the set of all locations that are occupied by some shop under the strategy profile $\boldsymbol{x}$, that is, the set of all $y\in S$ for which there exists $i\in N$ such that $y=x_{i}$. Given a set $A$, call $\card(A)$ its cardinality.
It can happen that $\card(\Psi(\boldsymbol{x})) < n$, since several players can choose the same location. 
For $K \subset \Psi(\boldsymbol{x})$ define
\begin{equation}\label{eq:YK}
\begin{split}
Y_{K} &= \{y\in S: d(y,x_{i}) = d(y,x_{j}) \text{ for all pairs } x_{i},x_{j}\in K \\
&\qquad\text{ and } d(y,x_{i}) < d(y,x_{\ell})  \text{ for all }x_{i}\in K, x_{\ell} \not\in K\}.
\end{split}
\end{equation}

The set $Y_{K}$ is the subset of consumers who are indifferent between all retailers in $K$ and strictly prefer retailers in $K$ to any other retailer outside $K$. Obviously for some choices of $K$ the set $Y_{K}$ can be empty.
The payoff of player $i\in N$ under the strategy profile $\boldsymbol{x}$ is
\begin{equation}\label{eq:payoff}
\rho_{i}(\boldsymbol{x}) = \frac{1}{\card(\{j\in N: x_{j}=x_{i}\})}\sum_{K\subset N} \mathds{1}_{x_{i}\in K} \frac{\lambda(Y_{K})}{\card(K)}.
\end{equation}

The above defined game is called \emph{location game} on $S$ with $n$ players and is denoted by $\mathcal{L}(n, S)$. 
A strategy profile $\boldsymbol{x}^{*}$ is a \emph{pure Nash equilibrium} of the game $\mathcal{L}(n, S)$ if for all $i \in N$ and for all $x_{i}\in S$ we have
\[
\rho_{i}(\boldsymbol{x}^{*}) \ge \rho_{i}(x^{*}_{1}, \dots, x^{*}_{i-1}, x_{i}, x^{*}_{i+1}, \dots, x^{*}_{n}). 
\]
For the sake of simplicity, in the rest of the paper we will use the term equilibrium to indicate a pure equilibrium.

\section{Existence of equilibria}\label{se:existence}

In this section we deal with existence of equilibria in location games. It is well-known that some location games do not admit equilibria (see, e.g., Proposition~\ref{pr:zeroone} below). Since the players' action spaces  are infinite and  their payoff functions are not continuous, no general known result can be used to prove existence. Therefore, more specific arguments will have to be employed, which rely on some structural properties of equilibria in location games.
The following theorem shows that a location game on any network $S$ always admits a pure Nash equilibrium, provided the number of players is large enough.

\begin{theorem}\label{th:existencebign}
For an arbitrary $S$, there exists $\bar{n}\in \mathbb{N}$ such that for every $n \ge \bar{n}$, the game  $\mathcal{L}(n, S)$ admits a pure Nash equilibrium.
\end{theorem}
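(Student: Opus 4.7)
The plan is to construct an explicit pure Nash equilibrium whenever $n$ is sufficiently large, following (and correcting) the strategy of \citet{Pal:mimeo2011}. Since we may assume $V$ has no vertex of degree $2$, the network $S$ decomposes into finitely many edges, each an interval joining two vertices that are either leaves or branching points with $\degree(v) \ge 3$. For large $n$, I would partition the player set among the edges: to each edge $e$ I assign a number $n_{e}$ of players, with $\sum_{e\in E} n_{e}=n$, so that $n_{e}$ is close to $n\,\lambda(e)/\lambda(S)$ while satisfying two admissibility conditions (a minimum size and an excluded value $n_{e}\ne 3$) that permit a Hotelling-type arrangement on $e$. Once $n$ exceeds a threshold $\bar n$ depending only on $(V,E,\lambda)$, such an allocation exists.

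On each edge $e=(u,v)$, I would place the $n_{e}$ assigned players in the Eaton--Lipsey pattern known from the segment case of \citet{EatLip:RES1975}: a pair at positions close to each endpoint and single players spaced evenly in the interior, tuned so that the domain of attraction of every player on $e$ has length exactly $\lambda(e)/n_{e}$. The allocation $n_{e}\approx n\lambda(e)/\lambda(S)$ then makes this common value approximately $\lambda(S)/n$ across the whole network, so that every player earns the same payoff $\lambda(S)/n$ at the proposed profile.

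The heart of the argument is the deviation analysis. For any player $i$ located on an edge $e$, I would rule out profitable unilateral deviations to an arbitrary point $y\in S$ by splitting into three cases: (i) $y$ lies in the interior of $e$, handled by the classical segment analysis; (ii) $y$ lies in the interior of another edge $e'$, where the paired players sitting near the endpoints of $e'$ confine the deviator's catch to at most $\lambda(e')/n_{e'}$; (iii) $y$ coincides with a vertex $v$, where the tie-breaking rule of Section~\ref{suse:sellers} and the cluster of paired players on each edge incident to $v$ limit $i$'s payoff to at most the uniform equilibrium value.

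The main obstacle is case (iii), specifically when $y$ is a branching vertex: here the deviator could in principle aggregate consumer mass from every incident edge simultaneously, so the construction must position the endpoint pairs sufficiently close to each branching vertex that such a move wins only a share of the tied mass, together with a bounded region on at most one incident edge. This is exactly where the gap in \citet{Pal:mimeo2011} lies and where the lower bound on $\bar n$ originates: by taking $n$ large, every $n_{e}$ can be forced to be large, hence the paired players lie arbitrarily close to each vertex, which drives the deviator's potential catch strictly below $\lambda(S)/n$. The threshold $\bar n$ is then governed by the maximal degree in $V$ and the shortest edge length of $S$, both intrinsic to the network.
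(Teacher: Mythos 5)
Your construction has a genuine gap at the branching vertices, and it is not the one you flag in case~(iii). In your profile every vertex $v$ with $\degree(v)\ge 3$ is unoccupied, and each incident edge carries a pair of players a small distance from $v$. Such a profile is never an equilibrium, no matter how close to $v$ the pairs sit. Write $d=\degree(v)$ and let the pairs be at distances $\delta_1\le\dots\le\delta_d$ from $v$ on the $d$ incident edges. A member of the closest pair who moves a distance $\varepsilon$ towards $v$ now captures, alone, the whole segment of length $\delta_1-\varepsilon$ down to $v$, half of the $\varepsilon$-gap she opens with her former partner, and an additional $\varepsilon/2$ of consumers on each of the other $d-1$ edges incident to $v$; if, as your tuning requires, the pair's catch towards $v$ equals its catch away from $v$, her payoff increases by $(d-2)\varepsilon/2>0$ for every $\varepsilon>0$. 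This is precisely the argument of Lemma~\ref{le:largen}, which shows that for large $n$ \emph{every} equilibrium must place a player at each vertex of degree at least $3$ (the vertex property of Definition~\ref{de:VP}). Pushing the pairs all the way onto $v$ does not rescue the construction either, since that puts more than $\degree(v)$ players at one location, which Lemma~\ref{le:playersdegree} excludes. The paper's proof instead stations exactly $\degree(v)$ players at each such vertex and calibrates the interior spacings so that every saturated vertex sees the same mass $\xi$ on each incident arm (Corollary~\ref{co:equaldelta}); this ingredient is absent from your proposal and cannot be recovered by merely shrinking the distance of the endpoint pairs to the vertex.

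A secondary point: you have also misidentified the gap in \citet{Pal:mimeo2011}. It is not the vertex-deviation analysis but an integrality problem. Once the vertices are saturated and a common spacing parameter $\xi$ is imposed, the total number of players accommodated is $f(\xi)=3\card(E)+\sum_{e\in E}\lceil\lambda(e)/(2\xi)\rceil$, a step function that is not onto $\mathbb{N}$, so some values of $n$ are unreachable for any $\xi$; the paper fixes this by first building an equilibrium with $n'$ players, $n\le n'\le n+\card(E)$, and then deleting up to $\card(E)$ redundant players (Definition~\ref{def:redundant}). Your alternative of choosing integers $n_e\approx n\lambda(e)/\Lambda$ with $\sum_e n_e=n$ faces the mirror-image difficulty: the per-player shares $\lambda(e)/n_e$ then differ across edges, and you would still have to verify both the admissibility of the rounding and that no player can profitably jump into a gap on an edge with a larger share; neither is addressed.
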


We will show in the proof of Theorem~\ref{th:existencebign} that we can take
\begin{equation}\label{eq:nbardef}
\bar{n} = 3 \card(E)+\sum_{e\in E}\left\lceil\frac{5\lambda(e)}{\min_{e' \in E}\lambda(e')}\right\rceil,
\end{equation}
where $\lceil x\rceil$ is the ceiling of $x$.

A similar result with a different bound was proposed in an unpublished paper by \citet{Pal:mimeo2011}, where he provides an algorithm as a constructive proof of the existence of equilibrium. However, his algorithm provides a profile of location where the number of players on an edge depends only on its length, hence, all the edges with the same length must have the same number of players. As a consequence, his proof is incomplete: for instance, in a graph where all the edges have the same length, his construction holds only for a number of players that is proportional to the number of edges.

Given this issue, we provide a complete proof of Theorem~\ref{th:existencebign}. We use several steps of P\'alv\"olgyi's construction and fill the gap in his proof as detailed in Remark~\ref{re:Palvolgy} below. 
Our proof is constructive. 
First we show that to be an equilibrium of a location game a strategy profile must satisfy several necessary conditions.
These necessary conditions  provide a solid structure for equilibria in location games and are the building blocks in the construction of our equilibrium for games with a large number of players.

\subsection{Proofs}

We first state some properties of equilibria that will be useful both to prove the existence of equilibria and to compute their efficiency.

Given a graph $(V,E)$ without vertices of degree $2$, define:
\begin{align}
V_{I}&=\{v \in V : \degree(v)\geq 3\}, \label{eq:VI}\\
V_{L}&=\{v \in V : \degree(v)=1\}, \label{eq:VL}\\
E_{IL} &= \{e \in E : e= (v,w), v \in V_{I}, w \in V_{L}\}, \label{eq:EIL}\\
E_{LL} &= \{e \in E : e= (v,w), v,w \in V_{L}\}, \label{eq:ELL}\\
E_{II} &= \{e \in E : e= (v,w), v,w \in V_{I}\}. \label{eq:EII}
\end{align}

Our proof of the existence of equilibria in games with a large number of players provides an equilibrium that satisfies the following key condition.

\begin{definition}\label{de:VP}
A strategy profile $\boldsymbol{x}$ satisfies the \emph{vertex property} if, for all $v \in V_{I}$, as defined in \eqref{eq:VI}, there exists $i \in N$ such that $x_{i} = v$.
\end{definition}

Notice that, by Lemma~\ref{le:largen} below, the vertex property is always satisfied by equilibria of games with a large enough number of players.

\begin{lemma}\label{le:playersleaf}
Consider an edge $e$ that connects a leaf $v$ and a vertex $w$.
Let the equilibrium $\boldsymbol{x}^{*}$ of $\mathcal{L}(n, S)$ satisfy the vertex property.
If, under this equilibrium, the closest player to the leaf $v$ is in the interior of $e$, then she cannot be alone.
\end{lemma}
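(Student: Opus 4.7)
The plan is a one-sided local perturbation: assume for contradiction that, at the equilibrium $\boldsymbol{x}^{*}$, some player $i$ sits alone at a point $y = x_i^{*}$ in the interior of $e$ while being the unique closest player to $v$, and then exhibit a small move of $i$ toward $w$ that strictly raises her payoff.

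First I would read off $i$'s equilibrium payoff by describing her domain of attraction explicitly. Since $v$ is a leaf and $i$ is the unique closest player to $v$, the entire segment $[v,y]$ is attracted by $i$ alone, contributing $d(v,y)$. On the $w$-side, the vertex property applied at $w$ (which one may invoke because $w$ is not a leaf, hence $w \in V_{I}$) guarantees that some player sits at $w$. Let $z \in (y,w]$ be the closest occupied point to $y$ in the direction of $w$; then $i$ attracts the segment between $y$ and the midpoint with $z$, of mass $d(y,z)/2$. The same player at $w$ screens $i$ off from everything beyond $w$, so her payoff is $\rho_{i}(\boldsymbol{x}^{*}) = d(v,y) + d(y,z)/2$.

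Next I would test the deviation in which $i$ moves to the point $y_{\varepsilon} \in e$ at distance $\varepsilon$ from $y$ toward $w$, with $\varepsilon$ small enough that $y_{\varepsilon}$ is in the interior of $e$, remains unoccupied, and still lies strictly on the $v$-side of $z$. Two things do not change under this deviation: $i$ remains the unique closest player to $v$, and the player at $w$ continues to screen her from the rest of $S$ (since $y_{\varepsilon}$ is even closer to $w$ than $y$ was). A short computation then gives her new payoff as $d(v,y_{\varepsilon}) + d(y_{\varepsilon},z)/2 = \rho_{i}(\boldsymbol{x}^{*}) + \varepsilon/2$, which is strictly larger, contradicting the equilibrium property.

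The step requiring the most care — and the real reason the vertex property is the right hypothesis — is the verification that $i$'s payoff is localized on $e$ both before and after the deviation. Without a player at $w$ one would have to track how $i$ competes against all players on the other edges incident to $w$, and the clean midpoint formula above would no longer be valid; with the vertex property in hand, this localization is immediate, so the perturbation argument goes through and the lemma follows.
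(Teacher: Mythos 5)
Your proof is correct and takes essentially the same route as the paper's one-sentence argument: the lone player closest to the leaf moves away from it by $\varepsilon$, gaining $\varepsilon$ of consumers on the leaf side while ceding only $\varepsilon/2$ toward the next occupied point, for a net gain of $\varepsilon/2$. Your version merely makes explicit what the paper leaves implicit, namely that the vertex property puts a player at $w$ so that the payoff computation is localized to $e$ and the midpoint bookkeeping is clean.
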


\begin{proof}
Assume, by contradiction, that the closest player to the leaf $v$ is in the interior of $e$ and is alone. Then she would profitably deviate by moving away from the leaf $v$, for as long as she does not overcome the next player on $e$.
\end{proof}

\begin{lemma}\label{le:playersdegree}
Take any point $w \in S$. If the equilibrium $\boldsymbol{x}^{*}$ of $\mathcal{L}(n, S)$ satisfies the vertex property, then  
$\card\{i \in N : x_{i}^{*} = w\} \le \degree(w)$.
\end{lemma}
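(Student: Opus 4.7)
The plan is a proof by contradiction. Suppose that $k := \card\{i \in N : x_i^* = w\} > d := \degree(w)$; I will exhibit a profitable unilateral deviation for one player at $w$, contradicting the equilibrium property of $\bs x^*$.

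The first step is to describe the domain of attraction of the location $w$ under $\bs x^*$. Because degree-$2$ vertices have been removed from $(V,E)$ and the vertex property places a player at every vertex of $V_I$, the strict Voronoi cell of $w$ cannot branch past a neighboring vertex; it decomposes into $d$ disjoint arms $A_1, \ldots, A_d$, one along each edge incident to $w$, of lengths $t_j := \lambda(A_j)$. Set $T := \sum_{j=1}^d t_j$, and let $R$ be the total weight received by location $w$ from positive-measure equidistant sets $Y_K$ where $K$ contains the players at $w$ together with players at other locations. Under $\bs x^*$, each of the $k$ players at $w$ then earns the payoff $(T+R)/k$.

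For the deviation, I would have one player at $w$ move a small distance $\epsilon > 0$ along edge $e_j$ to a point $w_\epsilon$, the other $k-1$ players remaining at $w$. A direct distance comparison shows that, as $\epsilon \to 0$, the deviator's new strict Voronoi cell has measure tending to $t_j$ on edge $e_j$ (the boundary with the stationary group settles at the midpoint $\epsilon/2$ away from $w$, while the boundary with the next occupied location along $e_j$, if any, shifts in her favor by $\epsilon/2$) and to $0$ on the other arms (those points remain closer to the $k-1$ players at $w$). Moreover, she absorbs every point $y$ of an equidistant set $Y_K$ involving $w$ whose shortest path to $w$ enters along $e_j$: such $y$ satisfies $d(y, w_\epsilon) = d(y, w) - \epsilon < d(y, w_\ell)$ for every $w_\ell$ in the tied group. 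Writing $C_j$ for the total measure of tied points captured in this way, the deviation payoff tends to $t_j + C_j$.

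Applying the equilibrium inequality $(T+R)/k \geq t_j + C_j$ for each $j = 1, \ldots, d$ and summing yields $d(T+R)/k \geq T + \sum_j C_j$. Since almost every point of every positive-measure tied set involving $w$ has a unique shortest path to $w$ entering along some incident edge, and each such $K$ has $\card(K) \geq 2$, one obtains $\sum_j C_j \geq \sum_K \lambda(Y_K) \geq R$. Combining, $d(T+R)/k \geq T+R$, i.e., $k \leq d$, contradicting $k > d$. The main obstacle I foresee is this bookkeeping step $\sum_j C_j \geq R$, which rests on a careful decomposition of each $Y_K$ according to the arm of $w$ through which its points' shortest paths travel, with the negligible exceptional case being points admitting several shortest paths to $w$.
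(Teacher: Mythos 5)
Your proposal is correct and follows essentially the same route as the paper's proof: assume $k=\card\{i\in N: x_{i}^{*}=w\}>\degree(w)$, decompose the mass attracted by the location $w$ into one contribution per incident edge (the paper's $\delta(w,e)$, your $t_{j}$), and obtain a profitable $\varepsilon$-deviation onto one of those edges by pigeonhole. The only cosmetic differences are that you sum the $\degree(w)$ equilibrium inequalities where the paper compares the per-player average $\frac{1}{k}\sum_{e}\delta(w,e)$ with $\max_{e}\delta(w,e)$ and deviates toward the maximizing edge, and that you explicitly account for positive-measure tied sets, which under the vertex property are in fact negligible (every consumer's nearest shop lies on her own edge or at one of its endpoints), as the paper implicitly assumes.
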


Lemma~\ref{le:playersdegree} implies that no more than two players can share the same location in the interior of any edge. Only the vertices can have more than two players, but never more than the degree of the vertex.

\begin{proof}[Proof of Lemma~\ref{le:playersdegree}]
We start with the case where $w \in V$.
Consider an equilibrium $\boldsymbol{x}^{*}$ of $\mathcal{L}(n, S)$ that satisfies the vertex property. Assume, by contradiction, that there exists a point $w$ such that 
\[
\card\{i \in N : x_{i}^{*} = w\} =: k > \degree(w). 
\]
Consider the set $\mathcal{I}(w)$ of all the edges that are incident on $w$. We can partition $\mathcal{I}(w)$ into a set $\mathcal{N}(w)$ of edges that have a leaf and no player on them, except in $w$, and its complement $\mathcal{P}(w)$. 
For $e\in \mathcal{P}(w)$ we call $c(e)$ the location of the closest player to $w$ on the edge $e$, who exists, since every vertex that is not a leaf has at least one player. 
For $e\in \mathcal{I}(w)$ we define
\begin{equation}\label{eq:delta}
\delta(w,e) =
\begin{dcases}
\frac{d(w,c(e))}{2}& \text{if }e \in \mathcal{P}(w),\\
\lambda(e)& \text{if }e \in \mathcal{N}(w).
\end{dcases}
\end{equation}
Therefore $\delta(w,e)$ represents the mass of consumers on the edge $e$ who shop at location $w$.
Given that $\card(\mathcal{I}(w)) = \degree(w)$,  each player in $w$ gains 
\[
\frac{1}{k}\sum_{e\in\mathcal{I}(w)} \delta(w,e)\le \frac{\card(\mathcal{I}(w))}{k} \max_{e\in\mathcal{I}(w)} \delta(w,e) < \max_{e\in\mathcal{I}(w)} \delta(w,e).
\]
Then, for $\varepsilon$ small enough, a player who moves by $\varepsilon$ from $w$ in the direction of $\arg\max_{e}\delta(w,e)$ enjoys a profitable deviation.

If $w \not\in V$, then we can make it a vertex by splitting the edge that contains $w$ into two edges incident on $w$. The previous argument goes through.
\end{proof}

\begin{definition}
Consider a strategy profile $\boldsymbol{x} \in S^n$ and $w \in S$. If $\card\{i \in N : x_{i} = w\}=\degree(w)$, then any player $j$ such that $x_{j}=w$ is called \emph{$\boldsymbol{x}$-balanced}, and the location $w$ is said to be \emph{$\boldsymbol{x}$-saturated}.
\end{definition}

\begin{lemma}\label{le:degreecost}
For $n \ge 2$, if $\boldsymbol{x}^{*}$ is an equilibrium of $\mathcal{L}(n, S)$ and player $i$ is $\boldsymbol{x}^{*}$-balanced, then
\[
\rho_{i}(\boldsymbol{x}^{*}) \le \rho_{j}(\boldsymbol{x}^{*}) \quad \text{for all } j \in N.
\]  
\end{lemma}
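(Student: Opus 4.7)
The strategy is to use the equilibrium condition on any $j\ne i$ via a deviation to a point immediately adjacent to $w := x_i^*$ on a well-chosen incident edge, and then to exploit the fact that a balanced player at $w$ earns only the \emph{average} of the per-edge masses arriving at $w$, whereas a deviator can effectively target the \emph{maximum} of them.

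If $x_j^*=w$ then $j$ is itself $\boldsymbol{x}^*$-balanced at $w$, so the symmetry of the payoff formula~\eqref{eq:payoff} at a saturated location gives $\rho_j(\boldsymbol{x}^*)=\rho_i(\boldsymbol{x}^*)$. The rest of the argument treats the case $x_j^*\ne w$. Let $\mathcal{I}(w)$ be the set of edges incident to $w$, and, for each $e\in\mathcal{I}(w)$, let $\delta(w,e)$ denote the mass of consumers on $e$ that shop at $w$ under $\boldsymbol{x}^*$. Since $w$ is $\boldsymbol{x}^*$-saturated with $\degree(w) = \card(\mathcal{I}(w))$ players, each player located at $w$ earns
\[
\rho_i(\boldsymbol{x}^*) \;=\; \frac{1}{\degree(w)}\sum_{e\in\mathcal{I}(w)}\delta(w,e).
\]

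For fixed $e \in \mathcal{I}(w)$ and $\varepsilon>0$ sufficiently small, consider the deviation in which $j$ moves from $x_j^*$ to the point $y_\varepsilon\in e$ at distance $\varepsilon$ from $w$. Then $j$ is alone at $y_\varepsilon$. Any consumer lying on some edge in $\mathcal{I}(w)\setminus\{e\}$ must traverse $w$ on its way to $y_\varepsilon$, so the $\degree(w)$ players at $w$ still defeat $j$ there. On $e$ itself, the boundary between the $w$-players and $y_\varepsilon$ sits at distance $\varepsilon/2$ from $w$, and on the far side of $y_\varepsilon$ the boundary with $y_\varepsilon$'s nearest competitor lies at least as far from $w$ as the corresponding boundary of $w$'s $\boldsymbol{x}^*$-domain on $e$ did, since vacating $x_j^*$ can only push such a competing boundary farther away. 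Hence $j$'s post-deviation payoff is at least $\delta(w,e)-\varepsilon/2$.

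The equilibrium inequality then gives $\rho_j(\boldsymbol{x}^*)\ge \delta(w,e)-\varepsilon/2$ for every small $\varepsilon>0$; letting $\varepsilon\to 0$ and maximizing over $e\in\mathcal{I}(w)$ yields
\[
\rho_j(\boldsymbol{x}^*)\;\ge\;\max_{e\in\mathcal{I}(w)}\delta(w,e)\;\ge\;\frac{1}{\degree(w)}\sum_{e\in\mathcal{I}(w)}\delta(w,e)\;=\;\rho_i(\boldsymbol{x}^*),
\]
which is the desired conclusion. The only step requiring genuine care is the local computation showing $j$ captures at least $\delta(w,e)-\varepsilon/2$ on $e$ after the deviation — in particular one must check that removing $j$ from $x_j^*$ does not \emph{shrink} $y_\varepsilon$'s Voronoi cell on $e$; but since deleting a competitor can only enlarge any Voronoi region, this is automatic, and the rest is the routine arithmetic of midpoints on an interval.
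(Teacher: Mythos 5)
Your proof is correct and follows essentially the same route as the paper's: the paper likewise writes $\rho_{i}(\boldsymbol{x}^{*})$ as the average of the per-edge masses $\delta(x_{i}^{*},e)$ over the edges incident to $x_{i}^{*}$, bounds it by the maximum, and rules out $\rho_{j}(\boldsymbol{x}^{*})<\rho_{i}(\boldsymbol{x}^{*})$ via the deviation of $j$ to a point at distance $\varepsilon$ from $x_{i}^{*}$ on the maximizing edge, which yields at least $\max_{e}\delta(x_{i}^{*},e)-\varepsilon/2$. Your write-up is slightly more careful than the paper's (it treats the case $x_{j}^{*}=x_{i}^{*}$ explicitly and checks that the deviation analysis survives when $j$ was the nearest competitor on the chosen edge), but the underlying argument is the same.
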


\begin{proof}
Let $i$ be a $\boldsymbol{x}^{*}$-balanced player. Then 
\[
\rho_{i}(\boldsymbol{x}^{*}) = \frac{1}{\degree(x^{*}_{i})}\sum_{e\in\mathcal{I}(x^{*}_{i})} \delta(x^{*}_{i},e) \le \max_{e\in\mathcal{I}(x^{*}_{i})} \delta(x^{*}_{i},e),
\]
where $\delta$ is defined in equation \eqref{eq:delta}. Assume, \emph{ad absurdum}, that there exists a player $j$ such that $\rho_{j}(\boldsymbol{x}^{*}) < \rho_{i}(\boldsymbol{x}^{*})$, then player $j$ could deviate on the edge 
$\arg\max_{e} \delta(x^{*}_{i},e)$ at a distance $\varepsilon$ from $x^{*}_{i}$ and gain 
\[
\max_{e\in\mathcal{I}(x^{*}_{i})} \delta(x^{*}_{i},e) - \frac{\varepsilon}{2} > \rho_{j}(\boldsymbol{x}^{*}). \qedhere
\]
\end{proof}

\begin{corollary}\label{co:equalperiph}
In equilibrium all balanced players get the same payoff.
\end{corollary}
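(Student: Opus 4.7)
The corollary is an immediate consequence of Lemma~\ref{le:degreecost}, so my plan is essentially to apply that lemma symmetrically. The key observation is that Lemma~\ref{le:degreecost} says every balanced player at an equilibrium $\boldsymbol{x}^{*}$ attains the minimum payoff among all players: $\rho_i(\boldsymbol{x}^{*}) \le \rho_j(\boldsymbol{x}^{*})$ for all $j \in N$ whenever $i$ is $\boldsymbol{x}^{*}$-balanced.

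So the plan is: fix an equilibrium $\boldsymbol{x}^{*}$, pick any two $\boldsymbol{x}^{*}$-balanced players $i$ and $j$, and apply Lemma~\ref{le:degreecost} twice. Once with $i$ as the balanced player, which yields $\rho_i(\boldsymbol{x}^{*}) \le \rho_j(\boldsymbol{x}^{*})$; once with $j$ as the balanced player, which yields $\rho_j(\boldsymbol{x}^{*}) \le \rho_i(\boldsymbol{x}^{*})$. Combining the two inequalities gives $\rho_i(\boldsymbol{x}^{*}) = \rho_j(\boldsymbol{x}^{*})$, which is the claim.

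There is no real obstacle here — the statement is just the antisymmetric closure of Lemma~\ref{le:degreecost} within the set of balanced players. If one wanted to be pedantic one should note that the set of balanced players may be empty, in which case the statement holds vacuously, and that for $n=1$ no player can be balanced (since $\degree(w) \ge 1$ for every $w \in S$ already requires at least one player at a vertex of degree at least one, but the lemma itself is stated for $n\ge 2$, so the corollary is only nontrivial in that regime). The proof can therefore be written in two or three lines.
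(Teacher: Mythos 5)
Your proof is correct and is exactly the paper's argument: apply Lemma~\ref{le:degreecost} twice, once with each of the two balanced players in the role of the minimizer, and combine the two inequalities. The additional remarks about vacuity are harmless but unnecessary.
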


\begin{proof}
Just consider two balanced players and apply Lemma~\ref{le:degreecost}  to show that the payoff of each one of them is smaller or equal than the payoff of the other.
\end{proof}

\begin{corollary}\label{co:equaldelta}
Consider an  equilibrium $\boldsymbol{x}^{*}$  of $\mathcal{L}(n, S)$ that satisfies the vertex property. Then there exists $\xi>0$ such that for every $\boldsymbol{x}^{*}$-saturated location $w$ and every $e\in \mathcal{I}(w)$, we have $\delta(w,e) = \xi$. Moreover every player on an  $\boldsymbol{x}^{*}$-saturated location has a payoff equal to $\xi$.
\end{corollary}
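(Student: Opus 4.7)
The approach is to combine an averaging identity with a local deviation argument, in the spirit of the proof of Lemma~\ref{le:degreecost}. The strategy is: fix a single saturated location $w$, show that the $\delta(w,e)$ are constant across $e \in \mathcal{I}(w)$ and equal to the payoff at $w$, and then invoke Corollary~\ref{co:equalperiph} to argue that this common value does not depend on the saturated location.

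Concretely, I would first observe that, by the payoff formula~\eqref{eq:payoff} together with the definition of $\delta$ in~\eqref{eq:delta}, any of the $\degree(w)$ players located at a saturated point $w$ earns
\[
\xi_{w} \;:=\; \frac{1}{\degree(w)}\sum_{e \in \mathcal{I}(w)} \delta(w,e),
\]
the arithmetic mean of the $\delta(w,e)$. Since the maximum dominates the mean, $\max_{e\in \mathcal{I}(w)} \delta(w,e) \ge \xi_{w}$, so the only thing left to rule out is strict inequality. To do so, I would consider a specific deviation: pick any player $i$ with $x^{*}_{i}=w$ and move her to a point $y$ at distance $\varepsilon>0$ from $w$ along a maximizing edge $e^{*}$. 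Reusing the computation in the proof of Lemma~\ref{le:playersdegree}, the deviating player captures everything between the midpoint of $[w,y]$ and the midpoint of $[y,c(e^{*})]$ on $e^{*}$ (or, if $e^{*} \in \mathcal{N}(w)$, everything up to the leaf), for a total of at least $\delta(w,e^{*}) - \varepsilon/2$. If $\delta(w,e^{*}) > \xi_{w}$, choosing $\varepsilon$ small enough yields a strictly profitable deviation, contradicting the equilibrium property. Hence $\max_{e} \delta(w,e) = \xi_{w}$, which forces \emph{all} $\delta(w,e)$ to coincide with $\xi_{w}$.

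Next, I would apply Corollary~\ref{co:equalperiph}: every player at a saturated location is balanced, so all such players share a common payoff, which means $\xi_{w}$ does not depend on the chosen saturated $w$. Setting $\xi := \xi_{w}$ gives the desired uniform value and settles both assertions simultaneously. Positivity $\xi>0$ then follows from the vertex property: for each $e \in \mathcal{I}(w)$, either $e \in \mathcal{N}(w)$ and $\delta(w,e)=\lambda(e)>0$, or $e \in \mathcal{P}(w)$ and, since any other player on $e$ lies in its interior and, by the vertex property, a non-leaf other endpoint of $e$ carries at least one player, one has $c(e)\ne w$ and $\delta(w,e)>0$.

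The only delicate step is the deviation computation on the maximizing edge: when $i$ leaves $w$, the remaining $\degree(w)-1$ players at $w$ redistribute the mass that $w$ keeps on $e^{*}$, and the two sub-cases $e^{*} \in \mathcal{P}(w)$ and $e^{*} \in \mathcal{N}(w)$ have to be handled separately. Since we only need a lower bound on the deviator's payoff, however, this reduces to the book-keeping already carried out in the proofs of Lemmas~\ref{le:playersdegree} and~\ref{le:degreecost}, and does not introduce any genuinely new difficulty.
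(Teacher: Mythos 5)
Your argument is correct and follows essentially the same route as the paper: you rule out $\max_{e}\delta(w,e)>\xi_{w}$ by an $\varepsilon$-deviation of one of the players at $w$ along a maximizing edge, and then invoke Corollary~\ref{co:equalperiph} to make $\xi_{w}$ independent of the saturated location $w$. The only addition is your explicit verification that $\xi>0$, which the paper leaves implicit; that step is also correct.
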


\begin{proof}
We start by showing that for each $\boldsymbol{x}^{*}$-saturated location $w$ we have 
\[
\delta(w,e) = \delta(w,e') \quad\text{for all } e, e' \in \mathcal{I}(w).
\]
If this were not true, i.e., if we had
\[
\max_{e\in\mathcal{I}(w)} \delta(w,e) > \frac{1}{\card(\mathcal{I}(w))} \sum_{e\in\mathcal{I}(w)} \delta(w,e),
\]
then one of the players in $w$ could profitably deviate of $\varepsilon$ on the edge 
$\arg \max_{e} \delta(w,e)$.
So each player in $w$ has the same payoff that we denote $\xi_w$. Now, using Corollary~\ref{co:equalperiph},  for every pair $v,w$ of $\boldsymbol{x}^{*}$-saturated locations  we have $\xi_v = \xi_w =: \xi$.
\end{proof}

\begin{definition}\label{def:redundant}
In an equilibrium $\boldsymbol{x}^{*}=(x_{1}^{*},\dots,x_{n}^{*})$ of the game $\mathcal{L}(n,S)$, player $i \in \{1,\dots,n\}$ is a \emph{redundant player} if the profile of location $(x_{1}^{*},\dots,x_{i-1}^{*},x_{i+1}^{*},\dots,x_{n}^{*})$ is an equilibrium of the game $\mathcal{L}(n-1,S)$.
\end{definition}

Our goal is to construct an equilibrium of the game $\mathcal{L}(n, S)$ for all $n \ge \bar{n}$, where $\bar{n}$ is defined as in \eqref{eq:nbardef}. 
We pick an $n \ge \bar{n}$ and first we find an equilibrium for a game  $\mathcal{L}(n', S)$, where $n'$ is slightly larger than $n$, in a way that the next proposition will make precise. 
Then we prove that there exists enough redundant players to transform this equilibrium with $n'$ players into an equilibrium of a game with exactly $n$ players.

\begin{proposition}\label{pr:equilibriumnprime}
For a given $S$ there exists $\bar{n} \in \mathbb{N}$ such that for all $n \ge \bar{n}$ there exists $n' \in \mathbb{N}$ for which
\begin{enumerate}[{\rm (a)}]
\item
$n \le n' \le n+\card(E)$,

\item
the game $\mathcal{L}(n', S)$ admits a Nash equilibrium.
\end{enumerate}
\end{proposition}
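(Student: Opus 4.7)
I would construct an equilibrium with $n'$ players, where $n' \in [n, n + \card(E)]$, in the rigid form suggested by Lemmas~\ref{le:playersleaf}--\ref{le:playersdegree} and Corollary~\ref{co:equaldelta}: every internal vertex $v \in V_I$ is saturated with exactly $\degree(v)$ co-located players, and on each edge $e$ the remaining players are placed as ``twin'' pairs at a common regular spacing $\xi > 0$. Since saturated vertices must deliver the same per-player payoff $\xi$ across the network, the point of the construction is to pick $\xi$ so that pairs of twins on the interior of each edge, spaced by $2\xi$, also yield payoff exactly $\xi$ to each twin. On edges in $E_{IL}$ or $E_{LL}$ the leaf-side cluster must be a pair (not a singleton) in view of Lemma~\ref{le:playersleaf}; on edges in $E_{II}$ the two saturated endpoints absorb slices of length $\xi$ on each side before the twin pattern starts.

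Given a spacing $\xi$, the number of players forced onto edge $e$ is essentially $2\lfloor \lambda(e)/(2\xi) \rfloor$, with bounded corrections depending on whether its endpoints lie in $V_I$ or $V_L$. Thus the total population
\[
N(\xi) \;=\; \sum_{v \in V_I} \degree(v) \;+\; \sum_{e \in E} n_e(\xi)
\]
is a piecewise constant, nondecreasing function of $1/\xi$ whose successive jumps contribute at most $2$ per edge. I would choose $\xi$ to be the largest value with $N(\xi) \le n$; the jump structure then ensures $n - N(\xi) \le 2\card(E)$, and a slightly finer bookkeeping (absorbing half the jumps into single-player corrections) reduces this gap to $\card(E)$.

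To close the remaining gap of at most $\card(E)$ players, I would perform local ``$+1$'' insertions, at most one per edge: on every deficient edge, add one extra player so the resulting restriction to that edge is still an equilibrium. The two natural moves are either to augment a twin pair adjacent to a saturated vertex into a triple absorbed by that vertex (which is allowed by Lemma~\ref{le:playersdegree} provided the vertex had one slot free in $\degree(v)$), or to introduce an odd singleton in a segment large enough to receive it without destroying the common payoff. Each such edit raises the count by exactly $1$ while preserving $\xi$ elsewhere, so iterating over edges that are still short gives an equilibrium with some $n' \in [n,n+\card(E)]$.

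The main obstacle is the $+1$ step: one has to verify that after the local modification no player on the enlarged edge has a profitable deviation along that edge \emph{and} no player elsewhere is tempted to migrate onto it. This is exactly where the quantitative threshold $\bar n$ in \eqref{eq:nbardef} enters. The summand $\lceil 5\lambda(e)/\min_{e'}\lambda(e')\rceil$ guarantees that each edge carries enough twins so that the inserted player fits into an interval strictly shorter than the smallest cross-network market share, making the insertion self-contained; the additive $3\card(E)$ covers the extra players needed to saturate every vertex in $V_I$ and to honor the leaf-side pairing demanded by Lemma~\ref{le:playersleaf}.
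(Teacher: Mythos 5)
Your skeleton (saturate every $v\in V_I$, fix a common payoff level $\xi$, count the induced population as a piecewise constant function of $1/\xi$, and land within $\card(E)$ of $n$) matches the paper's, but two of your steps do not go through. First, a configuration made only of twin pairs at a rigid common spacing $2\xi$ cannot tile an edge of arbitrary length: the remainder $\lambda(e)-2\xi\lfloor\lambda(e)/(2\xi)\rfloor$ has to live somewhere, and any gap of length strictly greater than $2\xi$ is a profitable deviation target for a player earning $\xi$, while stretching all gaps uniformly gives the twins more than $\xi$ and again creates intervals longer than $2\xi$ that the vertex players would invade. The paper's construction resolves exactly this by mixing pairs with \emph{single} players and introducing a per-edge slack parameter $\alpha(e)\in[1,2]$ (equations \eqref{eq:alphaEL}--\eqref{eq:alphaELL}): the slack is absorbed in intervals of length $\alpha(e)\xi\le 2\xi$ flanked by singletons, who thereby earn $\alpha(e)\xi\ge\xi$ without creating any interval that a deviator could profitably enter. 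This commensurability problem is precisely the gap in P\'alv\"olgyi's argument discussed in Remark~\ref{re:Palvolgy}; your all-twins design reproduces it rather than fixing it.

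Second, your ``$+1$ insertion'' step fails. Turning a pair adjacent to a saturated vertex into a triple at that vertex violates Lemma~\ref{le:playersdegree} (in your own construction the vertex already holds $\degree(v)$ players, and an interior point can hold at most two), and dropping a singleton between two twin pairs spaced $2\xi$ apart cuts each neighbouring twin's payoff to $3\xi/4$, after which those twins profitably deviate into any surviving interval of length $2\xi$. Adding players is the hard direction because it lowers payoffs. The paper goes the other way: the counting function $f$ in \eqref{eq:definitionf} is arranged so that one can always reach some $n'$ with $n\le n'\le n+\card(E)$ \emph{from above}, the equilibrium is built directly for that $n'$, and the surplus of at most $\card(E)$ players is removed only later, in the proof of the theorem, by deleting one redundant member of a designated pair per edge (Definition~\ref{def:redundant}); a removal raises the survivor's payoff from $\xi$ to $2\xi$ and changes no interval length, so it preserves the equilibrium trivially. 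You would need either to adopt that overshoot-and-delete scheme or to supply a genuinely new argument for why an insertion can be made incentive-compatible.
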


The proof of this proposition requires the following lemma. 
Given $S$, define $f:\mathbb{R}_{+}\to \mathbb{N}$ as follows:

\begin{equation}\label{eq:definitionf}
f(z) = 3\card(E) + \sum_{e\in E} \left\lceil \frac{\lambda(e)}{2z} \right\rceil.
\end{equation}

The quantity $f(\xi)$ represents the number of players on the network in our equilibrium, as a function of the quantity $\xi$ defined in Corollary~\ref{co:equaldelta}.

\begin{lemma}\label{le:graphf}
For all $n \ge 4 \card(E)$, there exist $\underline{\xi}, \overline{\xi} \in \mathbb{R}_{+}$ such that 
\[
n \le f(z) \le n + \card(E), \quad \text{for all } z \in[\underline{\xi}, \overline{\xi}).
\]
\end{lemma}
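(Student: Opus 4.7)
The plan is to exploit the structure of $f$ as a non-increasing right-continuous integer-valued step function whose jumps are bounded in size, and then apply a discrete intermediate value argument.

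First I would establish the basic structure of $f$. Each summand $z \mapsto \lceil \lambda(e)/(2z) \rceil$ is a non-increasing right-continuous step function: it equals $k$ precisely on $[\lambda(e)/(2k), \lambda(e)/(2(k-1)))$ for $k \ge 2$ (and on $[\lambda(e)/2, \infty)$ when $k=1$), so as $z$ increases past $\lambda(e)/(2k)$ the value drops from $k+1$ to $k$, with the new lower value attained at the jump point itself. Consequently $f$ inherits these properties. The jump set is contained in the countable discrete set $\{\lambda(e)/(2k) : e \in E, k \in \mathbb{N}_{\ge 1}\}$, which has only finitely many points in any bounded interval of $(0,\infty)$. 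Crucially, at any jump point $z_0$, the drop $f(z_0^-) - f(z_0)$ equals the number of edges $e$ such that $\lambda(e)/(2z_0) \in \mathbb{N}_{\ge 1}$, hence is at most $\card(E)$.

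Next I would compute the asymptotics: $f(z) \to \infty$ as $z \to 0^+$, and $f(z) = 3\card(E) + \card(E) = 4\card(E)$ for all $z \ge \tfrac{1}{2}\max_{e\in E} \lambda(e)$ (since each ceiling equals $1$ in that regime). The hypothesis $n \ge 4\card(E)$ ensures $n + \card(E) \ge 4\card(E)$, so the set $\{z > 0 : f(z) \le n + \card(E)\}$ is nonempty and bounded below.

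The core step is to define
\[
z_* := \inf\{z > 0 : f(z) \le n + \card(E)\}.
\]
By right-continuity of $f$, $f(z_*) \le n + \card(E)$. By definition of the infimum, $f(z) > n + \card(E)$ for all $z < z_*$; since $f$ is integer-valued this gives $f(z) \ge n + \card(E) + 1$, and taking $z \nearrow z_*$ yields $f(z_*^-) \ge n + \card(E) + 1$. Combining with the jump bound $f(z_*^-) - f(z_*) \le \card(E)$ gives $f(z_*) \ge n + 1 > n$. Therefore $n < f(z_*) \le n + \card(E)$.

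To conclude, I would set $\underline{\xi} := z_*$ and choose $\overline{\xi}$ to be the next jump point of $f$ strictly greater than $z_*$ (or $+\infty$ if none exists); such a choice is possible since jump points are isolated. On $[\underline{\xi}, \overline{\xi})$ the function $f$ is constant and equal to $f(z_*)$, which lies in $[n, n + \card(E)]$ as required. The only subtle point is the careful bookkeeping of right-continuity at $z_*$ together with the uniform jump bound $\card(E)$; once these are nailed down the argument is essentially a discrete intermediate value theorem.
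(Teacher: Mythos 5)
Your proof is correct and follows essentially the same route as the paper: it relies on exactly the same three structural facts (monotonicity and right-continuity of $f$, the jump bound $f(z_0^-)-f(z_0)\le \card(E)$, and the limits $4\card(E)$ at infinity and $+\infty$ at $0^+$), merely making explicit, via the infimum $z_*$, the discrete intermediate-value step that the paper leaves implicit. The only cosmetic point is that $\overline{\xi}$ must be a real number, so when $f$ has no jump above $z_*$ one should pick any finite $\overline{\xi}>z_*$ rather than $+\infty$.
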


\begin{proof}
The function $f$ in \eqref{eq:definitionf} is defined as the sum of a constant and $\card(E)$ terms each one of which is a piecewise constant, weakly decreasing, and right continuous function with jumps of magnitude $1$.
Therefore, for all $z_{0}>0$, 
\[
0 \le \lim_{z \to z_{0}^{-}} f(z)-f(z_{0}) \le \card(E).
\]
Moreover, we have
\begin{align*}
\lim_{z \to \infty} f(z) &= 3 \card(E) + \card(E) = 4 \card(E), \\
\lim_{z \to 0^{+}} f(z) &= +\infty.  \qedhere
\end{align*}
\end{proof}

\begin{proof}[Proof of Proposition~\ref{pr:equilibriumnprime}]
Let $f$ be defined as in \eqref{eq:definitionf}. Then
\begin{equation}\label{eq:barn}
\bar{n} := f\left(\frac{\min_{e \in E} \lambda(e)}{10}\right) 
= 3 \card(E)+\sum_{e\in E}\left\lceil\frac{5\lambda(e)}{\min_{e'}\lambda(e')}\right\rceil.
\end{equation}
Take $n \ge \bar{n}$.
By Lemma~\ref{le:graphf} there exists an interval $[\underline{\xi}, \overline{\xi})$ such that, for $\xi \in[\underline{\xi}, \overline{\xi})$ we have $f(\xi)=n'$, with $n\le n' \le n+\card(E)$.

We choose $\xi =  \underline{\xi}$ and construct a Nash equilibrium $\boldsymbol{x}^{*}$ of $\mathcal{L}(n', S)$. This notation is coherent with the previous definition of $\xi$ because players in saturated locations get a payoff equal to $\xi$.
To achieve the equilibrium, we position players on the edges of $S$ as follows.

\medskip
\noindent First case: $e \in E_{IL}$. 
If $e=(v,w)$, with $w \in V_{L}$, then, under $\boldsymbol{x}^{*}$, the number of players on $[v,w]$ is set to
\[
p(e):=\degree(v) +  \left\lceil \frac{\lambda(e)}{2\xi} \right\rceil + 2.
\]
Out of these players, $\degree(v)$ will be in $v$, and the remaining will be as in Figure~\ref{fi:EL}. 
Therefore the edge $e$ is split into three intervals of length $2\xi$, one interval of length $\xi$ and $(p(e)-\degree(v)-5)$ intervals of length $\alpha(e)\xi$, where $\alpha(e)$ is a parameter such that $1 \le \alpha(e) \le 2$. 
Taking into account the number of players on $e$, the length $\lambda(e)$, and the number of intervals of length $\alpha(e)\xi$, we have
\begin{equation}\label{eq:alphaEL}
\alpha(e)= \frac{\lambda(e) - 7\xi }{\xi \left\lceil \frac{\lambda(e)}{2\xi} \right\rceil - 3\xi}.
\end{equation}
\bigskip

\begin{center}

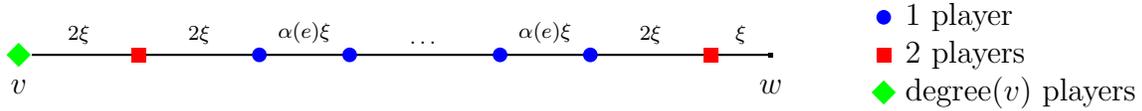
\begin{figure}[H]
\centering
\tikzstyle{leaf}=[rectangle,fill,scale=0.2]
\tikzstyle{single}=[circle,fill=blue,scale=0.5]
\tikzstyle{double}=[rectangle,fill=red,scale=0.7]
\tikzstyle{rple}=[diamond,fill=green,scale=0.6]
\tikzstyle{kple}=[regular polygon,regular polygon sides=3,fill=brown,scale=0.5]

\begin{tikzpicture}
  \begin{scope}[thick]
   \draw node(l0) [rple] at (0,0) {};
   \draw node(l1) [leaf] at (10,0) {};
   
   \draw  (l0) -- (l1);
           
   \draw node(a) [single] at (11.5, 0.5) {};
   \draw (11.5,0.5) node [right]  {\ {$1$ player}}; 
       
   \draw node(b) [double] at (11.5, 0) {};
   \draw (11.5,0) node [right]  {\ {$2$ players}};

   \draw node(c) [rple] at (11.5, -0.5) {};
   \draw (11.5,-0.5) node [right]  {\ {$\degree(v)$ players}};   

   \draw (0,-5pt)   node [below] {$v$};
   \draw (10,-5pt)   node [below] {$w$};
         
   \draw node(c) [double] at (16/10,0) {} ;
   \draw (8/10,0)   node [above, font=\scriptsize] {$2\xi$};   

   \draw node(h) [single] at (32/10,0) {} ;
   \draw (24/10,0)   node [above, font=\scriptsize] {$2\xi$};      
   
   \draw node(d) [single] at (44/10,0) {} ;   
   \draw (38/10,0)   node [above, font=\scriptsize] {$\alpha(e)\xi$};
  
   \draw (54/10,0)   node [above, font=\scriptsize] {$\dots$};  

   \draw node(f) [single] at (64/10,0) {} ;
   \draw (70/10,0)   node [above, font=\scriptsize] {$\alpha(e)\xi$};

   \draw node(h) [single] at (76/10,0) {} ;
   \draw (84/10,0)   node [above, font=\scriptsize] {$2\xi$};   
      
   \draw node(n) [double] at (92/10,0) {} ;
   \draw (96/10,0)   node [above, font=\scriptsize] {$\xi$};   

   \end{scope}
   
\end{tikzpicture}
~\vspace{0cm} \caption{\label{fi:EL} Players in $e\in E_{IL}$.}
\end{figure}
\end{center}

\bigskip

\noindent Second case: $e \in E_{II}$. 
If $e=(v,w)$, with $v,w\in V_{I}$, then, under $\boldsymbol{x}^{*}$, the number of players on $[v,w]$ is set to
\[
\degree(v) + \degree(w) + \left\lceil \frac{\lambda(e)}{2\xi} \right\rceil +1.
\]
Out of these players, $\degree(v)$ will be in $v$, $\degree(w)$ will be in $w$, and the remaining will be as in Figure~\ref{fi:EI}. This implies that
\begin{equation}\label{eq:alphaEI}
\alpha(e)= \frac{\lambda(e) - 6 \xi}{\xi \left\lceil \frac{\lambda(e)}{2\xi} \right\rceil - 2\xi}.
\end{equation}
Notice that in this case the construction is still valid if we reverse the roles of $v$ and $w$.

\bigskip

\begin{center}

\begin{figure}[H]
\centering

\tikzstyle{leaf}=[rectangle,fill,scale=0.2]
\tikzstyle{single}=[circle,fill=blue,scale=0.5]
\tikzstyle{double}=[rectangle,fill=red,scale=0.7]
\tikzstyle{rple}=[diamond,fill=green,scale=0.6]
\tikzstyle{kple}=[regular polygon,regular polygon sides=3,fill=brown,scale=0.5]

\begin{tikzpicture}

  \begin{scope}[thick]
  
   \draw node(l0) [rple] at (0,0) {};
   \draw node(l1) [kple] at (10,0) {};
   
   \draw  (l0) -- (l1);
           
   \draw node(a) [single] at (11.5, 0.6) {};
   \draw (11.5,0.6) node [right]  {\ {$1$ player}}; 
       
   \draw node(b) [double] at (11.5, 0.2) {};
   \draw (11.5,0.2) node [right]  {\ {$2$ players}};

   \draw node(c) [rple] at (11.5, -0.2) {};
   \draw (11.5,-0.2) node [right]  {\ {$\degree(v)$ players}}; 
    
   \draw node(c) [kple] at (11.5, -0.6) {};
   \draw (11.5,-0.6) node [right]  {\ {$\degree(w)$ players}};   

   \draw (0,-5pt)   node [below] {$v$};
   \draw (10,-5pt)   node [below] {$w$};
         
   \draw node(c) [single] at (16/10,0) {} ;
   \draw (8/10,0)   node [above, font=\scriptsize] {$2\xi$};   
         
   \draw (22/10,0)   node [above, font=\scriptsize] {$\alpha(e)\xi$};

   \draw node(d) [single] at (28/10,0) {} ;
   \draw (36/10,0)   node [above, font=\scriptsize] {$\cdots$}; 
  
   \draw node(f) [single] at (44/10,0) {} ;
   \draw (50/10,0)   node [above, font=\scriptsize] {$\alpha(e)\xi$};  

   \draw node(f) [single] at (56/10,0) {} ;
   \draw (62/10,0)   node [above, font=\scriptsize] {$\alpha(e)\xi$};

   \draw node(h) [single] at (68/10,0) {} ;
   \draw (76/10,0)   node [above, font=\scriptsize] {$2\xi$};   
      
   \draw node(n) [double] at (84/10,0) {} ;
   \draw (92/10,0)   node [above, font=\scriptsize] {$2\xi$};   

   \end{scope}
   
\end{tikzpicture}
~\vspace{0cm} \caption{\label{fi:EI} Players in $e\in E_{II}$.}

\end{figure}
\end{center}

\bigskip

\noindent Third case: $e \in E_{LL}$. 
If $e=(v,w)$, with $v,w\in V_{L}$, then, under $\boldsymbol{x}^{*}$, the number of players on $[v,w]$ is set to
\[
\left\lceil \frac{\lambda(e)}{2\xi} \right\rceil +3.
\]
They will be located as in Figure~\ref{fi:ELL}. This implies that 
\begin{equation}\label{eq:alphaELL}
\alpha(e) = \frac{\lambda(e) - 8\xi}{\xi\left\lceil \frac{\lambda(e)}{2\xi} \right\rceil -4\xi}.
\end{equation}
As before, in this case the construction remains valid if we reverse the roles of $v$ and $w$.

\bigskip

\begin{center}

\begin{figure}[H]
\centering
\tikzstyle{leaf}=[rectangle,fill,scale=0.2]
\tikzstyle{single}=[circle,fill=blue,scale=0.5]
\tikzstyle{double}=[rectangle,fill=red,scale=0.7]
\tikzstyle{rple}=[diamond,fill=green,scale=0.6]
\tikzstyle{kple}=[regular polygon,regular polygon sides=3,fill=brown,scale=0.5]

\begin{tikzpicture}

  \begin{scope}[thick]
  
   \draw node(l0) [leaf] at (0,0) {};
   \draw node(l1) [leaf] at (10,0) {};
   
   \draw  (l0) -- (l1);
           
   \draw node(a) [single] at (11.5, 0.3) {};
   \draw (11.5,0.3) node [right]  {\ {$1$ player}}; 
       
   \draw node(b) [double] at (11.5, -0.3) {};
   \draw (11.5,-0.3) node [right]  {\ {$2$ players}}; 

   \draw (0,-5pt)   node [below] {$v$};
   \draw (10,-5pt)   node [below] {$w$};
         
   \draw node(c) [double] at (8/10,0) {} ;
   \draw (4/10,0)   node [above, font=\scriptsize] {$\xi$};   

   \draw node(d) [single] at (24/10,0) {} ;         
   \draw (16/10,0)   node [above, font=\scriptsize] {$2\xi$};

   \draw node(d) [single] at (36/10,0) {} ;
   \draw (30/10,0)   node [above, font=\scriptsize] {$\alpha(e)\xi$}; 
  
   \draw (42/10,0)   node [above, font=\scriptsize] {$\cdots$}; 
  
   \draw node(f) [single] at (48/10,0) {} ;
   \draw (54/10,0)   node [above, font=\scriptsize] {$\alpha(e)\xi$};  

   \draw node(f) [single] at (60/10,0) {} ;
   \draw (68/10,0)   node [above, font=\scriptsize] {$2\xi$};

   \draw node(h) [double] at (76/10,0) {} ;
   \draw (84/10,0)   node [above, font=\scriptsize] {$2\xi$};   
      
   \draw node(n) [double] at (92/10,0) {} ;
   \draw (96/10,0)   node [above, font=\scriptsize] {$\xi$};   

   \end{scope}
   
\end{tikzpicture}
~\vspace{0cm} \caption{\label{fi:ELL} Players in $e\in E_{LL}$.}
\end{figure}
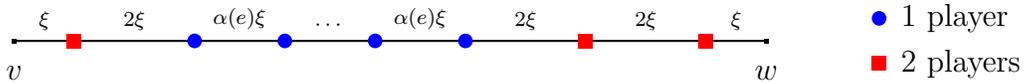
\end{center}

\bigskip

The total number of players on $S$ can be easily computed as follows. Given that each vertex $v \in V \setminus V_{L}$ has $\degree(v)$ players, there is a total of $2\card(E_{II})+\card(E_{IL})$ players on the vertices. Moreover, for each $e \in E_{II}$ there are $\lceil \lambda(e)/2\xi \rceil +1$ players in the interior of $e$; for each $e\in E_{IL}$ there are $\lceil \lambda(e)/2\xi \rceil +2$ players in the interior of $e$; for each $e\in E_{LL}$ there are $\lceil \lambda(e)/2\xi \rceil +3$ players in the interior of $e$. Hence the total number of players is
\begin{align*}
&2 \card(E_{II}) + \card(E_{IL}) + \card(E_{II}) + 2 \card(E_{IL}) + 3 \card(E_{LL}) + \sum_{e \in E}\left\lceil \frac{\lambda(e)}{2\xi} \right\rceil  \\
&\qquad= 3 \card(E) + \sum_{e \in E}\left\lceil \frac{\lambda(e)}{2\xi} \right\rceil = f(\xi) = n'.
\end{align*}

To prove that what we have constructed is a Nash equilibrium, we need to show that for all $e\in E$, we have 
\begin{equation}\label{eq:alpha}
1 \le \alpha(e) \le 2.
\end{equation}
Under $\boldsymbol{x}^{*}$, if inequality \eqref{eq:alpha} is satisfied, the payoff of each player is between $\xi$ and $2\xi$. Moreover, if a player deviated on an interval between two other players, then she would obtain a payoff equal to half the length of that interval. If inequality \eqref{eq:alpha} is satisfied, then no interval between players is longer than $2\xi$.

Furthermore, in this construction, all players who share a location with some other player have a payoff equal to $\xi$. This implies that if a player  deviates to a location that already has more than one player, then her payoff becomes less than $\xi$.
Therefore no player has a profitable deviation.
\end{proof}

\begin{claim}\label{cl:EL}
For all $e \in E_{IL}$, \eqref{eq:alpha} holds.
\end{claim}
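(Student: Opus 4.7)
The plan is to combine the definition of $\alpha(e)$ in \eqref{eq:alphaEL} with a lower bound on $m := \lceil \lambda(e)/(2\xi) \rceil$ that follows from the choice $\xi = \underline{\xi}$ made at the start of the proof of Proposition~\ref{pr:equilibriumnprime}. First I would show that $\xi \le \min_{e' \in E} \lambda(e')/10$. This uses three facts already in hand: $f$ (defined in \eqref{eq:definitionf}) is weakly decreasing, $f(\underline{\xi}) = n' \ge n \ge \bar{n}$, and $\bar{n} = f(\min_{e'} \lambda(e')/10)$ by \eqref{eq:barn}. If $\underline{\xi}$ were strictly above $\min_{e'} \lambda(e')/10$, monotonicity would give $f(\underline{\xi}) \le f(\min_{e'}\lambda(e')/10) = \bar{n}$, contradicting $f(\underline{\xi}) = n' \ge \bar{n}$ unless all these quantities coincide; but in that boundary case $\min_{e'} \lambda(e')/10$ lies in the constancy interval $[\underline{\xi}, \overline{\xi})$, again forcing $\underline{\xi} \le \min_{e'} \lambda(e')/10$. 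Consequently $2\xi/\lambda(e) \le 1/5$, so $m \ge 5$ for every $e \in E$ (and in particular for $e \in E_{IL}$).

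Next I would dispatch the upper bound $\alpha(e) \le 2$. From the ceiling we have $\lambda(e) \le 2\xi m$, and since $m \ge 5 > 3$ the denominator $\xi m - 3\xi$ is positive, so $\alpha(e) \le 2$ is equivalent to $\lambda(e) \le 2\xi m + \xi$; this is immediate from $\lambda(e) \le 2\xi m$.

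For the lower bound $\alpha(e) \ge 1$, the equivalent inequality is $\lambda(e) \ge \xi(m+4)$. I would split into two cases. If $m \ge 6$, then from the ceiling $\lambda(e) > 2\xi(m-1)$, and the arithmetic identity $2(m-1) \ge m+4$ valid precisely when $m \ge 6$ gives $\lambda(e) > \xi(m+4)$. If $m = 5$, then $\lambda(e) \le 2\xi m = 10\xi$; combined with $\lambda(e) \ge 10\xi$ from the first step this forces $\lambda(e) = 10\xi$, whence a direct substitution into \eqref{eq:alphaEL} yields $\alpha(e) = (10\xi - 7\xi)/(5\xi - 3\xi) = 3/2$, comfortably inside $[1,2]$.

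The main obstacle is really the first step: the relationship between $\underline{\xi}$ and $\min_{e' \in E}\lambda(e')/10$ is not purely algebraic, because the weakly decreasing step function $f$ may be constant on a non-degenerate interval containing $\min_{e'}\lambda(e')/10$ when $n = \bar{n}$. Once one handles this possibility via the constancy-interval argument above, the remainder of the claim is an elementary case analysis on the value of the ceiling.
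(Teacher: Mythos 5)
Your proposal is correct and follows essentially the same route as the paper: both arguments reduce everything to the bound $\xi \le \min_{e'\in E}\lambda(e')/10$ (obtained from $f(\xi)=n'\ge \bar n = f(\min_{e'}\lambda(e')/10)$ and the monotonicity of $f$) and then finish with elementary ceiling arithmetic on \eqref{eq:alphaEL}. The only cosmetic differences are that you split into the cases $m=5$ and $m\ge 6$ where the paper applies $\lceil x\rceil \le x+1$ uniformly to get $\lambda(e)/\xi-\lceil\lambda(e)/(2\xi)\rceil-4\ge \lambda(e)/(2\xi)-5\ge 0$, and that you spell out the boundary case of the monotonicity step, which the paper passes over silently.
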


\begin{proof}
Since $\left\lceil x \right\rceil \geq x$, we have that $2 \left\lceil \lambda(e)/(2\xi) \right\rceil -6 - \lambda(e)/\xi + 7 \geq 0$. Given \eqref{eq:alphaEL}, this implies that $\alpha(e) \le 2$. On the other hand $\alpha(e)\ge 1$ if and only if $\lambda(e)/\xi - 7 - \left\lceil \lambda(e)/(2\xi) \right\rceil +3 \geq 0$. To show that this inequality holds consider that
\[
\frac{\lambda(e)}{\xi} - \left\lceil \frac{\lambda(e)}{2\xi} \right\rceil -4 \geq \frac{\lambda(e)}{\xi} - \frac{\lambda(e)}{2\xi} -5 = \frac{\lambda(e)}{2\xi} - 5.
\]
Since, by \eqref{eq:barn},
\[
f(\xi)= n' \ge \bar{n}= f\left(\frac{\min_{e'\in E} \lambda(e')}{10}\right)
\]
and $f$ is weakly decreasing, we have 
\[
\xi \le \frac{\min_{e'\in E} \lambda(e')}{10} \le \frac{\lambda(e)}{10}
\quad\text{and therefore}\quad \frac{\lambda(e)}{2\xi}  -5 \ge 0. \qedhere
\]
\end{proof}

\begin{claim}\label{cl:EI}
For all $e \in E_{II}$, \eqref{eq:alpha} holds.
\end{claim}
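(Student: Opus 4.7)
The plan is to mimic the verification given for Claim~\ref{cl:EL}: the formula \eqref{eq:alphaEI} for $\alpha(e)$ on edges $e \in E_{II}$ differs from \eqref{eq:alphaEL} only in the integer constants appearing in the numerator and denominator, and the same two-sided arithmetic check goes through under the same hypothesis $\xi \le \lambda(e)/10$. The key inputs will be, as before, the trivial bounds $x \le \lceil x \rceil \le x+1$ together with the inequality $\xi \le \min_{e' \in E}\lambda(e')/10$ that follows from \eqref{eq:barn} and the fact that $f$ is weakly decreasing.

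For the upper bound, starting from \eqref{eq:alphaEI}, I would rewrite $\alpha(e) \le 2$ as
\[
2\left\lceil \frac{\lambda(e)}{2\xi}\right\rceil \ge \frac{\lambda(e)}{\xi} - 2,
\]
which follows immediately from $\lceil x \rceil \ge x$. For the lower bound, $\alpha(e) \ge 1$ becomes
\[
\frac{\lambda(e)}{\xi} - \left\lceil \frac{\lambda(e)}{2\xi}\right\rceil - 4 \ge 0,
\]
and using $\lceil x \rceil \le x + 1$ I would bound the left-hand side from below by $\lambda(e)/(2\xi) - 5$. Exactly as at the end of the proof of Claim~\ref{cl:EL}, I would then invoke \eqref{eq:barn} together with the fact that $f$ is weakly decreasing to conclude $\xi \le \min_{e' \in E} \lambda(e')/10 \le \lambda(e)/10$, so that $\lambda(e)/(2\xi) - 5 \ge 0$.

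There is no genuine obstacle: the argument is essentially a bookkeeping check verifying that the specific constants $6$ and $2$ in \eqref{eq:alphaEI} (rather than $7$ and $3$ in \eqref{eq:alphaEL}) leave enough slack for both inequalities in \eqref{eq:alpha} to hold. One small point worth checking along the way is that the denominator of \eqref{eq:alphaEI} is positive, but this is automatic from $\xi \le \lambda(e)/10$, which forces $\lceil \lambda(e)/(2\xi) \rceil \ge 5 > 2$.
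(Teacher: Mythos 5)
Your proposal is correct and follows essentially the same route as the paper: rewrite the two inequalities in \eqref{eq:alpha} using \eqref{eq:alphaEI}, get the upper bound from $\lceil x\rceil \ge x$, and reduce the lower bound to $\lambda(e)/(2\xi)-5\ge 0$, which follows from \eqref{eq:barn} and the monotonicity of $f$ exactly as in Claim~\ref{cl:EL}. Your additional remark about the positivity of the denominator is a harmless (and slightly more careful) bonus that the paper leaves implicit.
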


\begin{proof}
Given \eqref{eq:alphaEI}, it is clear that $\alpha(e) \le 2$.
On the other hand $\alpha(e)\ge 1$ if and only if 
\[
\frac{\lambda(e)}{\xi} - 6 - \left\lceil \frac{\lambda(e)}{2\xi} \right\rceil +2 \geq 0.
\]
The left hand side is larger than $\lambda(e)/(2\xi) - 5$. As mentioned in the proof of Claim~\ref{cl:EL}, expression~\eqref{eq:barn} implies $\lambda(e)/(2\xi)  -5 \ge 0. \qedhere$
\end{proof}

\begin{claim}\label{cl:ELL}
For all $e \in E_{LL}$, \eqref{eq:alpha} holds.
\end{claim}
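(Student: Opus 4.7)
The plan is to follow the same two-step template as in Claims~\ref{cl:EL} and \ref{cl:EI}, adapted to the constants appearing in \eqref{eq:alphaELL}. I will treat the upper bound $\alpha(e)\le 2$ and the lower bound $\alpha(e)\ge 1$ separately, and reduce each to a standard estimate on $\lceil \lambda(e)/(2\xi)\rceil$.

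\medskip

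\noindent\emph{Upper bound.} Starting from \eqref{eq:alphaELL}, the inequality $\alpha(e)\le 2$ is equivalent to
\[
\lambda(e)-8\xi \;\le\; 2\xi\left\lceil \frac{\lambda(e)}{2\xi}\right\rceil -8\xi,
\]
that is, $\lambda(e)/\xi \le 2\lceil \lambda(e)/(2\xi)\rceil$. This is immediate from $\lceil x\rceil\ge x$, so no further work is needed.

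\medskip

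\noindent\emph{Lower bound.} Again from \eqref{eq:alphaELL}, $\alpha(e)\ge 1$ is equivalent to
\[
\frac{\lambda(e)}{\xi} - \left\lceil\frac{\lambda(e)}{2\xi}\right\rceil - 4 \;\ge\; 0.
\]
Using $\lceil x\rceil\le x+1$, the left-hand side is bounded below by
\[
\frac{\lambda(e)}{\xi}-\frac{\lambda(e)}{2\xi}-1-4 \;=\; \frac{\lambda(e)}{2\xi}-5.
\]
As in the proofs of Claims~\ref{cl:EL} and \ref{cl:EI}, the inequality $f(\xi)=n'\ge \bar{n}=f(\min_{e'\in E}\lambda(e')/10)$ together with the fact that $f$ is weakly decreasing gives $\xi\le \min_{e'\in E}\lambda(e')/10\le \lambda(e)/10$, so $\lambda(e)/(2\xi)-5\ge 0$, as required.

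\medskip

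No genuine obstacle arises: the argument is the same as in Claims~\ref{cl:EL} and \ref{cl:EI}, and the only thing one needs to check is that the bookkeeping with the constants $8\xi$ and $4\xi$ in the numerator and denominator of \eqref{eq:alphaELL} produces exactly the same reduction to $\lambda(e)/(2\xi)\ge 5$, which is guaranteed by the choice \eqref{eq:barn} of $\bar{n}$.
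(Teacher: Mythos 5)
Your proof is correct and follows essentially the same route as the paper's: the upper bound reduces to $\lceil x\rceil\ge x$, and the lower bound reduces via $\lceil x\rceil\le x+1$ to $\lambda(e)/(2\xi)\ge 5$, which is exactly what the choice of $\bar{n}$ in \eqref{eq:barn} guarantees. The only difference is that you spell out the algebra for $\alpha(e)\le 2$ that the paper dismisses as clear.
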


\begin{proof}
Given \eqref{eq:alphaELL}, it  is clear that $\alpha(e) \le 2$. On the other hand $\alpha(e)\ge 1$ if and only if 
\[
\frac{\lambda(e)}{\xi} - 8 - \left\lceil \frac{\lambda(e)}{2\xi} \right\rceil +4 \geq 0.
\]
The left hand side is larger than $\lambda(e)/(2\xi) - 5$, and again, \eqref{eq:barn} implies $\lambda(e)/(2\xi) -5 \ge 0. \qedhere$
\end{proof}

\begin{proof}[Proof of Theorem~\ref{th:existencebign}]
Proposition~\ref{pr:equilibriumnprime} shows that for every 
\begin{equation}\label{eq:boundn}
n \ge 3 \card(E) + \sum_{e\in E} \left\lceil \frac{5\lambda(e)}{\min_{e'\in E}\lambda(e')} \right\rceil
\end{equation}
there exists an integer $n \le n' \le n+\card(E)$ such that the game  $\mathcal{L}(n', S)$ admits a Nash equilibrium. Given such an equilibrium, we now construct an equilibrium for a new game with $n'-k$ players where $0 \le k \le \card(E)$. This can be achieved by removing redundant players (see Definition~\ref{def:redundant}), as follows.

We start with the equilibrium of Proposition~\ref{pr:equilibriumnprime}. For edges $e=(v,w)\in E_{IL}$ we can remove one of the two players whose distance from $v$ is $2\xi$. For edges $e=(v,w)\in E_{II}$ we can remove one of the two players whose distance from $w$ is $2\xi$. Finally for edges $e=(v,w)\in E_{LL}$ we can remove one of the two players whose distance from $w$ is $3\xi$. This way we can remove any number $k$ of players with $0\le k \le \card(E)$. We now show  that the removed players are redundant. 

In fact the above removal of players does not change the size of any interval between players, but only the payoff of  $k$ players, who now gain $2\xi$ rather than $\xi$, therefore, it does not produce any opportunity of profitable deviation for any other player, given that every one of them gains at least $\xi$. 

This proves that, for every $n$ that satisfies inequality \eqref{eq:boundn}, the game   $\mathcal{L}(n, S)$  admits an equilibrium. 
\end{proof}

\begin{remark}\label{re:Palvolgy}
Our proof differs from the one in \citet{Pal:mimeo2011} in the following respects:
\begin{enumerate}[(a)]
\item
The function $f$ defined in \eqref{eq:definitionf} takes values in $\mathbb{N}\cap[4\card(E),\infty)$ and is decreasing, but is not onto. As a consequence, there exist values $n$ such that for no  $\xi$ we have $f(\xi)=n$. 
The proof of \citet{Pal:mimeo2011} is based on a similar function, which, like ours, is in general not onto, hence his result holds only in the special case where the function $f$ is indeed onto (for instance when  the lengths of the  edges are all different).

\item
To achieve a general result, we introduce the notion of redundant players. 
This implies that players' arrangement on edges is different from the one found in \citet{Pal:mimeo2011}, in particular,  some extra players are paired in some locations. 
This in turn changes the distance between players on edges. One of these extra paired players is redundant and can therefore be removed without affecting the equilibrium of the game with $n-1$ players. Our proof shows that this argument can be repeated $\card(E)$ times, which is larger than the magnitude of the maximum jump of the function $f$. Hence, equilibria exist for every $n \ge \bar{n}$.

\item
As a consequence of our construction,  our threshold $\bar{n}$ in \eqref{eq:barn} is different from the one in \citet{Pal:mimeo2011}.

\end{enumerate}

\end{remark}

\section{Efficiency of equilibria}\label{se:efficiency}

A location game $\mathcal{L}(n, S)$ is a constant-sum game. Therefore, since any strategy profile produces the same total payoff for the retailers, it is efficient. Hence, to obtain a meaningful result, we measure the efficiency of equilibria in terms of the consumers' traveling cost and not in terms of the players' payoffs, as it is usually the case.

Consider a game $\mathcal{L}(n, S)$.
For $\boldsymbol{x}=(x_{1}, \dots, x_{n}) \in S^{n}$ and $y \in S$ define 
\[
d(\boldsymbol{x}, y) := \min_{i\in\{1,\dots,n\}}d(x_{i},y).
\]
This is the distance between a consumer located in $y$ and the closest retailer when the strategy profile $\boldsymbol{x}$ is played. 
The \emph{social cost} $C(\boldsymbol{x})$ is defined as
\[
C(\boldsymbol{x}):= \int_{S} d(\boldsymbol{x},y) \diff \lambda(y).
\]
This is the total cost incurred by the consumers, when each one of them shops at the closest store. 

\begin{definition}
Consider a game $\mathcal{L}(n, S)$ that admits a Nash equilibrium. We denote $\mathcal{E}_{n}$ the set of pure Nash equilibria of the  game $\mathcal{L}(n, S)$ and  define
\begin{enumerate}

\item
the \emph{price of anarchy}
\[
\IPoA(n) := \frac{\sup_{\boldsymbol{x} \in \mathcal{E}_{n}} C(\boldsymbol{x})}{\inf_{\boldsymbol{x} \in S^n} C(\boldsymbol{x})},
\]

\item
the \emph{price of stability}
\[
\IPoS(n) := \frac{\inf_{\boldsymbol{x} \in \mathcal{E}_{n}} C(\boldsymbol{x})}{\inf_{\boldsymbol{x} \in S^n} C(\boldsymbol{x})}.
\]

\end{enumerate}
\end{definition}

Since there always exists a positive mass of consumers at a strictly positive distance from the closest possible retailer, we have that $\inf_{\boldsymbol{x} \in S^n} C(\boldsymbol{x}) > 0$, therefore both $\IPoA(n)$ and $\IPoS(n)$ are well defined.

The next theorem shows that asymptotically the  price of anarchy cannot exceed $2$. As proved in the following sections, the result holds exactly and not only asymptotically for simple configurations of the network, but not in general. 
The same theorem shows an asymptotic result on the price of stability.

\begin{theorem}\label{th:ipoageneral}
Consider the sequence of games  $\mathcal{L}(n, S)$.
Then  
\begin{enumerate}[{\rm (a)}]
\item\label{it:th:ipoageneral-a}
there exists a function $\Phi:\mathbb{N}\to\mathbb{R}$ such that, whenever $n$ is large enough for $\mathcal{L}(n, S)$ to admit a Nash equilibrium, we have
\[
\PoA (n) \le \Phi(n) \quad\text{and}\quad \lim_{n \to \infty} \Phi(n) = 2.
\]

\item\label{it:th:ipoageneral-b}
\[ 
\lim_{n \to \infty} \PoS (n) = 1.
\]
	
\end{enumerate}
\end{theorem}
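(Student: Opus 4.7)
My plan is to treat parts (a) and (b) separately, using the equilibrium structure developed in Section~\ref{se:existence} together with majorization between the vectors of consecutive-player gaps along each edge. The basic observation is that on any edge $e$ containing players at positions $p_1<\dots<p_k$ (with the two endpoints $p_0,p_{k+1}$ either at vertices occupied by players or at leaves), the edge contribution to $C(\bs{x})$ is, up to trivially understood boundary terms, $\tfrac14\sum_{i}g_i^2$ where $g_i=p_i-p_{i-1}$; since $\sum_i g_i=\lambda(e)$ is fixed, this is a Schur-convex function of the gap vector, so majorization between the equilibrium and optimum gap vectors will control the ratio.

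For part (b) I would use the explicit family $\bs{x}^{*}(n)$ built in the proof of Theorem~\ref{th:existencebign}. On every edge the gaps lie in $[\xi,2\xi]$ for a common parameter $\xi=\xi(n)$ with $\xi\to 0$ as $n\to\infty$. Since all gaps are within a constant factor of the edge average $\lambda(e)/(k(e)+1)$, the ratio of $\sum_i g_i^2$ to its value at the equal-gap configuration tends to $1$; summing over $e\in E$, and noting that the vertex-player contributions are $O(\xi)=o(1/n)$, the cost of $\bs{x}^{*}(n)$ matches $\inf_{\bs{x}\in S^n}C(\bs{x})=\Theta(1/n)$ up to $1+o(1)$, which gives $\PoS(n)\to 1$.

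For part (a), fix $n$ large enough that every equilibrium satisfies the vertex property of Definition~\ref{de:VP} (this follows from arguments in the same spirit as Lemma~\ref{le:playersleaf}, since otherwise some interior lone player could walk toward an empty leaf). For such an equilibrium $\bs{x}^{*}$, Corollary~\ref{co:equaldelta} forces all saturated locations to carry the common payoff $\xi$, and Lemma~\ref{le:playersdegree} together with Lemma~\ref{le:degreecost} shows that no gap along any edge exceeds $2\xi$ nor falls below~$\xi$. Under the box constraint $g_i\in[\xi,2\xi]$ with $\sum_i g_i=\lambda(e)$, the maximum of $\sum_i g_i^2$ is attained at an extreme point with gaps taking only the two boundary values, and a direct computation shows that this maximum is at most twice the equal-gap value that optimally serves the edge. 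A majorization step applied to the global vector of gaps over $S$ then yields $C(\bs{x}^{*})/\inf_{\bs{x}\in S^n}C(\bs{x})\le 2+\varepsilon(n)$ with $\varepsilon(n)\to 0$; defining $\Phi(n)=2+\varepsilon(n)$ completes the argument.

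The main technical obstacle is reconciling the \emph{global} structure of the optimum (which distributes $n$ players to balance density across the whole network) with the \emph{local} per-edge equilibrium constraints. The unified parameter $\xi$ from Corollary~\ref{co:equaldelta} is what makes the comparison possible, but one still has to handle the $O(1)$ number of players sitting at vertices, the rounding induced by the ceilings appearing in \eqref{eq:definitionf} and \eqref{eq:nbardef}, and the positive-measure tie sets illustrated by Figure~\ref{fi:attraction}, all of which produce boundary corrections. Since both the equilibrium cost and the optimum cost are $\Theta(1/n)$, I need these corrections to vanish strictly faster (in $n$) in order not to contaminate the limits $\Phi(n)\to 2$ and $\PoS(n)\to 1$; this is where the most careful bookkeeping will be required.
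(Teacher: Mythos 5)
Your overall strategy (Schur convexity of the sum of squared gap lengths, plus the explicitly constructed equilibrium for the price of stability) is the right one, but two of your key steps do not hold up. First, in part (a) you claim that in any equilibrium every gap lies in $[\xi,2\xi]$. The lower bound is false: Lemma~\ref{le:playersdegree} allows up to $\degree(w)$ players at a single location, and it is precisely the co-located pairs --- which contribute zero-length gaps --- that push the price of anarchy up to $2$ (see the worst equilibria $\widehat{\boldsymbol{x}}$ on the circle and the segment, where half of the half intervals have length $0$). Moreover your $\xi$ is only defined when some location is saturated (Corollary~\ref{co:equaldelta}), which need not happen in an arbitrary equilibrium. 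The paper needs only the one-sided bound of Lemma~\ref{le:equilibriumhalfinterval} --- every half interval has length at most $\Lambda/n$, since otherwise the player guaranteed by Lemma~\ref{le:Lovern} to earn at most $\Lambda/n$ could profitably deviate into it --- and then majorizes the vector of half-interval lengths by $(\Lambda/n,\dots,\Lambda/n,0,\dots,0)$ to get $C(\boldsymbol{x}^{*})\le\Lambda^{2}/(2n)$.

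Second, and more seriously, you never derive a lower bound on $\inf_{\boldsymbol{x}\in S^{n}}C(\boldsymbol{x})$ with the correct constant, and both limits depend on it: $\Theta(1/n)$ is not enough. Comparing the equilibrium on each edge with ``the equal-gap value that optimally serves the edge'' compares against a per-edge optimum using the equilibrium's own player counts, not against the global optimum; since the equilibrium's average gap may differ across edges by a factor of $2$, this alone loses an extra constant (two equal-length edges with average gaps $\xi$ and $2\xi$ already give a $9/8$ discrepancy between the per-edge equal-gap cost and the global optimum), so your chain of inequalities would yield a bound converging to $9/4$ rather than $2$. The paper's device, which you are missing, is to add a fictitious player at every internal vertex so that any profile refines into exactly $M=2n+2\card(E_{II})+\card(E_{IL})$ half intervals, whose length vector majorizes the uniform vector $(\Lambda/M,\dots,\Lambda/M)$; Schur convexity then gives $\inf C\ge\Lambda^{2}/(2M)\sim\Lambda^{2}/(4n)$ (Lemma~\ref{le:boundsocialoptimum}). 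Finally, for part (b) the inference ``all gaps are within a constant factor of the average, hence the ratio of $\sum_{i}g_{i}^{2}$ to the equal-gap value tends to $1$'' is a non sequitur (a $50$--$50$ mix of gaps $\xi$ and $3\xi$ has the same average as all gaps equal to $2\xi$ but a sum of squares larger by $5/4$); what is actually true, and needed, is that $\alpha(e)\to2$ and $\xi\sim\Lambda/(2n)$ (Claims~\ref{cl:boundxi} and~\ref{cl:alpha2}), so that all but $O(1)$ half intervals per edge have length asymptotic to $\Lambda/(2n)$ and $C(\boldsymbol{x}^{*})\sim\Lambda^{2}/(4n)$, matching the lower bound above.
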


The interpretation of Theorem~\ref{th:ipoageneral} is that, when the number of retailers is large, if they are left to their own devices and play a bad Nash equilibrium, the outcome of their actions could decrease efficiency by a factor of two, approximately. On the other hand, if a planner cajoles them into playing a suitable Nash equilibrium, then efficiency is almost achieved. 

Although $\IPoS(n) \leq 2$, when $S$ is the unit interval or the circle, for any $n$ for which the equilibrium exists, this property is not true in the general case: an example of a location game on a star (see Remark \ref{re:star}) shows that the price of anarchy can be larger than $2$, hence the bound of Theorem~\ref{th:ipoageneral}\ref{it:th:ipoageneral-a} holds only asymptotically.

\begin{remark} \label{re:entrycost}
Note that in our model neither entry costs nor production costs are present.
To accommodate entry costs, we should consider a different model with an outside option, that is, the players' action space should be $S \cup \{\OUT\}$, where $\OUT$ means that a player does not enter the market. Since the total mass of consumers is fixed, a large number of players would imply a payoff smaller than the entry cost for some of them, and therefore these players would choose the action $\OUT$.
As far as production cost are concerned, since the price $p$ is exogenous, our model implicitly assumes that the fixed production cost is zero and the marginal production cost is smaller than $p$, so that the payoff of each retailer is increasing in her market share.
\end{remark}

\begin{remark}\label{re:vetta}
\citet{Vet:FOCS2002} studies a class of games, called \emph{valid utility games}, where players choose facility locations and he proves that the price of anarchy for this class of games is bounded above by $2$. Remark~\ref{re:star} below shows that, despite some similarities, locations games, as studied in our paper, are not valid utility games.

\citet{Rou:JACM2015} introduces the class of $(\lambda, \mu)$-smooth games and uses it to prove bounds for the price of anarchy of games in this class.
For instance valid utility games are $(1,1)$-smooth and Vetta's bound on the price of anarchy can be easily proved with smoothness tools. 
Unfortunately smoothness techniques do not seem to be useful for location games. The reason being that these games are payoff-maximization games with a finite number of players, but the objective function that is used to compute the price of anarchy is a cost function, and it measures the cost incurred by the continuum of consumers. In particular it is not possible to find any useful inequality between the retailers' payoff and the consumers' cost.
Moreover, a full theory of the use of smoothness to bound the price of stability has not been developed, yet. 

\end{remark}

\subsection{Proofs}

We introduce some concepts in the theory of majorization that will be used to prove some results about efficiency of equilibria. 
We refer the reader to \citet{MarOlkArn:Springer2011} for an extensive analysis of this topic. 

\begin{definition}
Given a vector $\boldsymbol{z}=(z_{1}, \dots, z_{n})$, call $z_{[1]} \ge \dots \ge z_{[n]}$ its decreasing rearrangement.
Let $\boldsymbol{x}, \boldsymbol{y} \in \mathbb{R}^{n}_{+}$ be such that 
\[
\sum_{i=1}^{n} x_{i} = \sum_{i=1}^{n} y_{i}
\]
and, for all $k \in \{1, \dots, n\}$
\[
\sum_{i=1}^{k} x_{[i]} \le \sum_{i=1}^{k} y_{[i]}.
\]
Then we say that $\boldsymbol{x}$ is \emph{majorized} by $\boldsymbol{y}$ ($\boldsymbol{x} \prec \boldsymbol{y}$).
\end{definition}

\begin{definition}
A function $\phi:\mathbb{R}^n_{+} \rightarrow \mathbb{R}$ is said to be \emph{Schur-convex} if $\boldsymbol{x} \prec \boldsymbol{y}$ implies $\phi(\boldsymbol{x}) \le \phi(\boldsymbol{y})$.
\end{definition}

\begin{lemma}\label{le:sum_convex}
If $\psi: \mathbb{R}_{+} \to \mathbb{R}$ is a convex function and 
\[
\phi(x_{1}, \dots, x_{n}) = \sum_{i=1}^{n} \psi(x_{i}),
\]
then $\phi$ is Schur-convex.
\end{lemma}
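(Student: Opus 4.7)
The plan is to invoke the classical Hardy--Littlewood--P\'olya characterization of majorization together with Birkhoff's theorem, and then exploit convexity of $\psi$ in a single Jensen step. This is more efficient than a direct Schur--Ostrowski argument (which would need smoothness of $\psi$) and cleaner than repeatedly applying T-transforms.

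First, I would recall (or cite from \citet{MarOlkArn:Springer2011}) the equivalence: $\boldsymbol{x} \prec \boldsymbol{y}$ if and only if there exists a doubly stochastic $n \times n$ matrix $D$ such that $\boldsymbol{x} = D\boldsymbol{y}$, i.e., $x_i = \sum_{j=1}^n D_{ij} y_j$ for all $i$. By Birkhoff's theorem, any doubly stochastic matrix can be written as a convex combination of permutation matrices, so there exist nonnegative weights $(\alpha_\sigma)_{\sigma \in \mathfrak{S}_n}$ with $\sum_\sigma \alpha_\sigma = 1$ and
\[
x_i = \sum_{\sigma \in \mathfrak{S}_n} \alpha_\sigma \, y_{\sigma(i)}, \qquad i = 1, \dots, n.
\]

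Next, since $\psi$ is convex, Jensen's inequality applied coordinatewise yields
\[
\psi(x_i) \le \sum_{\sigma \in \mathfrak{S}_n} \alpha_\sigma \, \psi(y_{\sigma(i)}).
\]
Summing over $i$ and swapping the order of summation gives
\[
\sum_{i=1}^n \psi(x_i) \le \sum_{\sigma \in \mathfrak{S}_n} \alpha_\sigma \sum_{i=1}^n \psi(y_{\sigma(i)}) = \sum_{\sigma \in \mathfrak{S}_n} \alpha_\sigma \sum_{i=1}^n \psi(y_i) = \sum_{i=1}^n \psi(y_i),
\]
where the middle equality uses that $\sigma$ is a permutation. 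This is exactly $\phi(\boldsymbol{x}) \le \phi(\boldsymbol{y})$, so $\phi$ is Schur-convex.

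There is no real obstacle here: the proof is a two-line consequence of two well-known structural results (Hardy--Littlewood--P\'olya and Birkhoff). The only stylistic decision is whether to cite these results from \citet{MarOlkArn:Springer2011} or to carry out the equivalent T-transform argument instead, in which one moves from $\boldsymbol{y}$ to $\boldsymbol{x}$ by a finite sequence of Robin-Hood transfers $(y_j, y_k) \mapsto (y_j - \varepsilon, y_k + \varepsilon)$ with $y_j \ge y_k$, each of which weakly decreases the sum $\sum_i \psi(y_i)$ by convexity of $\psi$ applied to the pair $\{j,k\}$. Either route is short; I would favor the Birkhoff argument for conciseness.
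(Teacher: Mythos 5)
Your proof is correct: the Hardy--Littlewood--P\'olya characterization of majorization via doubly stochastic matrices, Birkhoff's decomposition into permutation matrices, and a coordinatewise Jensen step together give exactly $\phi(\boldsymbol{x}) \le \phi(\boldsymbol{y})$, and the convex combination $x_i = \sum_\sigma \alpha_\sigma y_{\sigma(i)}$ stays in $\mathbb{R}_+$, so $\psi$ is only ever evaluated on its domain. Note, however, that the paper offers no proof of this lemma at all: it is stated as a classical fact from majorization theory (it is essentially the Schur/Hardy--Littlewood--P\'olya criterion, Proposition C.1 in \citet{MarOlkArn:Springer2011}, which the authors cite for background), so there is no in-paper argument to compare yours against; your write-up is the standard textbook proof, and the T-transform alternative you sketch is equally valid.
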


\begin{definition}\label{de:halfinterval}
Let $\boldsymbol{x}$ satisfy the vertex property as in Definition~\ref{de:VP}. Then, for $a,b\in S$, we call $[a,b]$  an \emph{$\boldsymbol{x}$-half interval} if either
\begin{enumerate}[(i)]
\item\label{it:de:halfinterval-1}
there exist $e\in E$ and  $i \in N$ such that $b\in e$ is a leaf, $x_{i}=a\in e$, and for no $j\in N$ we have $x_{j} \in (a,b]$, or

\item\label{it:de:halfinterval-2}
there exist $e \in E$ and $i,\ell \in N$ such that $x_{i}=a \in e$, $x_{\ell}\in e$, for no $j \in N$  we have $x_{j}$ between $a$ and $x_{\ell}$, and $d(a,b)=d(b,x_{\ell})=d(a,x_{\ell})/2$, i.e., $b$ is the middle point between $a$ and $x_{\ell}$.
\end{enumerate}
In both cases the roles of $a$ and $b$ can be interchanged.
\end{definition}

Basically, when a profile satisfies the vertex property, a half interval indicates the share of consumers that retailers in a location $x_{i}$ attract along one direction emanating from $x_{i}$. This could be either the whole interval from $x_{i}$ to a leaf (condition~\ref{it:de:halfinterval-1}) or the interval from $x_{i}$ to the midpoint between $x_{i}$ and $x_{\ell}$ (condition~\ref{it:de:halfinterval-2}).

In profile $\boldsymbol{x}$, if $m$ players share the same location, then we use the convention that there are $2(m-1)$ zero-length $\boldsymbol{x}$-half intervals between them. If profile $\boldsymbol{x}$ satisfies the vertex property then the whole graph can be covered with $\boldsymbol{x}$-half intervals.
We call $H(\boldsymbol{x})$ the class of all $\boldsymbol{x}$-half intervals in $S$.
We denote $\Lambda:=\lambda(S)$.

\begin{lemma}\label{le:Lovern}
Given a strategy profile $\boldsymbol{x}$, there exists $i \in  N$ such that $\rho_{i}(\boldsymbol{x}) \leq \Lambda/n$.  
\end{lemma}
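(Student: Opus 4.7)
The plan is a straightforward pigeonhole/averaging argument on the payoff function defined in \eqref{eq:payoff}. The main observation is that a location game is constant-sum for the retailers: the payoffs $\rho_i(\boldsymbol{x})$ partition the total mass $\Lambda=\lambda(S)$ of consumers, regardless of the profile. Once this is established, the existence of a player with payoff at most $\Lambda/n$ is immediate from the fact that the minimum of $n$ nonnegative numbers is no larger than their average.

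More precisely, I would first verify the identity $\sum_{i=1}^n \rho_i(\boldsymbol{x}) = \Lambda$. Starting from \eqref{eq:payoff}, exchange the order of summation to obtain
\[
\sum_{i=1}^n \rho_i(\boldsymbol{x}) = \sum_{K\subset\Psi(\boldsymbol{x})} \frac{\lambda(Y_K)}{\card(K)} \sum_{i=1}^n \frac{\mathds{1}_{x_i\in K}}{\card(\{j\in N: x_j=x_i\})}.
\]
For each $K\subset\Psi(\boldsymbol{x})$, the inner sum equals $\card(K)$: grouping the players occupying the same location $y\in K$ contributes exactly $1$ per location in $K$, since the $\card(\{j:x_j=y\})$ players at $y$ each contribute $1/\card(\{j:x_j=y\})$. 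Thus $\sum_i \rho_i(\boldsymbol{x}) = \sum_K \lambda(Y_K)$. Since the sets $Y_K$, as $K$ ranges over the nonempty subsets of $\Psi(\boldsymbol{x})$ for which $Y_K\neq\emptyset$, form a measurable partition of $S$ (up to a $\lambda$-null set corresponding to isolated points that are equidistant from differing families of retailers), this sum equals $\Lambda$.

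From $\sum_{i=1}^n \rho_i(\boldsymbol{x}) = \Lambda$ and $\rho_i(\boldsymbol{x})\geq 0$ for every $i$, it follows that $\min_{i\in N}\rho_i(\boldsymbol{x}) \leq \Lambda/n$, which yields the desired player. The only step requiring any care is the partition claim for the $Y_K$'s; the cleanest way to handle it is to note that the relevant boundary set, where the minimum distance is achieved at locations in two distinct maximal families, is a finite union of lower-dimensional pieces within a finite graph and hence has $\lambda$-measure zero, so the identity $\sum_K \lambda(Y_K)=\Lambda$ holds as claimed.
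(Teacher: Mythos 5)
Your proof is correct and follows essentially the same route as the paper, which simply notes that $\sum_{i=1}^{n}\rho_{i}(\boldsymbol{x})=\Lambda$ (the game being constant-sum) and concludes by contradiction that some player's payoff is at most the average $\Lambda/n$. The only difference is that you explicitly verify the constant-sum identity from \eqref{eq:payoff} by exchanging sums and using that the sets $Y_{K}$ partition $S$ (in fact they partition it exactly, since each $y$ determines a unique set $K(y)$ of nearest occupied locations), a step the paper takes for granted.
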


\begin{proof}
If $\rho_{i}(\boldsymbol{x}) > \Lambda/n$ for all $i \in N$, then $\sum_{i=1}^{n} \rho_{i}(\boldsymbol{x}) > \Lambda$, which is a contradiction, since $\sum_{i=1}^{n} \rho_{i}(\boldsymbol{x}) = \Lambda$. 
\end{proof}

\begin{lemma}\label{le:2Lovern}
If $\boldsymbol{x}^{*}$ is a Nash equilibrium of  $\mathcal{L}(n, S)$, then for all $y \in S$ we have $d(\boldsymbol{x}^{*},y) \leq 2\Lambda/n$. 
\end{lemma}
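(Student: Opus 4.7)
The plan is to argue by contradiction, leveraging Lemma~\ref{le:Lovern} to identify a player whose equilibrium payoff is small, and then showing that this player has a profitable deviation to the hypothetical far-away point $y$. So I would begin by supposing that some $y \in S$ satisfies $D := d(\boldsymbol{x}^{*}, y) > 2\Lambda/n$, and pick a player $i$ with $\rho_{i}(\boldsymbol{x}^{*}) \le \Lambda/n$ via Lemma~\ref{le:Lovern}.

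The core of the argument is then a triangle-inequality estimate for the deviation where player $i$ relocates to $y$. For any consumer $z$ with $d(y,z) < D/2$ and any $j \ne i$, one has
\[
d(x_{j}, z) \ge d(x_{j}, y) - d(y, z) \ge D - d(y, z) > D/2 > d(y, z),
\]
so every such $z$ strictly prefers the new location. Hence the deviation payoff of player $i$ is at least $\lambda\bigl(\{z \in S : d(y,z) < D/2\}\bigr)$.

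To lower bound this mass by $D/2$, I would use the coherence between $d$ and $\lambda$ described in Section~\ref{suse:network}: pick a geodesic $\gamma \colon [0, D] \to S$ realizing $d(y, x_{i^{*}})=D$ for a closest retailer $x_{i^{*}}$. Since $\gamma$ is an isometry on $[0, D/2]$, its image is a subset of $\{z : d(y,z) < D/2\} \cup \{\gamma(D/2)\}$ with $\lambda$-measure exactly $D/2$, and the endpoint is a $\lambda$-null singleton. Thus the deviation gains player $i$ a payoff of at least $D/2 > \Lambda/n \ge \rho_{i}(\boldsymbol{x}^{*})$, contradicting the equilibrium property and completing the proof.

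The main obstacle I anticipate is the measure-theoretic bookkeeping for $\lambda(B(y, D/2))$: at a vertex or a leaf, the geometry of $B(y, D/2)$ can be branched or truncated, so I want to avoid computing it exactly and instead use only that a single geodesic emanating from $y$ already contributes $D/2$ of length inside the ball. Everything else — locating the low-payoff player, the triangle inequality, and the strict-preference comparison — is routine once the network-metric conventions of Section~\ref{suse:network} are in hand.
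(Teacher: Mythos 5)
Your proposal is correct and follows essentially the same route as the paper's proof: contradiction, invoke Lemma~\ref{le:Lovern} to find a player with payoff at most $\Lambda/n$, and show she profitably deviates to the far point by capturing at least the mass $D/2 > \Lambda/n$ of consumers lying on the half of a geodesic toward the nearest retailer. Your version merely spells out the triangle-inequality and measure-theoretic details that the paper compresses into ``attract at least half the consumers between $y_{0}$ and the closest player.''
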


\begin{proof}
Suppose, by contradiction that there exists $y_{0}\in S$ is such that $d(\boldsymbol{x}^{*},y_{0}) > 2\Lambda/n$. By Lemma~\ref{le:Lovern} there exists a player whose payoff is less than or equal to $\Lambda/n$. This player could deviate to $y_{0}$ and then attract at least half the consumers between $y_{0}$ and the closest player, namely she could get a payoff larger than $\Lambda/n$, making the deviation profitable.
\end{proof}

\begin{lemma}\label{le:largen}
Let $n > \bar{n}$, with $\bar{n}$ defined as in \eqref{eq:nbardef}. Assume that 
$\boldsymbol{x}^{*}$ is a Nash equilibrium of $\mathcal{L}(n, S)$. Then $\boldsymbol{x}^{*}$ satisfies the vertex property.
\end{lemma}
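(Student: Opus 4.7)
The plan is a proof by contradiction. Suppose $\boldsymbol{x}^{*}$ is a Nash equilibrium of $\mathcal{L}(n,S)$ with $n>\bar{n}$ but some $v\in V_{I}$ carries no player. By Lemma~\ref{le:2Lovern} applied at $y=v$, there is a player $a$ with $\alpha:=d(v,x_{a}^{*})\in(0,2\Lambda/n]$, the positivity using the assumption that $v$ is empty. The definition of $\bar{n}$ in \eqref{eq:nbardef} yields $\bar{n}\ge 5\Lambda/\min_{e}\lambda(e)$, hence $2\Lambda/n<\min_{e\in E}\lambda(e)$, so $x_{a}^{*}$ lies in the interior of a single edge $e_{1}$ incident to $v$, and every midpoint I construct near $v$ will stay inside the edges incident to $v$, safely away from any cyclic shortcuts in $S$.

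I would then exhibit a profitable deviation by $a$ herself: moving from $x_{a}^{*}$ to $v$. Let $k:=\degree(v)\ge 3$ and enumerate the edges at $v$ as $e_{1},\dots,e_{k}$ (with $x_{a}^{*}$ on $e_{1}$). For each $i$, let $d_{i}$ denote the distance from $v$ along $e_{i}$ to the nearest player other than $a$; since $a$ is overall closest to $v$, we have $d_{i}\ge\alpha$ for all $i$. A direct Voronoi-cell computation yields that, before the move, $a$'s capture in the neighborhood of $v$ is $(d_{1}+\alpha)/2$ on $e_{1}$ (the segment $[v,x_{a}^{*}]$ together with half the gap beyond $x_{a}^{*}$) and $(d_{i}-\alpha)/2$ on each $e_{i}$ for $i\ge 2$ (the part closer to $x_{a}^{*}$ through $v$ than to the competing player on $e_{i}$), whereas after relocating to $v$ these captures become $d_{1}/2$ and $d_{i}/2$ respectively. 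Subtracting gives
\[
\frac{d_{1}}{2}+\sum_{i=2}^{k}\frac{d_{i}}{2}-\frac{d_{1}+\alpha}{2}-\sum_{i=2}^{k}\frac{d_{i}-\alpha}{2}=\frac{(k-2)\alpha}{2}\ge\frac{\alpha}{2}>0,
\]
contradicting the equilibrium hypothesis.

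The main obstacle I anticipate is the case in which $a$ shares the location $x_{a}^{*}$ with a second player $a'$: her pre-deviation share is then halved, and the inequality above can fail when $\alpha$ is very small compared to $\Lambda/n$. To cover this, I would first reapply Lemma~\ref{le:2Lovern} at the midpoints of the gaps between consecutive players around $v$ to obtain the auxiliary bounds $d_{1}\le\alpha+4\Lambda/n$ and $d_{i}\le 4\Lambda/n$ for $i\ge 2$, and then consider either $a$'s own deviation to $v$ (which succeeds whenever $\alpha$ is not too small relative to $\Lambda/n$) or a deviation to $v$ by a player of payoff at most $\Lambda/n$ supplied by Lemma~\ref{le:Lovern} (whose post-deviation payoff is at least $\alpha/2+\sum_{i=2}^{k}d_{i}/2\ge k\alpha/2$, which handles the complementary regime). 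The remaining boundary cases—no player past $a$ on $e_{1}$, a leaf edge among the $e_{i}$, or an $e_{i}$ with no player in its interior—only modify one side of the equation in the direction that reinforces the positivity of the gain, so they do not disturb the contradiction.
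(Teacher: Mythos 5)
Your core argument is the same as the paper's: assume some $v\in V_{I}$ is unoccupied, locate the nearest player $a$ at distance $\alpha>0$ from $v$ (necessarily in the interior of an incident edge, since $n>\bar{n}$ forces $\alpha\le 2\Lambda/n<\min_{e}\lambda(e)$), and observe that relocating her to $v$ trades a loss of $\alpha/2$ on her own edge for a gain of $\alpha/2$ on each of the other $\degree(v)-1\ge 2$ incident edges, a net gain of $(\degree(v)-2)\alpha/2>0$. The paper phrases this as an $\varepsilon$-move rather than a jump to $v$, but the accounting is identical, and your solo-player computation is correct. You deserve credit for flagging the co-location case, which the paper's own proof passes over in silence: when $a$ shares her location with a second player her pre-deviation payoff is only half of the joint capture, and the simple accounting no longer closes.

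However, your patch for the co-located case has a genuine gap. Your two deviations both fail when $\alpha$ is very small. Writing $d_{i}\ge\alpha$ for the distances you define, the low-payoff player's deviation to $v$ yields at least $k\alpha/2$, which exceeds the threshold $\Lambda/n$ only when $\alpha>2\Lambda/(kn)$; and $a$'s own deviation to $v$ yields $k\alpha/2+S$ with $S:=\sum_{i\ge2}(d_{i}-\alpha)/2\ge 0$ against a pre-deviation payoff of $\tfrac12\bigl(\tfrac{d_{1}+\alpha}{2}+S\bigr)$, which in the worst case ($d_{i}=\alpha$, $d_{1}=\alpha+4\Lambda/n$) is profitable only for $\alpha>2\Lambda/((k-1)n)$. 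Nothing you invoke excludes the configuration in which all players near $v$ cluster at distance $\alpha\le 2\Lambda/(kn)$ from it, so the regime $\alpha\in(0,2\Lambda/(kn)]$ is not covered: the ``complementary regime'' is precisely where $k\alpha/2<\Lambda/n$. The missing ingredient is a deviation \emph{away} from $v$: one member of the co-located pair can step into the open gap of length $d_{1}-\alpha$ beyond her on $e_{1}$ and earn $(d_{1}-\alpha)/2$ alone, so equilibrium forces $d_{1}\le 3\alpha+2S$; combined with the no-profit condition for the move to $v$, which forces $d_{1}\ge(2k-1)\alpha+2S$, one gets $(2k-4)\alpha\le 0$, contradicting $k\ge 3$. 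With that squeeze added, your argument closes for every $\alpha$. A smaller point: your claim that the boundary cases ``only reinforce positivity'' is not literally true — if some $e_{i}$ incident to $v$ is a leaf edge with no player in its interior, $a$ already captures all of it and gains nothing there by moving to $v$ — though such configurations are easily excluded separately (that edge alone would give the pair a payoff far above $\Lambda/n$, inviting entry by the low-payoff player).
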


\begin{proof}
We have 
\begin{equation*}
\bar{n}= 3 \card(E) + \sum_{e \in E} \left\lceil \frac{5\lambda(e)}{\min_{e'\in E}\lambda(e')} \right\rceil
\ge
 3 \card(E) +  \frac{5\Lambda}{\min_{e'\in E}\lambda(e')}> \frac{4\Lambda}{\min_{e'\in E}\lambda(e')}
\end{equation*}
If $n > \bar{n}$, then for any edge $e\in E$ we have $\lambda(e) \ge 4\Lambda/n$ and, therefore, by Lemma~\ref{le:Lovern}, there are at least two players on $e$.

Take $v_{0}\in V_{I}$ and assume, \emph{ad absurdum}, that no player is in $v_{0}$. Let $i$ be the player whose location $x^{*}_{i}$ is the closest to $v_{0}$ ($i$ is not necessarily unique). If player $i$ moves towards $v_{0}$ by $\varepsilon < d(x^{*}_{i},v_{0})$, then she loses $\varepsilon/2$ on the edge where she resides, but she gains $(\degree(v_{0})-1)\varepsilon/2$ on the other incident edges on $v_{0}$. Therefore moving towards $v_{0}$ is a profitable deviation, which contradicts the assumption that $\boldsymbol{x}^{*}$ is an equilibrium.
\end{proof}

\begin{lemma}\label{le:numberhalfintervals}
Let $\boldsymbol{x}$ satisfy the vertex property. Then the number of $\boldsymbol{x}$-half intervals in $S$ is $2n+2\card(E)-\card(V_{I})-\card(V)$.
\end{lemma}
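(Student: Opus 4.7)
The plan is to partition the set of $\boldsymbol{x}$-half intervals according to the (distinct) player location sitting at one of their endpoints, count contributions location by location, and collapse the resulting sum via the handshake identity $\sum_{v\in V}\degree(v)=2\card(E)$.

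First I would argue that every positive-length half interval has exactly one endpoint at a distinct player location $p$, the other endpoint being either the midpoint between $p$ and the next consecutive distinct player location on an incident edge (case (ii) of Definition~\ref{de:halfinterval}) or a leaf on which no player sits (case (i)). The vertex property is what makes this clean: since every $v\in V_I$ carries a player, walking outward from any distinct player location $p$ along any incident edge one eventually meets either another player location or a leaf. Hence exactly $\degree(p)$ positive-length half intervals emanate from each distinct player location $p$, one per incident edge, so their total count is $\sum_p \degree(p)$, where the sum ranges over distinct player locations.

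Second I would add the zero-length half intervals. By the stated convention, a location hosting $m_p$ players contributes $2(m_p-1)$ zero-length half intervals, for a total of $2(n-D)$, where $D$ is the number of distinct player locations. I would also observe that case (i) with $a=b$ yields a degenerate half interval $[v,v]$ at every leaf $v$ that hosts a player, contributing a further $|V_L^*|$ intervals, where $V_L^*\subseteq V_L$ is the set of such leaves. Splitting the distinct player locations into those lying strictly inside edges (count $D_I$, each of degree~$2$), those at interior vertices (all of $V_I$ by the vertex property, each of degree $\degree(v)$), and those at leaves (count $|V_L^*|$, each of degree~$1$), the total comes out to
\[
\sum_p \degree(p) + 2(n-D) + |V_L^*| = 2D_I + \sum_{v\in V_I}\degree(v) + |V_L^*| + 2n - 2\bigl(D_I+\card(V_I)+|V_L^*|\bigr) + |V_L^*|,
\]
where every term involving $D_I$ and $|V_L^*|$ cancels, leaving $2n+\sum_{v\in V_I}(\degree(v)-2)$.

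The last step is routine: because $\degree(v)=1$ for $v\in V_L$, the handshake identity gives $\sum_{v\in V_I}\degree(v)=2\card(E)-\card(V_L)$, and therefore
\[
2n + \sum_{v\in V_I}(\degree(v)-2) = 2n + 2\card(E) - \card(V_L) - 2\card(V_I) = 2n + 2\card(E) - \card(V) - \card(V_I),
\]
which is the claimed formula. The main difficulty I foresee is not in the algebra but in the careful bookkeeping of the boundary cases (players at leaves, and edges carrying no players at all): the degenerate case-(i) intervals and the zero-length convention must combine to produce precisely the claimed count for every profile $\boldsymbol{x}$ satisfying the vertex property, and checking this explicitly against the edge types $E_{II}$, $E_{IL}$, $E_{LL}$ is what makes the argument airtight.
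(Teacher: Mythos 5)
Your proof is correct, but it reaches the count by a genuinely different decomposition than the paper's. The paper argues incrementally: placing one player on each vertex of $V_{I}$ cuts $S$ into $\card(E)$ intervals; each of the remaining $n-\card(V_{I})$ players then splits one existing interval into two (with co-located players producing zero-length pieces), giving $\card(E)+n-\card(V_{I})$ intervals in total; finally each interval contributes two half intervals except the $\card(V_{L})$ player-to-leaf intervals, which contribute one each. You instead anchor every positive-length half interval at the unique distinct player location among its endpoints, count $\degree(p)$ of them per location $p$, add the conventional zero-length pieces, and collapse the degree sum with the handshake identity. Both arguments use the vertex property in the same way (walking outward from an occupied location along any incident edge, one meets another occupied location or a leaf before leaving the edge), and both must adopt a degenerate convention when a player sits exactly at a leaf: your explicit $[v,v]$ intervals play the role of the paper's degenerate ``interval between a player and a leaf,'' and this case does not arise for the equilibrium or augmented-optimum profiles to which the lemma is actually applied. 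What your route buys is a direct, one-shot count with no induction on the number of placed players; what the paper's route buys is avoiding the handshake bookkeeping and the case split over whether a distinct location is an interior point, an interior vertex, or a leaf.
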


\begin{proof}
Placing one player on each $v\in V_{I}$ creates $\card(E)$ intervals between two vertices. Every time a new player is placed on some edge, a new interval is created (by splitting an existing interval into two). This is true also if the new player is placed in the same location of an existing player, since this creates two zero-length half intervals. Therefore, once all $n$ players are placed on $S$, there are exactly $\card(E)+n-\card(V_{I})$ intervals. Each on them contains two $\boldsymbol{x}$-half intervals, except the ones between a player and a leaf, which contain one half interval. Therefore the number of  $\boldsymbol{x}$-half intervals is  $2\card(E)+2n-\card(V_{I})-\card(V)$.
\end{proof}

\begin{lemma}\label{le:equilibriumhalfinterval}
Assume that $\boldsymbol{x}^{*}$ is an equilibrium  of $\mathcal{L}(n, S)$ and $[a,b]$ is an $\boldsymbol{x}^{*}$-half interval. Then $\lambda([a,b]) \le \Lambda/n$.
\end{lemma}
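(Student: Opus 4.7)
My plan is to argue by contradiction: suppose $L := \lambda([a,b]) > \Lambda/n$ and exhibit a profitable deviation for a suitably chosen player. By Lemma~\ref{le:Lovern} there exists a player $j$ with $\rho_j(\boldsymbol{x}^*) \le \Lambda/n < L$, and the deviation I will use moves $j$ to a point $c$ in the interior of $[a,b]$ at a small distance $y > 0$ from $a$, on the edge containing $[a,b]$.

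The first step is to verify that $a$ is still occupied by some player after $j$ leaves, since my payoff computation below treats $a$ as the nearest rival on the $a$-side of $c$. By Definition~\ref{de:halfinterval} some player $i$ sits at $a$. If $j \ne i$, then $i$ remains at $a$ after the move. If $j = i$, then the total consumer mass attracted by location $a$ is at least $L$ (the half interval $[a,b]$ alone contributes $L$), so a sole occupant of $a$ would earn at least $L > \Lambda/n$, contradicting $\rho_j(\boldsymbol{x}^*) \le \Lambda/n$; hence at least two players sit at $a$, and one of them stays there when $j$ moves.

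With $a$ still occupied, I compute $j$'s new payoff $\tilde{\rho}_j$ at $c$. On the $a$-side, the midpoint of $a$ and $c$ lies at distance $y/2$ from $c$, giving $j$ a segment of length $y/2$. On the opposite side, I split into the two cases of Definition~\ref{de:halfinterval}: in case~(\ref{it:de:halfinterval-1}) the leaf $b$ is reached without encountering any rival, so $j$ also captures $[c,b]$ of length $L-y$, yielding $\tilde{\rho}_j = L - y/2$; in case~(\ref{it:de:halfinterval-2}) the next rival is the player at $x_\ell$ with $d(a, x_\ell) = 2L$, so the midpoint of $c$ and $x_\ell$ lies at distance $L - y/2$ past $c$, yielding $\tilde{\rho}_j = y/2 + (L - y/2) = L$. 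Choosing any $y \in (0, \, 2(L - \Lambda/n))$ gives $\tilde{\rho}_j > \Lambda/n \ge \rho_j(\boldsymbol{x}^*)$ in both cases, so the deviation is strictly profitable and contradicts the Nash equilibrium hypothesis. This proves $\lambda([a,b]) \le \Lambda/n$.

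The only delicate step is the occupancy check in the second paragraph; without it, removing $j$ from $a$ could leave $c$ with no rival on the $a$-side of $c$ on the given edge, and the midpoint-based catchment argument would break down. Once occupancy is secured, the two catchment computations are elementary midpoint arithmetic, and the key qualitative feature to notice is that in case~(\ref{it:de:halfinterval-2}) the deviator steals mass simultaneously from both $a$ and $x_\ell$, which is exactly what turns the naive bound $L/2$ into the sharper value $L$.
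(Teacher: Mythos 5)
Your proposal is correct and follows essentially the same route as the paper's proof: argue by contradiction, invoke Lemma~\ref{le:Lovern} to find a player with payoff at most $\Lambda/n$, and move that player into the half interval, treating the two cases of Definition~\ref{de:halfinterval} separately (the paper sends the deviator to a point near $a$ in case~(i) and to $b$ itself in case~(ii), obtaining payoffs $\lambda([a,b])-\varepsilon$ and $\lambda([a,b])$ respectively, matching your bounds). Your explicit check that $a$ remains occupied after the deviator leaves is a detail the paper passes over silently, but it does not change the substance of the argument.
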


\begin{proof}
Assume, \emph{ad absurdum}, that $\lambda([a,b]) > \Lambda/n$. By Lemma~\ref{le:Lovern}, there exists $i \in  N$ such that $\rho_{i}(\boldsymbol{x}^{*}) \leq \Lambda/n$. Two cases are possible.

\noindent Case~\ref{it:de:halfinterval-1} of Definition~\ref{de:halfinterval}. If player $i$ deviates to $[a,b]$ at a distance $\varepsilon$ from $a$, then, for $\varepsilon$ small enough, her payoff becomes $\lambda([a,b])-\varepsilon > \Lambda/n$.  
  
\noindent Case~\ref{it:de:halfinterval-2} of Definition~\ref{de:halfinterval}. 
If player $i$ deviates to $b$, then her payoff becomes $\lambda([a,b]) > \Lambda/n$.

The existence of profitable deviations contradicts the assumption that $\boldsymbol{x}^{*}$ is a Nash equilibrium.
\end{proof}

\begin{lemma}\label{le:boundeqsocialcost}
Assume that the conditions of Lemma~\ref{le:largen} are satisfied.
Then $C(\boldsymbol{x}^{*}) \le \Lambda^{2}/2n$.
\end{lemma}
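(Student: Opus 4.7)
The plan is to decompose the network into $\boldsymbol{x}^*$-half intervals and express the social cost as an explicit sum, then use Schur-convexity to bound that sum.

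First I would invoke Lemma~\ref{le:largen} to conclude that $\boldsymbol{x}^*$ satisfies the vertex property, so that Lemma~\ref{le:numberhalfintervals} applies and $S$ is partitioned (up to a $\lambda$-null set) by a family $H(\boldsymbol{x}^*)$ of $N = 2n + 2\card(E) - \card(V_I) - \card(V)$ half intervals. On each such half interval $[a,b]$ the retailer location $a = x_i^*$ is the closest retailer to every consumer $y \in [a,b]$, so $d(\boldsymbol{x}^*, y) = d(a,y)$ varies linearly from $0$ to $\lambda([a,b])$. Integrating gives a contribution of exactly $\lambda([a,b])^2/2$ to the social cost. Hence, writing $L_1, \dots, L_N$ for the lengths of the half intervals,
\[
C(\boldsymbol{x}^*) = \frac{1}{2} \sum_{k=1}^N L_k^2, \qquad \sum_{k=1}^N L_k = \Lambda.
\]

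Next I would apply Lemma~\ref{le:equilibriumhalfinterval} to get $L_k \le \Lambda/n$ for every $k$. The key step is then a majorization argument: because $N \ge 2n \ge n$, the vector $\boldsymbol{L}^\sharp := (\Lambda/n, \dots, \Lambda/n, 0, \dots, 0)$ (with $n$ entries equal to $\Lambda/n$) is feasible, and any feasible $\boldsymbol{L}$ satisfies $\boldsymbol{L} \prec \boldsymbol{L}^\sharp$. Indeed, since each $L_{[k]} \le \Lambda/n$, the partial sums satisfy $\sum_{k=1}^j L_{[k]} \le j\Lambda/n$ for $j \le n$, which matches the partial sums of $\boldsymbol{L}^\sharp$; for $j \ge n$ both partial sums are bounded by $\Lambda$, with equality at $j=N$. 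By Lemma~\ref{le:sum_convex} applied to $\psi(t)=t^2$, the function $\phi(\boldsymbol{L}) = \sum_k L_k^2$ is Schur-convex, so
\[
\sum_{k=1}^N L_k^2 \le \sum_{k=1}^N (L_k^\sharp)^2 = n \cdot \left(\frac{\Lambda}{n}\right)^2 = \frac{\Lambda^2}{n}.
\]
Dividing by $2$ yields the claimed bound $C(\boldsymbol{x}^*) \le \Lambda^2/(2n)$.

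The only subtle point I anticipate is making sure the half-interval decomposition genuinely tiles $S$ (with overlaps only on a null set) so that the integral splits additively, and that the sum of half-interval lengths is exactly $\Lambda$; these follow directly from the vertex property together with the convention about coincident players in the definition of $H(\boldsymbol{x}^*)$, but they need a brief sentence of justification. Everything else is bookkeeping.
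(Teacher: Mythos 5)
Your proof is correct and follows essentially the same route as the paper's: decompose $S$ into $\boldsymbol{x}^{*}$-half intervals, bound each length by $\Lambda/n$ via Lemma~\ref{le:equilibriumhalfinterval}, and conclude by majorization against $(\Lambda/n,\dots,\Lambda/n,0,\dots,0)$ together with Schur-convexity of the sum of squares. The only difference is that you spell out the partial-sum verification of the majorization and the dimension count $N\ge 2n$, which the paper leaves implicit.
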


\begin{proof}
By Lemma~\ref{le:largen}, $\boldsymbol{x}^{*}$ satisfies the vertex property. From the definition of social cost $C$ it follows that
\[
C(\boldsymbol{x}^{*}) = \sum_{[a,b]\in H(\boldsymbol{x}^{*})}\frac{\lambda([a,b])^{2}}{2}.
\]
Call $\boldsymbol{\lambda}(\boldsymbol{x}^{*})$ the vector of all $\lambda([a,b])$ such that $[a,b]\in H(\boldsymbol{x}^{*})$. By Lemma~\ref{le:equilibriumhalfinterval}, $\boldsymbol{\lambda}(\boldsymbol{x}^{*})$ is dominated in the majorization order by the vector
$(\Lambda/n, \dots, \Lambda/n, 0, \dots, 0)$,
where the number of positive components is $n$. Since the function $(z_{1}, \dots, z_{m})\mapsto \sum_{i=1}^{m}z_{1}^{2}/2$ is Schur-convex, we have 
\[
C(\boldsymbol{x}^{*}) \le \sum_{i=1}^{n} \frac{1}{2}\left(\frac{\Lambda}{n}\right)^{2}=\frac{\Lambda^{2}}{2n}. \qedhere
\]
\end{proof}

\begin{lemma}\label{le:boundsocialoptimum}
The following inequality holds:
\begin{equation*}
\inf_{\boldsymbol{x} \in S^n} C(\boldsymbol{x}) \geq     \frac{\Lambda^{2}}{2 (2n+2\card(E_{II})+\card(E_{IL}))}.
\end{equation*}
\end{lemma}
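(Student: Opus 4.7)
The plan is to reduce the lower bound to a counting-plus-convexity argument: first augment any profile so that the vertex property holds (which can only decrease the social cost), then decompose the cost along half-intervals, count them with Lemma~\ref{le:numberhalfintervals}, and apply Schur-convexity to the sum of squares.

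Concretely, given an arbitrary profile $\boldsymbol{x} \in S^n$, I would form the augmented profile $\tilde{\boldsymbol{x}}$ obtained from $\boldsymbol{x}$ by placing one additional player at every $v \in V_I$. By construction $\tilde{\boldsymbol{x}}$ satisfies the vertex property, and since $d(\tilde{\boldsymbol{x}},y) = \min_i d(\tilde{x}_i,y) \le \min_i d(x_i,y) = d(\boldsymbol{x},y)$ for every $y \in S$, adding players only shrinks distances, so
\[
C(\boldsymbol{x}) \ge C(\tilde{\boldsymbol{x}}).
\]
Next, within each $\tilde{\boldsymbol{x}}$-half interval $[a,b]$ the closest player is by definition one of the two endpoints, so travel distance varies linearly from $0$ to $\lambda([a,b])$ across this piece, contributing $\lambda([a,b])^2/2$ to the social cost. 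Since $\{[a,b] : [a,b] \in H(\tilde{\boldsymbol{x}})\}$ partitions $S$ up to a $\lambda$-null set,
\[
C(\tilde{\boldsymbol{x}}) = \sum_{[a,b] \in H(\tilde{\boldsymbol{x}})} \frac{\lambda([a,b])^2}{2}.
\]

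Then I would count $|H(\tilde{\boldsymbol{x}})|$ by applying Lemma~\ref{le:numberhalfintervals} to the $(n+\card(V_I))$-player profile $\tilde{\boldsymbol{x}}$, giving $|H(\tilde{\boldsymbol{x}})| = 2(n+\card(V_I)) + 2\card(E) - \card(V_I) - \card(V)$. Using $\card(V) = \card(V_I) + \card(V_L)$, $\card(E) = \card(E_{II}) + \card(E_{IL}) + \card(E_{LL})$, and the leaf-incidence identity $\card(V_L) = \card(E_{IL}) + 2\card(E_{LL})$ (each leaf is an endpoint of exactly one edge), this simplifies to
\[
|H(\tilde{\boldsymbol{x}})| = 2n + 2\card(E_{II}) + \card(E_{IL}) =: k.
\]
Writing $(\ell_1,\dots,\ell_k)$ for the vector of lengths $\lambda([a,b])$ over $[a,b]\in H(\tilde{\boldsymbol{x}})$, this vector has nonnegative components summing to $\Lambda$, hence is majorized by the constant vector $(\Lambda/k,\dots,\Lambda/k)$. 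By Lemma~\ref{le:sum_convex} applied to the convex function $\psi(t) = t^2/2$, the map $(\ell_1,\dots,\ell_k) \mapsto \sum_i \ell_i^2/2$ is Schur-convex, so
\[
C(\tilde{\boldsymbol{x}}) = \sum_{i=1}^k \frac{\ell_i^2}{2} \ge k\cdot\frac{(\Lambda/k)^2}{2} = \frac{\Lambda^2}{2k} = \frac{\Lambda^{2}}{2(2n+2\card(E_{II})+\card(E_{IL}))}.
\]
Chaining with $C(\boldsymbol{x}) \ge C(\tilde{\boldsymbol{x}})$ and taking the infimum over $\boldsymbol{x} \in S^n$ yields the claim.

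The main technical point will be justifying that Lemma~\ref{le:numberhalfintervals} can be invoked with $n' = n + \card(V_I)$ players even when $\boldsymbol{x}$ already places some players at internal vertices: the extra virtual players then coincide with pre-existing ones, and one must lean on the zero-length half-interval convention introduced just before that lemma to keep the count formula valid. Everything else is routine: the distance monotonicity for the augmentation step and the two algebraic identities simplifying the count to $2n + 2\card(E_{II}) + \card(E_{IL})$ are straightforward.
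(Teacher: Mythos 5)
Your proposal is correct and follows essentially the same route as the paper: augment the profile with one fictitious player at each vertex of $V_I$, note that this can only lower the cost, count the resulting half intervals via Lemma~\ref{le:numberhalfintervals} to get $2n+2\card(E_{II})+\card(E_{IL})$, and conclude by majorization and Schur-convexity of the sum of squares. The only (cosmetic) difference is that you augment an arbitrary profile and take the infimum at the end, whereas the paper augments the optimal profile directly.
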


\begin{proof}

Call $\widetilde{N}=N \cup V_{I}$ a fictitious set of players obtained by adding to the original set of players $N$ one player for each vertex of degree larger than $2$ and define $\widetilde{n}=\card(\widetilde{N})$. It is clear that 
\begin{equation*}
\inf_{x \in S^n} C(x) \ge C(\widetilde{\boldsymbol{x}}),
\end{equation*}
where the profile $\widetilde{x}$ contains $\widetilde{n}$ players: $n$ players are located according to social optimum, and $\widetilde{n}-n$ players located on each vertex of degree larger than $3$, unoccupied in the optimum.

Applying the argument used in Lemma~\ref{le:numberhalfintervals} to this new profile, we can show that the number of $\widetilde{\boldsymbol{x}}$-half intervals is 
\begin{align*}
M:=&2n+2\card(E)-\card(V)+\card(V_{I})\\
=& 2n+2\card(E)-\card(V_{L})\\
=& 2n+2\card(E)-\card(E_{IL})-2\card(E_{LL})\\
=& 2n+2\card(E_{II})+\card(E_{IL}).
\end{align*}
It is clear that $\boldsymbol{\lambda}(\widetilde{\boldsymbol{x}})$ dominates the vector 
$(\Lambda/M, \dots, \Lambda/M)$.
Since 
\[
\sum_{i=1}^{M}\frac{1}{2}\left(\frac{\Lambda}{M}\right)^{2}=\frac{\Lambda^{2}}{2M},
\]
we have
\[
\inf_{x \in S^n} C(x) \geq C(\widetilde{\boldsymbol{x}}) \geq \frac{\Lambda^{2}}{2 M}. \qedhere
\]
\end{proof} 

\begin{proof}[Proof of Theorem~\ref{th:ipoageneral}\ref{it:th:ipoageneral-a}] From the bounds in Lemmata~\ref{le:boundeqsocialcost} and \ref{le:boundsocialoptimum} we conclude that
\begin{equation*}
\PoA(n) \le \Phi(n):=\frac{4n+4\card(E_{II})+2\card(E_{IL})}{2n} \xrightarrow[n\to\infty]{} 2. \qedhere
\end{equation*}
\end{proof}

\begin{claim}\label{cl:socialcost}
Consider the equilibrium $\boldsymbol{x}^{*}$ constructed in the proof of Proposition~\ref{pr:equilibriumnprime}. Then 
\begin{multline*}
C(\boldsymbol{x}^{*}) = \sum_{e \in E_{IL}} \left(7 \frac{\xi^{2}}{2} + \left(\left\lceil \frac{\lambda(e)}{2 \xi}\right\rceil -3\right) \frac{\alpha(e)^{2} \xi ^{2}}{4}\right) \\
+ \sum_{e \in E_{II}} \left(6 \frac{\xi^{2}}{2} + \left(\left\lceil \frac{\lambda(e)}{2 \xi}\right\rceil -2 \right) \frac{\alpha(e)^{2} \xi ^{2}}{4}\right)
+ \sum_{e \in E_{LL}} \left(8 \frac{\xi^{2}}{2} + \left(\left\lceil \frac{\lambda(e)}{2 \xi}\right\rceil -4\right) \frac{\alpha(e)^{2} \xi ^{2}}{4}\right).
\end{multline*}
\end{claim}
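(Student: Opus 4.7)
The plan is to reuse the half-interval decomposition that was central to Lemma~\ref{le:boundeqsocialcost}. Since the equilibrium $\boldsymbol{x}^{*}$ constructed in the proof of Proposition~\ref{pr:equilibriumnprime} places $\degree(v)$ players on every $v \in V_{I}$, it satisfies the vertex property of Definition~\ref{de:VP}, and so the network $S$ is tiled by the elements of $H(\boldsymbol{x}^{*})$. On each half interval $[a,b]$ the distance $d(\boldsymbol{x}^{*},\cdot)$ agrees with the distance from the player located at $a$, whence integrating yields
\[
C(\boldsymbol{x}^{*}) = \sum_{[a,b]\in H(\boldsymbol{x}^{*})} \frac{\lambda([a,b])^{2}}{2}.
\]

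My next step is to partition $H(\boldsymbol{x}^{*})$ by the edge of $S$ that contains each half interval and to compute the contribution of each edge separately by reading the configuration off Figures~\ref{fi:EL}, \ref{fi:EI}, and \ref{fi:ELL}. On every edge the only positive half-interval lengths are $\xi$, coming either from halving a $2\xi$ buffer between two adjacent occupied locations or from the $\xi$-stretch that joins a double directly to an adjacent leaf, and $\alpha(e)\xi/2$, coming from halving one of the $\alpha(e)\xi$ gaps between consecutive singles; locations shared by two players additionally create zero-length half intervals, which contribute nothing to the sum. Each edge summand therefore takes the form
\[
k_{e}\, \frac{\xi^{2}}{2} + 2m_{e}\, \frac{(\alpha(e)\xi/2)^{2}}{2} = k_{e}\, \frac{\xi^{2}}{2} + m_{e}\, \frac{\alpha(e)^{2}\xi^{2}}{4},
\]
where $k_{e}$ is the number of length-$\xi$ half intervals on $e$ and $m_{e}$ is the number of $\alpha(e)\xi$ gaps on $e$.

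It then remains to read the two integers off each figure. For $e \in E_{IL}$, Figure~\ref{fi:EL} exhibits three $2\xi$ buffers and one $\xi$ leaf-segment, giving $k_{e}=2\cdot 3+1=7$; and $m_{e}$ equals one less than the number of singles on $e$, i.e., $m_{e}=\lceil\lambda(e)/(2\xi)\rceil-3$. For $e \in E_{II}$, Figure~\ref{fi:EI} shows three $2\xi$ buffers and no leaf segment, so $k_{e}=6$ and $m_{e}=\lceil\lambda(e)/(2\xi)\rceil-2$. For $e \in E_{LL}$, Figure~\ref{fi:ELL} shows three $2\xi$ buffers and two $\xi$ leaf-segments, so $k_{e}=6+2=8$ and $m_{e}=\lceil\lambda(e)/(2\xi)\rceil-4$. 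Summing the three families reproduces the formula stated in the claim.

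The main obstacle is purely careful bookkeeping: one must assign each half interval to a unique edge (half intervals emanating from a vertex of $V_{I}$ along an incident edge $e$ are assigned to $e$), recall that the $\xi$ leaf-segments are \emph{not} halved whereas the interior $2\xi$ and $\alpha(e)\xi$ gaps are, and correctly extract the number of singles on $e$ from the total player count prescribed in the proof of Proposition~\ref{pr:equilibriumnprime}. Once these three points are handled, no further calculation is needed.
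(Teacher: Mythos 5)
Your proof is correct and takes essentially the same approach as the paper's: both decompose $C(\boldsymbol{x}^{*})$ over the half intervals of each edge as read off Figures~\ref{fi:EL}, \ref{fi:EI}, and \ref{fi:ELL}, and your counts (seven, six, and eight $\xi$-half intervals, and $\lceil\lambda(e)/(2\xi)\rceil-3$, $-2$, $-4$ gaps of length $\alpha(e)\xi$, respectively) agree with the paper's in all three cases. Your extra care in spelling out that each $\alpha(e)\xi$ gap splits into two half intervals of length $\alpha(e)\xi/2$, hence contributes $\alpha(e)^{2}\xi^{2}/4$, is the same accounting the paper performs implicitly.
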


\begin{proof}
Each edge $e \in E_{IL}$ contains $7$ half intervals of length $\xi$ and $\left\lceil \lambda(e)(2 \xi)\right\rceil -3$ intervals of length $\alpha(e) \xi$. The cost of edge $e$ is then
\[ 
7 \frac{\xi^{2}}{2} + \left(\left\lceil \frac{\lambda(e)}{2 \xi}\right\rceil -3\right) \frac{\alpha(e)^{2} \xi ^{2}}{4}.
\]

Each edge $e \in E_{II}$ contains $6$ half intervals of length $\xi$ and $\left\lceil \lambda(e)/(2 \xi)\right\rceil -2$ intervals of length $\alpha(e) \xi$. The cost of edge $e$ is then
\[ 
6 \frac{\xi^{2}}{2} + \left(\left\lceil \frac{\lambda(e)}{2 \xi}\right\rceil -2 \right) \frac{\alpha(e)^{2} \xi ^{2}}{4}.
\]

Each edge $e \in E_{LL}$ contains $8$ half intervals of length $\xi$ and $\left\lceil \lambda(e)/(2 \xi)\right\rceil -4$ intervals of length $\alpha(e) \xi$. The cost of edge $e$ is then
\[
8 \frac{\xi^{2}}{2} + \left(\left\lceil \frac{\lambda(e)}{2 \xi}\right\rceil -4\right) \frac{\alpha(e)^{2} \xi ^{2}}{4}. \qedhere
\]
\end{proof}

\begin{claim}\label{cl:boundxi}
\[
\frac{\Lambda}{2n-4\card(E)} \leq \xi \leq \frac{\Lambda}{2n-6\card(E)}.
\]
\end{claim}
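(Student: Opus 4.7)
The plan is to derive both bounds directly from the defining identity $f(\xi) = n'$ established in the proof of Proposition~\ref{pr:equilibriumnprime}, where $f(z) = 3\card(E) + \sum_{e \in E} \lceil \lambda(e)/(2z) \rceil$ and $n \le n' \le n + \card(E)$. Rewriting this as $\sum_{e \in E} \lceil \lambda(e)/(2\xi) \rceil = n' - 3\card(E)$, I will then sandwich $\xi$ using elementary bounds on the ceiling function together with the identity $\sum_{e \in E}\lambda(e)=\Lambda$.

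For the lower bound, applying the trivial estimate $\lceil x \rceil \ge x$ termwise gives
\[
\frac{\Lambda}{2\xi} \;\le\; \sum_{e \in E} \left\lceil \frac{\lambda(e)}{2\xi} \right\rceil \;=\; n' - 3\card(E) \;\le\; n - 2\card(E),
\]
the last step from $n' \le n + \card(E)$. Solving for $\xi$ yields $\xi \ge \Lambda/(2n - 4\card(E))$, which is the lower bound.

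For the upper bound, the naive $\lceil x \rceil \le x + 1$ would only give $\xi \le \Lambda/(2n - 8\card(E))$, which falls short of the claim by $2\card(E)$ units in the denominator. The refinement exploits that $\xi = \underline{\xi}$ is the \emph{left endpoint} of the plateau interval from Lemma~\ref{le:graphf}, hence just to the left of $\underline{\xi}$ we have $f(z) > n + \card(E)$ and $\underline{\xi}$ is a genuine jump point of $f$. Let $\delta := \lim_{z \to \underline{\xi}^-} f(z) - n'$ denote the jump size. Since the left-limit is an integer strictly exceeding $n + \card(E)$, we obtain $\delta \ge n + \card(E) + 1 - n'$. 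On the other hand, since each summand in $f$ has unit jumps, $\delta$ equals precisely the number of edges $e$ for which $\lambda(e)/(2\underline{\xi})$ is an integer; for those $\delta$ edges $\lceil \lambda(e)/(2\underline{\xi}) \rceil = \lambda(e)/(2\underline{\xi})$ exactly, while each of the remaining $\card(E) - \delta$ edges contributes at most $1$ to the rounding gap. Summing yields
\[
n' - 3\card(E) \;\le\; \frac{\Lambda}{2\underline{\xi}} + \card(E) - \delta,
\]
so $\Lambda/(2\underline{\xi}) \ge n' - 4\card(E) + \delta \ge n - 3\card(E) + 1$, which rearranges to $\xi \le \Lambda/(2n - 6\card(E))$.

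The principal obstacle is the upper bound: the indiscriminate ceiling inequality loses $\card(E)$ rounding units per edge, so one has to recover $2\card(E)$ additional units of slack in the denominator. The jump-point analysis at $\underline{\xi}$ supplies exactly this, because the $\delta$ ``integer'' edges contribute zero rounding error and the minimum jump size forced by the leftmost-plateau condition on $\underline{\xi}$ matches the remaining deficit; both halves of this cancellation must be used simultaneously.
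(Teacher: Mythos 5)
Your proof is correct and starts from the same identity the paper's proof uses, namely $f(\xi)=n'$ with $n\le n'\le n+\card(E)$, and your lower bound is exactly the paper's: apply $\lceil x\rceil\ge x$ termwise to get $\Lambda/(2\xi)\le n'-3\card(E)\le n-2\card(E)$. Where you genuinely diverge is the upper bound. The paper disposes of it with a one-line ``which implies,'' but, as you correctly observe, the blunt estimate $\lceil x\rceil< x+1$ only yields $\Lambda/(2\xi)>n'-4\card(E)\ge n-4\card(E)$, i.e., $\xi<\Lambda/(2n-8\card(E))$, which is strictly weaker than the stated $\Lambda/(2n-6\card(E))$. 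Your jump-point analysis at $\underline{\xi}$ --- counting the $\delta$ edges with $\lambda(e)/(2\underline{\xi})$ integral, which contribute zero rounding error, and noting that $\delta\ge n+\card(E)+1-n'$ because the left limit of $f$ at $\underline{\xi}$ exceeds $n+\card(E)$ --- recovers exactly the missing $\card(E)$ and even gives the strict bound $\Lambda/(2\underline{\xi})\ge n-3\card(E)+1$. Two remarks. First, your argument hinges on reading $\underline{\xi}$ as $\inf\{z: f(z)\le n+\card(E)\}$; this is the natural construction behind Lemma~\ref{le:graphf}, and it is essentially the only reading under which the claimed constant holds (one can build two-edge examples where a plateau of $f$ further to the right also satisfies $n\le f\le n+\card(E)$ yet violates the stated upper bound), so that choice should be made explicit. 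Second, nothing downstream (Claims~\ref{cl:alpha2} and~\ref{cl:sigman}) needs more than $\xi\sim\Lambda/(2n)$, so the weaker denominator $2n-8\card(E)$ would serve just as well there; still, your refinement is what it actually takes to prove the claim as literally stated, and it supplies a justification that the paper's terse implication does not.
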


\begin{proof}
By definition, $\xi$ is such that $f(\xi)=n'$ with $n \leq n' \leq n + \card(E)$, where $f$ is defined as in \eqref{eq:definitionf}.
Therefore
\begin{equation*}
n \leq 3 \card(E) + \sum_{e \in E} \left\lceil \frac{\lambda(e)}{2\xi} \right\rceil \leq n + \card(E),
\end{equation*}
which implies
\[             
\frac{\Lambda}{2n-4 \card(E)} \leq \xi \leq \frac{\Lambda}{2n-6 \card(E)}.  \qedhere
\]
\end{proof}

Given two functions $g$ and $h$, we say that $g(n) \underset{n \rightarrow \infty}{\sim} h(n)$ if $g(n)/h(n)\to 1$ as $n$ goes to infinity.

\begin{claim}\label{cl:alpha2}
For all $e \in E$ we have $\lim_{n\to\infty}\alpha(e) = 2$.
\end{claim}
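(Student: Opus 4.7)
The plan is a straightforward asymptotic calculation based on Claim~\ref{cl:boundxi}. First, I would note that Claim~\ref{cl:boundxi} immediately implies $\xi \to 0$ as $n \to \infty$, since $\xi \le \Lambda/(2n - 6\card(E)) \to 0$. So the whole computation reduces to analyzing the behavior of each $\alpha(e)$ as $\xi \to 0^{+}$ with $\lambda(e)$ fixed.

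Next, for any edge $e$, the bracket satisfies the elementary bound
\[
\frac{\lambda(e)}{2\xi} \le \left\lceil \frac{\lambda(e)}{2\xi}\right\rceil < \frac{\lambda(e)}{2\xi}+1,
\]
so there exists $\theta(e,\xi)\in[0,1)$ with $\xi\lceil \lambda(e)/(2\xi)\rceil = \lambda(e)/2 + \theta(e,\xi)\,\xi$. Substituting into the three defining formulas \eqref{eq:alphaEL}, \eqref{eq:alphaEI}, \eqref{eq:alphaELL}, each $\alpha(e)$ takes the shape
\[
\alpha(e) = \frac{\lambda(e) - c_{1}\xi}{\lambda(e)/2 + (\theta(e,\xi)-c_{2})\xi},
\]
where $(c_{1},c_{2}) \in \{(7,3),(6,2),(8,4)\}$ depending on whether $e \in E_{IL}$, $E_{II}$, or $E_{LL}$. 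Since $\lambda(e)$ is a fixed positive constant and the error terms in numerator and denominator are $O(\xi)$, letting $\xi \to 0$ gives $\alpha(e) \to \lambda(e)/(\lambda(e)/2) = 2$.

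There is no genuine obstacle here; the only thing to verify explicitly is that $\xi \to 0$ (handled by Claim~\ref{cl:boundxi}) and that the ceiling correction is uniformly bounded, which is trivial. The short argument above completes the proof.
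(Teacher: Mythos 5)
Your proof is correct and follows essentially the same route as the paper: both derive $\xi \to 0$ from Claim~\ref{cl:boundxi} and then pass to the limit in the three formulas \eqref{eq:alphaEL}, \eqref{eq:alphaEI}, \eqref{eq:alphaELL} using $\xi\lceil\lambda(e)/(2\xi)\rceil \to \lambda(e)/2$. Your version merely makes the ceiling estimate explicit where the paper leaves it implicit.
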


\begin{proof}
Claim ~\ref{cl:boundxi} implies
\begin{equation}\label{eq:xin}
\xi \underset{n \rightarrow \infty}{\sim} \frac{\Lambda}{2n}.
\end{equation}
Therefore, if $e \in E_{IL}$, then
\[
\alpha(e) = \frac{\lambda(e)- 7 \xi}{\xi \left\lceil \frac{\lambda(e)}{2 \xi} \right\rceil - 3 \xi}  \underset{n \rightarrow \infty}{\sim}  2;
\]
if $e \in E_{II}$, then
\[
\alpha(e) = \frac{\lambda(e)- 6 \xi}{\xi \left\lceil \frac{\lambda(e)}{2 \xi} \right\rceil - 2 \xi}  \underset{n \rightarrow \infty}{\sim}  2;
\]
if $e \in E_{LL}$, then
\[
\alpha(e) = \frac{\lambda(e)- 8 \xi}{\xi \left\lceil \frac{\lambda(e)}{2 \xi} \right\rceil - 4 \xi}  \underset{n \rightarrow \infty}{\sim}  2. \qedhere
\]
\end{proof}

\begin{claim}\label{cl:sigman}
\[   
C(\boldsymbol{x}^{*})  \underset{n \rightarrow \infty}{\sim}  \frac{\Lambda^{2}}{4n}.  
\]
\end{claim}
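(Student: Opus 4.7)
The plan is to start from the explicit decomposition of $C(\boldsymbol{x}^{*})$ in Claim~\ref{cl:socialcost} and then estimate each piece asymptotically using Claims~\ref{cl:boundxi} and \ref{cl:alpha2}. Rewriting the expression in Claim~\ref{cl:socialcost} as
\[
C(\boldsymbol{x}^{*}) = \sum_{e \in E} \left\lceil \frac{\lambda(e)}{2 \xi}\right\rceil \frac{\alpha(e)^{2} \xi^{2}}{4} + R(n),
\]
where $R(n)$ collects the constant-count terms $c(e)\xi^{2}/2$ and the shifts $-c'(e)\alpha(e)^2\xi^2/4$ with $c(e),c'(e) \in \{2,3,4\}$, I would argue that $R(n)$ is negligible. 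Indeed Claim~\ref{cl:boundxi} yields $\xi = O(1/n)$, hence $\xi^{2} = O(1/n^{2})$; since $\card(E)$ is fixed and $\alpha(e)$ is bounded (being $\leq 2$ by \eqref{eq:alpha}), the whole remainder is $O(1/n^{2}) = o(\Lambda^{2}/(4n))$.

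Next I would treat the main term. For each fixed $e \in E$, use
\[
\left\lceil \frac{\lambda(e)}{2\xi} \right\rceil \xi = \frac{\lambda(e)}{2} + O(\xi) = \frac{\lambda(e)}{2}\bigl(1+o(1)\bigr)
\]
together with $\alpha(e)^{2}/4 \to 1$ by Claim~\ref{cl:alpha2}, to obtain
\[
\left\lceil \frac{\lambda(e)}{2\xi} \right\rceil \frac{\alpha(e)^{2}\xi^{2}}{4} = \frac{\lambda(e)\xi}{2}\bigl(1+o(1)\bigr).
\]
Summing over the finite set of edges and pulling out $\xi/2$ yields
\[
\sum_{e \in E}\left\lceil \frac{\lambda(e)}{2\xi}\right\rceil \frac{\alpha(e)^{2}\xi^{2}}{4} = \frac{\xi}{2}\sum_{e \in E}\lambda(e)\bigl(1+o(1)\bigr) = \frac{\Lambda \xi}{2}\bigl(1+o(1)\bigr).
\]

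Finally, applying the asymptotic $\xi \sim \Lambda/(2n)$ from Claim~\ref{cl:boundxi} (equation~\eqref{eq:xin}), the leading term becomes $\Lambda^{2}/(4n)\cdot(1+o(1))$, while $R(n)$ is of lower order, so $C(\boldsymbol{x}^{*}) \sim \Lambda^{2}/(4n)$. The only minor technicality is keeping track of the little-$o$ terms uniformly across the $\card(E)$ edges, which is harmless because $\card(E)$ is a fixed constant independent of $n$; there is no real obstacle here since all the heavy lifting was done in the preceding claims.
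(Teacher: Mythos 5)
Your argument is correct and follows essentially the same route as the paper: both start from the explicit cost formula in Claim~\ref{cl:socialcost}, discard the $O(\xi^{2})=O(1/n^{2})$ terms, use $\lceil \lambda(e)/(2\xi)\rceil\,\alpha(e)^{2}\xi^{2}/4 \sim \lambda(e)\xi/2$ via Claim~\ref{cl:alpha2}, and conclude with $\xi \sim \Lambda/(2n)$ from Claim~\ref{cl:boundxi}. (Your bookkeeping is in fact slightly more careful than the paper's; the only slip is that the constant-count coefficients are $6$, $7$, $8$ rather than lying in $\{2,3,4\}$, which is immaterial since they only need to be bounded.)
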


\begin{proof}
Using Claims~\ref{cl:socialcost} and \ref{cl:alpha2}, we  have
\[
C(\boldsymbol{x}^{*})  \underset{\xi \to 0}{\sim}  \sum_{e \in E_{IL}} \left(\frac{\xi^{2}}{2} + \frac{\lambda(e)\xi}{2}\right)
+ \sum_{e \in E_{II}} \left(\xi^{2} + \frac{\lambda(e)\xi}{2}\right)
+ \sum_{e \in E_{LL}} \left(2\xi^{2} + \frac{\lambda(e)\xi}{2}\right).
\]
Hence
\[
C(\boldsymbol{x}^{*}) \underset{\xi \to 0}{\sim} \sum_{e \in E} \frac{\lambda(e)\xi}{2},
\]
that is, thanks to \eqref{eq:xin},
\[
C(\boldsymbol{x}^{*}) \underset{n \to \infty}{\sim} \frac{\Lambda^{2}}{4n}. \qedhere
\]
\end{proof}

\begin{proof}[Proof of Theorem~\ref{th:ipoageneral}\ref{it:th:ipoageneral-b}]
By Lemma~\ref{le:boundsocialoptimum}
we have
\[
\inf_{x \in S^n} C(x) \geq \frac{\Lambda^{2}}{4n+4\card(E_{II})+2\card(E_{IL})}.
\]
Therefore
\[
\IPoS(n) \le 
C(\boldsymbol{x}^{*})\left(\displaystyle{\frac{\Lambda^{2}}{4n+4\card(E_{II})+2\card(E_{IL})}}\right)^{-1}.
\]
Since
\[
\frac{\Lambda^{2}}{4n+4\card(E_{II})+2\card(E_{IL}))} \underset{n \to \infty}{\sim} \frac{\Lambda^{2}}{4n}
\]
and $\IPoS(n) \ge 1$, using Claim~\ref{cl:sigman},  we obtain
\[
\lim_{n \to \infty} \IPoS(n) = 1. \qedhere
\]
\end{proof}

\section{Examples}\label{se:examples}

In this section we consider some simple examples of networks and show that in some cases exact results can be obtained. 

\subsection{The circle}\label{suse:circle}

We now assume consumers to be distributed on the unit circle $\mathcal{C}$. 
This model has been studied by \citet{EatLip:RES1975}, who deal with existence of equilibria for the model without price, and by \citet{Sal:BJE1979}, who considers the model with price. 
Notice that \emph{stricto sensu} this  is not a particular case of our general model, since the circle is not a graph. We can see it as a graph where all points have degree $2$.

\begin{proposition}\label{pr:existenceC}
For every $n \ge 1$ the set of equilibria of the game $\mathcal{L}(n, \mathcal{C})$ is  non-empty. 
\end{proposition}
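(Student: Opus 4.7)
The plan is to exhibit an explicit symmetric equilibrium for every $n \ge 1$. The case $n=1$ is immediate: a single retailer captures the entire mass $\lambda(\mathcal{C})$ from any location, so every singleton strategy profile is a Nash equilibrium. For $n \ge 2$, I will construct the profile $\boldsymbol{x}^{*}$ where the $n$ players are equally spaced along $\mathcal{C}$; normalising the circumference to $1$, set $x_i^{*}$ at arclength $(i-1)/n$ from some reference point. Under $\boldsymbol{x}^{*}$ each player is alone at her location and owns a single catchment arc of length $1/n$, so $\rho_i(\boldsymbol{x}^{*}) = 1/n$ for every $i$.

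The verification of the equilibrium condition reduces to a one-line observation once set up correctly. Fix a player $i$ and consider a unilateral deviation to some $y \in \mathcal{C}$. The remaining $n-1$ players occupy their original equally spaced positions and partition $\mathcal{C}$ into $n-1$ arcs: one distinguished arc $A$ of length $2/n$ straddling the vacated location $x_i^{*}$, and $n-2$ arcs each of length $1/n$. The key computation is: if a lone retailer sits inside an arc of length $\ell$ whose endpoints are occupied by other retailers, then the portion of consumers in that arc closer to her than to either flanking retailer has measure exactly $\ell/2$, independently of where inside the arc she locates. Indeed, writing her distances to the two endpoints as $p$ and $\ell-p$, the Voronoi boundaries are the midpoints, and the span between them equals $p/2 + (\ell-p)/2 = \ell/2$. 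Consumers outside this arc are closer to its endpoints than to $y$, so her total payoff equals $\ell/2$.

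Maximising over the choice of arc therefore gives
\[
\rho_i(x_1^{*},\dots,x_{i-1}^{*},y,x_{i+1}^{*},\dots,x_n^{*})
\le \max\!\left\{\tfrac{1}{2}\cdot\tfrac{2}{n},\ \tfrac{1}{2}\cdot\tfrac{1}{n}\right\}
= \tfrac{1}{n} = \rho_i(\boldsymbol{x}^{*}),
\]
so no strictly profitable deviation exists. The remaining edge cases — namely $y = x_j^{*}$ for some $j \ne i$, or $y$ located exactly at a midpoint between two other players — are handled directly by the tie-breaking convention introduced in Subsection~\ref{suse:sellers} and yield a payoff of $1/(2n) \le 1/n$, so they do not upset the argument. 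This is essentially the only subtlety: because $\mathcal{C}$ has no leaves and no vertices of degree $\ne 2$, the corner effects that break existence on the segment (e.g.\ the $n=3$ case of \citet{EatLip:RES1975}) are absent, and the equally spaced configuration works uniformly in $n$.
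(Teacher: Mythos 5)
Your proof is correct and takes essentially the same route as the paper, which establishes Proposition~\ref{pr:existenceC} via Lemma~\ref{le:salop}(a): place the $n$ players equally spaced on $\mathcal{C}$ and check that every unilateral deviation yields at most the equilibrium payoff $1/n$. One small numerical slip: a deviation onto a \emph{neighbour's} location yields $\tfrac{1}{2}\left(\tfrac{1}{2}\cdot\tfrac{2}{n}+\tfrac{1}{2}\cdot\tfrac{1}{n}\right)=\tfrac{3}{4n}$ rather than $\tfrac{1}{2n}$ (the latter is the value for a non-neighbour), but since $\tfrac{3}{4n}\le \tfrac{1}{n}$ the argument is unaffected.
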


\begin{proposition}\label{pr:pricesalop}

In the game  $\mathcal{L}(n, \mathcal{C})$, we have:
\begin{enumerate}[{\rm (a)}]
\item\label{it:pr:ipoacircle}
\[
\IPoA(n) = 
\begin{cases}
2 & \text{if $n$ is even}, \\
\displaystyle{2\frac{n}{n+1}}& \text{if $n$ is odd}.
\end{cases}
\]

\item\label{it:pr:iposcircle}
\[
\IPoS(n) = 1.
\]
\end{enumerate}

\end{proposition}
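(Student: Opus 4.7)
The plan is to compute the social optimum, verify that it is attained by a Nash equilibrium (yielding part~\ref{it:pr:iposcircle}), construct the worst Nash equilibria matching the claimed price of anarchy, and then prove the matching upper bound.

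For any profile with $m$ distinct occupied positions and arc-lengths $a_1,\dots,a_m$ summing to $\Lambda=1$, the total transportation cost equals $\sum_{i=1}^{m} a_i^{2}/4$. Jensen's inequality minimizes this over $\boldsymbol{x}\in\mathcal{C}^{n}$ at the equal-spacing profile (all $a_i=1/n$), giving $\inf_{\boldsymbol{x}} C(\boldsymbol{x})=1/(4n)$. This profile is itself a Nash equilibrium: each player collects $1/n$, while any unilateral deviation (to the midpoint of some other arc, or to join an existing player) yields at most $1/(2n)$. This proves part~\ref{it:pr:iposcircle}. For the lower bound on $\IPoA(n)$ I exhibit explicit worst equilibria. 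For even $n$, placing two retailers at each of $n/2$ equally spaced points yields all arcs equal to $2/n$ and cost $1/(2n)$, so $\IPoA(n)\geq 2$. For odd $n$, placing $(n-1)/2$ pairs together with one lone retailer at $(n+1)/2$ equally spaced points yields all arcs equal to $2/(n+1)$ and cost $1/(2(n+1))$, so $\IPoA(n)\geq 2n/(n+1)$. Both profiles are verified to be Nash equilibria by checking deviations to the midpoints of arcs and to existing positions.

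For the matching upper bound let $\boldsymbol{x}^{*}$ be any equilibrium with $m$ distinct positions. Adapting Lemma~\ref{le:playersdegree}, no location hosts more than two players, whence $m\geq\lceil n/2\rceil$. When there is at least one paired position, Corollary~\ref{co:equaldelta} yields a common half-interval length $\xi$ at every paired position, and a paired player's option to move to the midpoint of any arc forces $a_i\leq 2\xi$ for every arc; consequently
\[
\sum_{i=1}^{m} a_i^{2} \leq 2\xi \sum_{i=1}^{m} a_i = 2\xi, \qquad C(\boldsymbol{x}^{*}) \leq \xi/2.
\]
To bound $\xi$ I partition the single positions into maximal runs of consecutive singles of sizes $k_{1},\dots,k_{R}$. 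Interior singles in any run of size $k_r\geq 3$ impose $b_i+b_{i+1}\geq 2\xi$ on consecutive single-single arcs within the run, forcing the total single-single length in that run to be at least $2\xi\lfloor(k_r-1)/2\rfloor$. Summing these run-wise bounds, combining with the count of pair-adjacent arcs (which equals $p+R$) and the identity $\sum_i a_i=1$, one obtains $1\geq(n+R_{o})\,\xi$, where $R_{o}$ counts the runs of odd length. Since $s=n-2p=\sum_r k_r$ has the parity of $R_{o}$ and of $n$, one has $R_{o}\geq 1$ whenever $n$ is odd; thus $\xi\leq 1/(n+1)$ for odd $n$ and $\xi\leq 1/n$ in general. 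The remaining case $p=0$ is handled by the same inequality applied to the cyclic system of constraints $a_{i-1}+a_i\geq a^{*}$. Dividing the resulting cost bounds by $1/(4n)$ yields the matching upper bounds on $\IPoA$.

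\textbf{Main obstacle.} The routine inequalities give only $\xi\leq 1/n$ and hence $\IPoA(n)\leq 2$ in all cases; recovering the sharper value $2n/(n+1)$ for odd $n$ rests precisely on the extra factor $R_{o}\geq 1$ produced by the parity argument on single-runs, which exploits the fact that an odd total player count forces at least one odd-sized run of single positions in any equilibrium structure.
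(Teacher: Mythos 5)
Your proposal is correct and agrees with the paper on the extremal profiles (equal spacing for the optimum and the price of stability; all pairs for even $n$, all pairs plus one single for odd $n$ as worst equilibria), but your upper bound on $\IPoA(n)$ is proved by a genuinely different route. The paper works with the vector of half-interval lengths and majorization: Lemma~\ref{le:equilibriumhalfinterval} caps every half-interval at $\Lambda/n$, so for even $n$ the equilibrium vector is majorized by the vector with $n$ entries equal to $\Lambda/n$, and Schur-convexity of $\sum z_i^2/2$ gives the bound; for odd $n$ the paper additionally counts the zero-length half-intervals (at most $n-1$ of them) and claims majorization by the vector with $n+1$ equal positive entries. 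You instead bound $C(\boldsymbol{x}^{*})\le \xi/2$ directly from $a_i\le 2\xi$ and then control $\xi$ by decomposing the single positions into runs and exploiting the parity identity $R_{o}\equiv s\equiv n \pmod 2$ to get $\xi\le 1/(n+R_{o})$ with $R_{o}\ge 1$ when $n$ is odd (plus the separate cyclic argument when no position is paired). Your route is more elementary, and it actually supplies a step the paper's odd case leaves implicit: having at least $n+1$ positive half-intervals each bounded by $\Lambda/n$ does not by itself yield majorization by the uniform vector on $n+1$ coordinates (one needs every half-interval to be at most $\Lambda/(n+1)$), and your parity argument is exactly what establishes that sharper bound. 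What the paper's approach buys in exchange is uniformity: the same majorization machinery drives the general Theorem~\ref{th:ipoageneral} and the segment case, whereas your run/parity analysis is tailored to the circle. One small slip: in verifying that the equally spaced profile is an equilibrium, a deviation onto a neighbouring occupied location yields $3/(4n)$ (the paper's $3\pi/(2n)$), not at most $1/(2n)$ as you state; this is still below $1/n$, so the conclusion is unaffected.
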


\begin{figure}[H]

\tikzstyle{leaf}=[rectangle,fill,scale=0.2]
\tikzstyle{single}=[circle,fill=blue,scale=0.5]
\tikzstyle{double}=[rectangle,fill=red,scale=0.7]
\tikzstyle{rple}=[diamond,fill=green,scale=0.6]
\tikzstyle{kple}=[regular polygon,regular polygon sides=3,fill=brown,scale=0.5]

\newdimen\mydim
\begin{tikzpicture}[auto, thick]

\mydim=1.5cm

    \draw node [single] at (13, 0.4) {};
   \draw (13,0.4) node [right]  {\ {$1$ player}};
   \draw node [double] at (13, -0.4) {};
   \draw (13, -0.4) node [right]  {\ {$2$ players}};

   \draw (0,0) circle (\mydim);
   \draw node(a) [single] at (0:\mydim) {};
   \draw node(b) [single] at (60:\mydim) {};
   \draw node(c) [single] at (120:\mydim) {};
   \draw node(d) [single] at (180:\mydim) {};
   \draw node(e) [single] at (240:\mydim) {};
   \draw node(f) [single] at (300:\mydim) {};   
      
\begin{scope}[xshift=5 cm]
 \draw (0,0) circle (\mydim);
   \draw node(a) [double] at (0:\mydim) {};
   \draw node(c) [double] at (120:\mydim) {};
   \draw node(e) [double] at (240:\mydim) {};
\end{scope}

 \begin{scope}[xshift=10 cm]
   \draw (0,0) circle (\mydim);
   \draw node(a) [single] at (0:\mydim) {};
   \draw node(c) [double] at (120:\mydim) {};
   \draw node(e) [double] at (240:\mydim) {};
\end{scope}
\end{tikzpicture}
~\vspace{0cm} \caption{\label{fi:bestC} Left: best equilibrium $\widetilde{\boldsymbol{x}}$ with $6$ players; middle: worst equilibrium $\widehat{\boldsymbol{x}}$ with $6$ players; right: worst equilibrium $\breve{\boldsymbol{x}}$  with $5$ players, on $\mathcal{C}$.}
\end{figure}

\subsection{The segment}\label{suse:segment}

The model described in this subsection was studied in details by  \citet{EatLip:RES1975} under slightly different assumptions.
We consider the location game on a segment, which, without loss of generality, is assumed to be $[0,1]$.

\begin{proposition}\label{pr:zeroone}
Consider the location game $\mathcal{L}(n,[0,1])$.
\begin{enumerate}[{\rm (a)}]
\item\label{it:pr:zeroone-a}
For $n=2,4,5$ there exists a unique (modulo permutation of players) pure Nash equilibrium 

\item\label{it:pr:zeroone-b}
For $n=3$, there is no pure Nash equilibrium.

\item\label{it:pr:zeroone-c} 
For $n \ge 6$, there is an infinite number of pure Nash equilibria. 

\end{enumerate}
\end{proposition}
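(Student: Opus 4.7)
My plan is case analysis on $n$, leveraging two structural constraints. First, since $V_{I}=\emptyset$ on the segment, the vertex property is vacuous and Lemma~\ref{le:playersdegree} gives that no interior point carries more than two players and each leaf carries at most one; combined with the observation that a lone player at a leaf could strictly gain by moving inward, no equilibrium places any player at $0$ or $1$. Second, adapting the argument of Lemma~\ref{le:playersleaf}, the leftmost occupied location must be shared by at least (hence exactly) two players, otherwise a lone leftmost player would strictly gain by sliding rightward, and symmetrically for the rightmost location.

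For part~\ref{it:pr:zeroone-a}, when $n=2$ both players must coincide at some $a\in(0,1)$ and the classical Hotelling argument pins $a=1/2$. When $n=4$, the structure is forced to be two pairs at $a$ and $b$; the per-pair-member payoff $(a+b)/4$ must dominate the payoff obtained by sliding just past one's own partner into the gap $(a,b)$, which equals $(b-a)/2$, as well as the payoff obtained from moving just left of $a$, which approaches $a$. These constraints force $b=3a$ and, by the symmetric conditions for the pair at $b$, $1-a=3(1-b)$, whose unique solution is $a=1/4$, $b=3/4$. When $n=5$, the structure is a leftmost pair at $a$, a singleton at $c$, and a rightmost pair at $b$; the same deviation analysis for pair-members plus the matching condition for the singleton yields a tight system whose unique solution is $a=1/6$, $c=1/2$, $b=5/6$.

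For part~\ref{it:pr:zeroone-b}, when $n=3$ the two enforced pairs would require at least four distinct player-slots, a contradiction, so no equilibrium exists.

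For part~\ref{it:pr:zeroone-c}, I will exhibit a one-parameter continuous family of equilibria for each $n\geq 6$. The prototype, for $n=6$, is the family $\boldsymbol{x}(a)=(a,a,3a,1-3a,1-a,1-a)$ for $a\in[1/8,1/6]$: direct verification of every unilateral deviation (moving into an adjacent gap, moving just past one's own or another cluster, or joining another location) confirms that each such profile is an equilibrium, and distinct values of $a$ give distinct profiles. For $n\geq 7$ the construction is analogous, placing a leftmost pair at $a$, a rightmost pair at $1-a$, and the remaining $n-4$ players as singletons (or additional pairs) in the interior with spacings allowed to vary within an admissible interval. The main obstacle will be verifying that, once $n\geq 6$, the system of no-deviation inequalities becomes strictly under-determined along at least one coordinate, so that the admissible parameter set has positive length, whereas for $n=4,5$ the analogous system is tight and pins down the unique equilibrium.
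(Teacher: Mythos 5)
You should first note that the paper itself does not prove this proposition: it defers to \citet{EatLip:RES1975} and \citet{Pal:mimeo2011}, so the only in-paper material to compare against is the unproved characterization sketched in Figure~\ref{fi:nplayers}. Your route is the standard one and your structural reductions are sound: on $[0,1]$ the vertex property is vacuous, Lemma~\ref{le:playersdegree} caps every interior location at two players and each leaf at one, and the sliding argument forces the extreme occupied locations to be exactly pairs located off the leaves. Your computations check out: for $n=4$ the constraints $b=3a$ and $3b-a=2$ give $(1/4,3/4)$; for $n=5$ the cross-gap constraints (a left-pair member must not profit from the gap $(c,b)$, giving $b\le 5a$, and symmetrically $1/3-a\ge a$) pin down $(1/6,1/2,5/6)$; the $n=3$ contradiction is correct provided you also rule out the configuration with all three players at a single point, which the two-per-location cap does; and your $n=6$ family with $a\in[1/8,1/6]$ is exactly right. (Incidentally, your conditions show that the inequalities (i)--(ii) displayed after the proposition in the paper must have their directions swapped, since as printed they would force all $\eta_i=\xi$ and hence a unique profile.) Two small omissions in parts (a)--(b): uniqueness arguments via sliding deviations only identify the unique \emph{candidate}; you should also verify that the candidate survives co-location deviations (e.g., for $n=4$, joining the other pair yields $1/6<1/4$), which is routine but necessary.

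The genuine gap is part~(c) for $n\ge 7$, which you only assert is ``analogous'' and whose key step --- that the no-deviation system is under-determined --- you explicitly leave open. It is true and fillable, and here is a concrete way to close it: place pairs at distance $\xi$ from each leaf, singles at distance $3\xi$ from each leaf, and $n-6$ further singles in between, with interior gaps $\eta_1,\dots,\eta_{n-5}$ summing to $1-6\xi$. The no-deviation conditions are that every gap has length at most $2\xi$ (so that jumping into a gap yields at most $\xi$, the minimal payoff, attained by the pair members) and that every single's payoff, which is the average of its two adjacent gaps, is at least $\xi$. Taking all $\eta_i$ equal to $(1-6\xi)/(n-5)$, these two conditions read $\xi\ge 1/(2n-4)$ and $\xi\le 1/(n+1)$, and the interval $[1/(2n-4),\,1/(n+1)]$ is nondegenerate precisely when $n\ge 6$; each $\xi$ in it yields a distinct equilibrium (co-location deviations give at most $(g+g')/4\le\xi$ and are never profitable), so there are infinitely many. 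This also explains, in one line, why the analogous system is tight for $n=4,5$: there the interval degenerates to a point.
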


If, without any loss of generality, we assume that the equilibrium $\boldsymbol{x}^{*}$ satisfies  $x^{*}_{i} \le x^{*}_{i+1}$ and we call $\eta_{i} = x^{*}_{i+4} - x^{*}_{i+3}$, then Figure~\ref{fi:nplayers} is an example of $n$-player Nash equilibrium if and only if
\begin{enumerate}[(i)]
\item
for all $i \in \{1,\dots,n-5\}$, we have  $\eta_{i} \geq \xi$,
\item
for all $i \in \{1,\dots,n-6\}$, we have  $(\eta_{i}+\eta_{i+1})/2 \leq \xi$.

\end{enumerate}

\bigskip

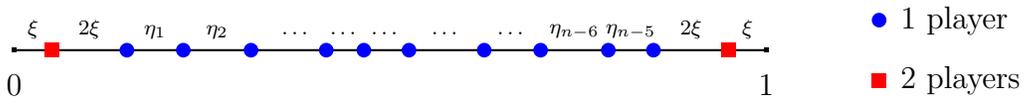
\begin{figure}[H]
\centering

\tikzstyle{leaf}=[rectangle,fill,scale=0.2]
\tikzstyle{single}=[circle,fill=blue,scale=0.5]
\tikzstyle{double}=[rectangle,fill=red,scale=0.7]
\tikzstyle{rple}=[diamond,fill=green,scale=0.6]
\tikzstyle{kple}=[regular polygon,regular polygon sides=3,fill=brown,scale=0.5]

\begin{tikzpicture}

  \begin{scope}[thick]
  
   \draw node(l0) [leaf] at (0,0) {};
   \draw node(l1) [leaf] at (10,0) {};
   
   \draw  (l0) -- (l1);
          
   \draw node(a) [single] at (11.5, 0.4) {};
   \draw (11.5,0.4) node [right]  {\ {$1$ player}}; 
       
   \draw node(b) [double] at (11.5, -0.4) {};
   \draw (11.5,-0.4) node [right]  {\ {$2$ players}};

   \draw (0,-5pt)   node [below] {$0$};
   \draw (10,-5pt)   node [below] {$1$};
         
   \draw node(c) [double] at (10/20,0) {} ;
   \draw (5/20,0)   node [above, font=\scriptsize] {$\xi$};

   \draw node(d) [single] at (30/20,0) {} ;
   \draw (20/20,0)   node [above, font=\scriptsize] {$2\xi$};           

   \draw node(e) [single] at (45/20,0) {} ;
   \draw (37.5/20,0)   node [above, font=\scriptsize] {$\eta_{1}$};  
   
   \draw node(f) [single] at (63/20,0) {} ;
   \draw (54/20,0)   node [above, font=\scriptsize] {$\eta_{2}$};  
   
   \draw (75.2/20,0)   node [above, font=\scriptsize] {$\cdots$};     
     
   \draw node(g) [single] at (83/20,0) {} ;
   \draw (88/20,0)   node [above, font=\scriptsize] {$\cdots$};     

   \draw node(h) [single] at (93/20,0) {} ;
   \draw (99/20,0)   node [above, font=\scriptsize] {$\cdots$};  

   \draw node(i) [single] at (105/20,0) {} ;
   \draw (115/20,0)   node [above, font=\scriptsize] {$\cdots$};  
      
   \draw node(j) [single] at (125/20,0) {} ;
   \draw (132.5/20,0)   node [above, font=\scriptsize] {$\cdots$};  
      
   \draw node(k) [single] at (140/20,0) {} ;
   \draw (149/20,0)   node [above, font=\scriptsize] {$\eta_{n-6}$};  
                              
   \draw node(l) [single] at (158/20,0) {} ;
   \draw (164/20,0)   node [above, font=\scriptsize] {$\eta_{n-5}$};             

   \draw node(m) [single] at (170/20,0) {} ;
   \draw (180/20,0)   node [above, font=\scriptsize] {$2\xi$};   
      
   \draw node(n) [double] at (190/20,0) {} ;
   \draw (195/20,0)   node [above, font=\scriptsize] {$\xi$};   
        
   \end{scope}
   
\end{tikzpicture}
~\vspace{0cm} \caption{\label{fi:nplayers} Equilibrium with $n$ players.}
\end{figure}

\begin{proposition}\label{pr:ipoaipossegment}
In the game  $\mathcal{L}(n,[0,1])$, we have:
\begin{enumerate}[{\rm (a)}]
\item\label{it:pr:ipoasegment}
\[
\IPoA(n) = 
\begin{cases}
2 & \text{if $n$ is even}, \\
\displaystyle{2\frac{n}{n+1}}& \text{if $n > 3$ is odd}.
\end{cases}
\]

\item\label{it:pr:ipossegment}
For $n=2$
\[
\IPoS(n) = 2.
\]
For $n \ge 4$
\[
\IPoS(n) = \frac{n}{n-2}.
\]
\end{enumerate}

\end{proposition}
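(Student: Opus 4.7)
The plan is to compute the social optimum, characterize the structure of equilibria, and then respectively minimize and maximize the cost over equilibria using the half-interval representation and majorization.

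First, minimizing $C(\boldsymbol{x}) = x_1^2/2 + \sum_{i=1}^{n-1}(x_{i+1} - x_i)^2/4 + (1 - x_n)^2/2$ (after sorting) by Lagrange multipliers yields the unique minimizer $x_i = (2i-1)/(2n)$ with $\Opt = 1/(4n)$; Lemma~\ref{le:boundsocialoptimum} specialized to the segment (where $E_{II}$ and $E_{IL}$ are empty) gives the matching lower bound. Second, every equilibrium places pairs at the two extremal locations $a$ and $1-a$, with common pair payoff $\xi = a$: the pair player's leftward-toward-leaf deviation yields payoff $\to a$ while her rightward-into-own-gap deviation yields the right half-interval length, so both must equal $a$, forcing the next distinct location to lie exactly at $3a$. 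Propagating, every interior pair forces both of its adjacent gaps to equal exactly $2a$ (each is $\le 2a$ by the pair-into-gap deviation, and their sum is $4a$). A single has adjacent gaps $\gamma_L, \gamma_R \in (0, 2a]$ with $\gamma_L + \gamma_R \ge 2a$. Consequently every half-interval length satisfies $\lambda_j \le a$, and the $2m$ non-zero half-intervals (two per distinct location) tile $[0,1]$ with total length $\sum_j \lambda_j = 1$.

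For the price of stability, since $C(\boldsymbol{x}^*) = \sum_j \lambda_j^2/2$ is Schur-convex in $\boldsymbol{\lambda}$ and $\sum_j \lambda_j = 1$ over $2m$ variables, Lemma~\ref{le:sum_convex} applied to $\psi(z)=z^2$ gives $C(\boldsymbol{x}^*) \ge 1/(4m)$, with equality iff all $\lambda_j$ are equal. Since two boundary pairs are required, $m \le n - 2$, whence $C(\boldsymbol{x}^*) \ge 1/(4(n-2))$. I then exhibit the equilibrium attaining this bound: take $a = 1/(2(n-2))$ with $n-4$ uniformly-spaced interior singles with common gap $2a = 1/(n-2)$; every structural constraint holds at equality and every half-interval has length $1/(2(n-2))$. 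For $n=2$ the unique equilibrium puts both players at $1/2$ with cost $1/4$, giving $\IPoS(2)=2$; for $n \ge 4$ we obtain $\IPoS(n)=n/(n-2)$.

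For the price of anarchy, Schur-convexity gives $C(\boldsymbol{x}^*) \le (\max_j \lambda_j)(\sum_j \lambda_j)/2 \le a/2$. The all-pairs profile at $(2i-1)/n$ is an equilibrium for even $n$ with $a = 1/n$, achieving cost $1/(2n)$ and ratio $2$, matched by the upper bound $\max_j\lambda_j \le 1/n$ from Lemma~\ref{le:equilibriumhalfinterval}. For odd $n \ge 5$, the analogous ``maximum-pair'' configuration places $(n-1)/2$ pairs and one single on the arithmetic progression $\{(2i-1)/(n+1)\}_{i=1}^{(n+1)/2}$, yielding $a = 1/(n+1)$ and cost $1/(2(n+1))$. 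The \textbf{main obstacle} is the matching upper bound $a \le 1/(n+1)$ for every odd-$n$ equilibrium. My planned argument: classify the $m-1$ consecutive-location gaps as pair-pair, pair-single, or single-single (counts $P_P, PS, SS$); observe that every pair-adjacent gap is forced to $2a$ by Step~2; let $C = PS/2$ be the number of maximal runs of consecutive singles (clusters), and let $E$ and $O = C - E$ be the numbers of even- and odd-sized clusters; telescope the single-adjacency inequalities $\gamma_i + \gamma_{i+1} \ge 2a$ within each cluster to show its span is $\ge (\ell - 1)a$ (odd $\ell$) or $\ge (\ell - 2)a$ (even $\ell$); then use the identity $P_P = k - 1 - C$ and equate against $1 - 2a = (P_P + PS)\cdot 2a + (\text{sum of single-single gaps})$ to extract $a(n + O) \le 1$. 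Since $s = n - 2k$ is odd for odd $n$, the count of odd-sized clusters $O$ must itself be odd, hence $O \ge 1$ and $a \le 1/(n+1)$; the bound is tight exactly when $O = 1$, realized by the case-(b) configuration above, concluding $\IPoA(n) = 2n/(n+1)$.
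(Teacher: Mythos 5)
Your proposal is correct, and its skeleton coincides with the paper's: the paper also reduces everything to the half-interval decomposition, computes $\Opt=1/(4n)$ by majorizing the uniform vector $(1/(2n),\dots,1/(2n))$, exhibits the same three extremal profiles (pairs at the two ends with uniform singles in between for the best equilibrium; all pairs for the worst even equilibrium; $(n-1)/2$ pairs plus one single for the worst odd equilibrium), and obtains the extremality of these profiles by Schur-convexity, using that any equilibrium forces four zero-length half-intervals (for $\IPoS$) and that every half-interval has length at most $1/n$ (for $\IPoA$, even $n$). Where you genuinely add something is the odd-$n$ price of anarchy. The paper disposes of that case with ``the argument is similar'' to the even case, but the even-case ingredients --- at most $n-1$ zero half-intervals and each half-interval bounded by $1/n$ --- do not by themselves yield majorization by the vector with $n+1$ components equal to $1/(n+1)$: a single half-interval of length $1/n$ already violates the first partial-sum inequality. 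Your cluster argument closes exactly this gap: pair-adjacent gaps are pinned to $2a$, single--single gaps telescope to give each cluster of $\ell$ singles a span of at least $(\ell-1)a$ or $(\ell-2)a$ according to parity, and the identity $P_P=k-1-C$ combined with $s=n-2k$ odd forces an odd (hence positive) number of odd clusters, giving $a\le 1/(n+1)$ and thus $C(\boldsymbol{x}^{*})\le a/2\le 1/(2(n+1))$. This is a more explicit and fully rigorous route to the sharp constant $2n/(n+1)$; the price you pay is the heavier structural bookkeeping (exact gap values around pairs, run decomposition, parity count), whereas the paper's majorization framing is lighter but, as written, incomplete precisely at this step.
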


\begin{figure}[H]
\centering

\tikzstyle{leaf}=[rectangle,fill,scale=0.2]
\tikzstyle{single}=[circle,fill=blue,scale=0.5]
\tikzstyle{double}=[rectangle,fill=red,scale=0.7]
\tikzstyle{rple}=[diamond,fill=green,scale=0.6]
\tikzstyle{kple}=[regular polygon,regular polygon sides=3,fill=brown,scale=0.5]

\begin{tikzpicture}

  \begin{scope}[scale=0.75]
  
   \draw node(l0) [leaf] at (0,0) {};
   \draw node(l1) [leaf] at (10,0) {};
   
   \draw  (l0) -- (l1);

   \draw (0,-5pt)   node [below] {$0$};
   \draw (10,-5pt)   node [below] {$1$};

   \draw (2.5/10,0)   node [above, font=\scriptsize] {$\frac{1}{2n}$};   
         
   \draw node(c) [single] at (5/10,0) {} ;
   \draw (10/10,0)   node [above, font=\scriptsize] {$\frac{1}{n}$};   
         
   \draw node(c) [single] at (15/10,0) {} ;
   \draw (20/10,0)   node [above, font=\scriptsize] {$\frac{1}{n}$};

   \draw node(d) [single] at (25/10,0) {} ;

   \draw node(e) [single] at (35/10,0) {} ;
   
   \draw node(f) [single] at (45/10,0) {} ;
   \draw (50/10,0)   node [above, font=\scriptsize] {$\cdots$};  
   
   \draw node(g) [single] at (55/10,0) {} ;

   \draw node(h) [single] at (65/10,0) {} ;

   \draw node(i) [single] at (75/10,0) {} ;
   \draw (80/10,0)   node [above, font=\scriptsize] {$\frac{1}{n}$};   

   \draw node(i) [single] at (85/10,0) {} ;   
   \draw (90/10,0)   node [above, font=\scriptsize] {$\frac{1}{n}$};   

   \draw node(i) [single] at (95/10,0) {} ;         
   \draw (97.5/10,0)   node [above, font=\scriptsize] {$\frac{1}{2n}$};   
 
   \end{scope}
   
     \begin{scope}[xshift=8.4cm,scale=0.75]
  
   \draw node(l0) [leaf] at (0,0) {};
   \draw node(l1) [leaf] at (10,0) {};
   
   \draw  (l0) -- (l1);      
       \draw (0,-5pt)   node [below] {$0$};
   \draw (10,-5pt)   node [below] {$1$};

   \draw (5/16,0)   node [above, font=\scriptsize] {$\frac{1}{2n-4}$};           
   \draw node(c) [double] at (10/16,0) {} ;
   \draw (20/16,0)   node [above, font=\scriptsize] {$\frac{1}{n-2}$};
         
   \draw node(c) [single] at (30/16,0) {} ;
   \draw (40/16,0)   node [above, font=\scriptsize] {$\frac{1}{n-2}$};           

   \draw node(e) [single] at (50/16,0) {} ;
   
   \draw node(f) [single] at (70/16,0) {} ;
   \draw (80/16,0)   node [above, font=\scriptsize] {$\cdots$};  
   
   \draw node(g) [single] at (90/16,0) {} ;

   \draw node(h) [single] at (110/16,0) {} ;

   \draw (120/16,0)   node [above, font=\scriptsize] {$\frac{1}{n-2}$};         
      
   \draw node(j) [single] at (130/16,0) {} ;

   \draw (140/16,0)   node [above, font=\scriptsize] {$\frac{1}{n-2}$};   
      
   \draw node(n) [double] at (150/16,0) {} ;
   \draw (155/16,0)   node [above, font=\scriptsize] {$\frac{1}{2n-4}$};         

   \end{scope}
   
   \begin{scope}[yshift=-2cm,scale=0.75]
  
   \draw node(l0) [leaf] at (0,0) {};
   \draw node(l1) [leaf] at (10,0) {};
   
   \draw  (l0) -- (l1);

   \draw (0,-5pt)   node [below] {$0$};
   \draw (10,-5pt)   node [below] {$1$};
         
   \draw node(c) [double] at (10/10,0) {} ;
   \draw (5/10,0)   node [above, font=\scriptsize] {$\frac{1}{n}$};   
         
   \draw (20/10,0)   node [above, font=\scriptsize] {$\frac{2}{n}$};

   \draw node(d) [double] at (30/10,0) {} ;
   \draw (40/10,0)   node [above, font=\scriptsize] {$\cdots$}; 

   \draw node(f) [double] at (50/10,0) {} ;
   \draw (60/10,0)   node [above, font=\scriptsize] {$\cdots$};  

   \draw node(h) [double] at (70/10,0) {} ;
 
   \draw (80/10,0)   node [above, font=\scriptsize] {$\frac{2}{n}$};   
      
   \draw node(n) [double] at (90/10,0) {} ;
   \draw (95/10,0)   node [above, font=\scriptsize] {$\frac{1}{n}$};   

   \end{scope}
   
   \begin{scope}[xshift=8.4cm,yshift=-2cm,scale=0.75]
  
   \draw node(l0) [leaf] at (0,0) {};
   \draw node(l1) [leaf] at (10,0) {};
   
   \draw  (l0) -- (l1);

   \draw (0,-5pt)   node [below] {$0$};
   \draw (10,-5pt)   node [below] {$1$};
         
   \draw node(c) [double] at (10/10,0) {} ;
   \draw (5/10,0)   node [above, font=\scriptsize] {$\frac{1}{n+1}$};   
         
   \draw (20/10,0)   node [above, font=\scriptsize] {$\frac{2}{n+1}$};

   \draw node(d) [double] at (30/10,0) {} ;
   \draw (40/10,0)   node [above, font=\scriptsize] {$\cdots$}; 
  
   \draw node(f) [single] at (50/10,0) {} ;
   \draw (60/10,0)   node [above, font=\scriptsize] {$\cdots$};  

   \draw node(h) [double] at (70/10,0) {} ;
  
   \draw (80/10,0)   node [above, font=\scriptsize] {$\frac{2}{n+1}$};   
      
   \draw node(n) [double] at (90/10,0) {} ;
   \draw (95/10,0)   node [above, font=\scriptsize] {$\frac{1}{n+1}$};   

   \end{scope}
   
   \begin{scope}[xshift=8.4cm,yshift=-2cm,scale=0.75]
 
   \draw node [single] at (-4, -2) {};
   \draw (-4,-2) node [right]  {\ {$1$ player}};
             
   \draw node [double] at (1, -2) {};
   \draw (1,-2) node [right]  {\ {$2$ players}};
   
   \end{scope}

\end{tikzpicture}
~\vspace{0cm} \caption{\label{fi:opteqsegment} Top left: Social optimum $\overline{\boldsymbol{x}}$ with $n$ players.
Top right: Best equilibrium $\widetilde{\boldsymbol{x}}$ with $n$ players.
Bottom left: Worst equilibrium $\widehat{\boldsymbol{x}}$ with $n$ players ($n$ even).
Bottom right: Example of worst equilibrium $\widehat{\boldsymbol{x}}^{\ell}$ with $n$ players ($n$ odd), where $\ell$ is the only unmatched  player.}
\end{figure}
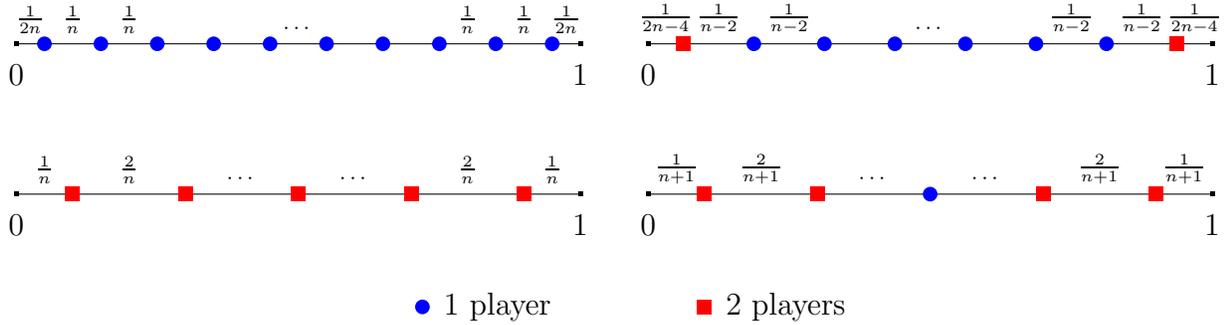

\subsection{The star}\label{suse:star}

In the whole section, we have $k>2$.
We assume $S$ to be a \emph{star} $S_{k}$, that is a network with $k+1$ vertices $\{v_{0}, v_{1}, \dots, v_{k}\}$ where for $j \in \{1, \dots, k\}$ vertex $v_{j}$ is connected to vertex $v_{0}$ and to no other vertex. 
The length of all the edges $[v_{0},v_{j}]$ is assumed to be equal to $1$.

\begin{proposition}\label{pr:barStark}
Consider a location game $\mathcal{L}(n, S_{k})$. 
\begin{enumerate}[{\rm (a)}]
\item\label{it:pr:barStark-a}
If $2 \le n \le k$, then a unique equilibrium $\boldsymbol{x}^{*}$ exists where  $x^{*}_{i} = v_{0}$ for all $i \in N$.

\item\label{it:pr:barStark-b}
If $k < n < 3k-1$, then there is no Nash equilibrium.

\item\label{it:pr:barStark-c}
If $3k-1\le{n}\le3{k}$, then there exists a unique equilibrium.

\item\label{it:pr:barStark-d}
If $3k+1\le{n}$, then there exists an infinite number of equilibria. 
\end{enumerate}
\end{proposition}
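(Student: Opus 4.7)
The plan is to handle the four parts separately, relying on the structural lemmas already established in Section~\ref{se:existence}.

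For part~\ref{it:pr:barStark-a}, I would first check directly that placing all $n$ players at $v_0$ is a Nash equilibrium: each player earns $k/n \ge 1$, while a unilateral deviation to distance $t \in (0,1]$ on some arm leaves the $n-1 \ge 1$ remaining players at $v_0$ to dominate every other arm, capping the deviator's payoff at $1 - t/2 < 1 \le k/n$. For uniqueness, given any equilibrium containing a player away from $v_0$, I would pick the player $i$ farthest from $v_0$ (say at distance $t > 0$ on arm $1$): her payoff is at most $1 - t/2$, yet moving her to $v_0$ yields at least $k/n \ge 1 > 1 - t/2$ when $n \le k$, contradicting the equilibrium assumption.

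For part~\ref{it:pr:barStark-c}, I would guess-and-verify. For $n = 3k$, propose $k$ players at $v_0$ together with a pair at distance $2/3$ on each arm; for $n = 3k-1$, propose $k-1$ players at $v_0$ together with a pair at $2/3$ on every arm. Equilibrium verification reduces to checking three types of deviations: a $v_0$-player moving onto an arm (either before or past the pair), a pair-member sliding past her partner toward the leaf, and a pair-member collapsing to $v_0$. All three inequalities become tight exactly at $t = 2/3$, which together with the saturation condition of Corollary~\ref{co:equaldelta} pins down both the pair location and the number of players at $v_0$, yielding uniqueness.

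For part~\ref{it:pr:barStark-b}, I would perform a structural case analysis ruling out every configuration. First, by an argument analogous to the proof of Lemma~\ref{le:largen} (adapted to $k > 2$), every equilibrium satisfies the vertex property, so $v_0$ is occupied. Second, every arm carries at least one player, because an empty arm allows a $v_0$-player to deviate to a payoff close to $1 > k/n$. Third, any arm with exactly one player is unstable: if $v_0$ is saturated, Corollary~\ref{co:equaldelta} forces the lone player to sit at distance $2\xi$ from $v_0$, so an inward slide of size $\varepsilon$ yields an $\varepsilon/2$ gain; if $v_0$ is unsaturated, a similar direct computation applies. Hence each arm hosts at least two players, giving $n \ge m + 2k$ where $m = \card\{i : x^{*}_i = v_0\}$. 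Finally, comparing a pair-member's payoff $(1-t/2)/2$ with her deviation payoff to $v_0$ yields $m \ge k-1$, so $n \ge 3k-1$, contradicting $n < 3k-1$.

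For part~\ref{it:pr:barStark-d}, I would exhibit an infinite family of equilibria for each $n \ge 3k+1$ by extending the $n = 3k$ equilibrium with extra players placed on the arms following the construction of Proposition~\ref{pr:equilibriumnprime}. The parameter $\alpha(e) \in [1,2]$ appearing in Claim~\ref{cl:EL} provides a one-parameter continuum on each arm, and since these parameters can be chosen independently across arms, the resulting equilibria form a continuum. I expect part~\ref{it:pr:barStark-b} to be the main obstacle, since ruling out all intermediate $n$ requires carefully combining several incentive constraints (saturation at $v_0$, per-arm stability of a lone player, and pair-member collapse-to-$v_0$); each constraint individually admits a wider range of $n$, but their common intersection is empty precisely when $k < n < 3k-1$.
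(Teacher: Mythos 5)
Your overall skeleton for parts \ref{it:pr:barStark-b} and \ref{it:pr:barStark-c} (vertex property, at least two players per arm, saturation via Corollary~\ref{co:equaldelta}, then a counting/deviation argument) is the same route the paper takes, but several individual steps have genuine gaps. In part \ref{it:pr:barStark-a}, both bounds in your uniqueness argument can fail: if all $n=k$ players sit at the same distance $t>0$ on distinct arms, the farthest player attracts her entire arm and has payoff $1$, not at most $1-t/2$; and the payoff from joining $v_{0}$ is $\sum_{j}p_{j}(\boldsymbol{x})$ divided by the number of players there, which depends on the other players' positions and is only guaranteed to be $\ge k/n$ with possible equality, so no strict improvement follows from your two inequalities. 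The paper avoids this entirely: by Lemma~\ref{le:basic}\ref{it:le:basic-d}, any equilibrium with a player off $v_{0}$ must have two coinciding players on every arm, hence $n\ge 2k>k$, a contradiction. In part \ref{it:pr:barStark-b}, your closing step ``$m\ge k-1$'' compares the payoff $(1-t/2)/2$ of a pair member with her deviation payoff at $v_{0}$; this presupposes that every arm carries exactly one pair, all at a common distance $t=2/3$. You have only established at least two players per arm. The missing ingredient, which the paper supplies, is that a two-player arm forces $\xi=1/3$ (saturation of the pair gives $1-t=t/2$), and an arm with three or more players would then require its innermost player to sit at distance $2/3-2\xi=0$, i.e.\ at $v_{0}$; only after excluding such mixed configurations does the count $n=2k+m$ with $m\le k-2$ yield the profitable deviation to $v_{0}$. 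The same omission affects your uniqueness claim in part \ref{it:pr:barStark-c}.

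Part \ref{it:pr:barStark-d} is the step that would actually fail as written. The construction of Proposition~\ref{pr:equilibriumnprime} is only guaranteed for $n\ge\bar{n}$, and for $S_{k}$ formula \eqref{eq:nbardef} gives $\bar{n}=3k+5k=8k$, so it says nothing about $3k+1\le n<8k$. Moreover $\alpha(e)$ is not a free parameter: once the number of players on an arm and $\xi$ are fixed, $\alpha(e)$ is determined by the length equation, and Corollary~\ref{co:equaldelta} forces a single common $\xi$ across all saturated locations, so the arms cannot be tuned independently. The paper instead writes $n=mk+r$, places $m$ players on each arm (with the outer players paired) and $r$ at the center, and obtains infinitely many equilibria from the one-dimensional freedom in $\xi$, which may range over the explicit interval
\[
\frac{k}{2(r+1)+2km-3k}\;\le\;\xi\;\le\;\frac{k}{2r+2km-3k}.
\]
You would need either this bespoke construction or some other explicit one-parameter family valid down to $n=3k+1$.
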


Figure~\ref{fi:eqstar} shows some examples of equilibria on the star with different numbers of players.

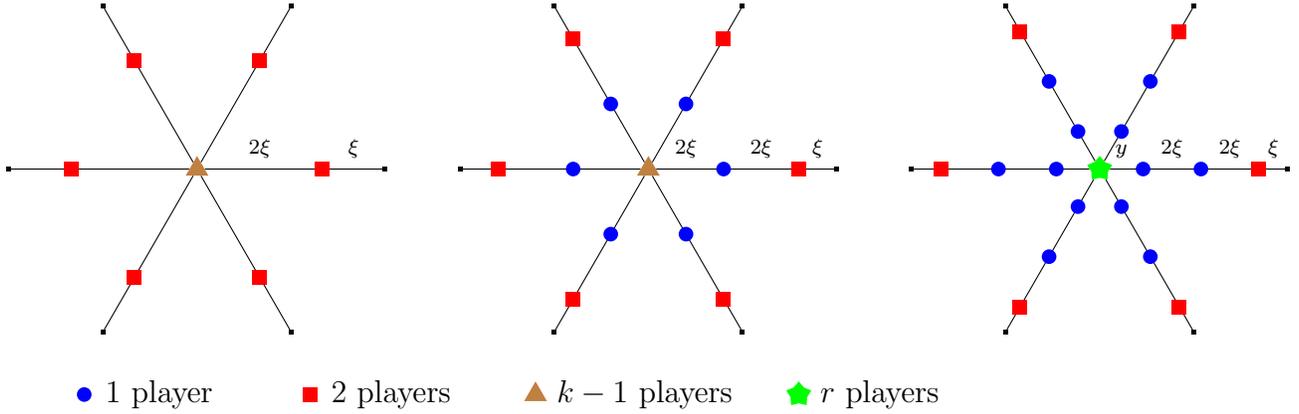
\begin{figure}[H]

\tikzstyle{leaf}=[rectangle,fill,scale=0.2]
\tikzstyle{single}=[circle,fill=blue,scale=0.5]
\tikzstyle{double}=[rectangle,fill=red,scale=0.7]
\tikzstyle{rple}=[star,fill=green,scale=0.6]
\tikzstyle{kple}=[regular polygon,regular polygon sides=3,fill=brown,scale=0.45]

\newdimen\mydim
\begin{tikzpicture}
\mydim=5cm
\begin{scope}[scale=0.5]
 
   \draw node [single] at (-3, -6) {};
   \draw (-3,-6) node [right]  {\ {$1$ player}};
             
   \draw node [double] at (3, -6) {};
   \draw (3,-6) node [right]  {\ {$2$ players}};
   
   \draw node [kple] at (9, -6) {};
   \draw (9,-6) node [right]  {\ {$k-1$ players}};
    
   \draw node [rple] at (16, -6) {};
   \draw (16,-6) node [right]  {\ {$r$ players}};

   \draw node(a) [leaf] at (0:\mydim) {};
   \draw node(b) [leaf] at (60:\mydim) {};
   \draw node(c) [leaf] at (120:\mydim) {};
   \draw node(d) [leaf] at (180:\mydim) {};
   \draw node(e) [leaf] at (240:\mydim) {};
   \draw node(f) [leaf] at (300:\mydim) {};
   \draw node(o) [kple] at (0,0) {} ;
   
   \draw (o) -- (a);     
   \draw (o) -- (b);    
   \draw (o) -- (c);   
   \draw (o) -- (d);     
   \draw (o) -- (e);    
   \draw (o) -- (f);    

   \draw node(aa) [double] at (0:2\mydim/3) {};
   \draw node(ba) [double] at (60:2\mydim/3) {};
   \draw node(ca) [double] at (120:2\mydim/3) {};
   \draw node(da) [double] at (180:2\mydim/3) {};
   \draw node(ea) [double] at (240:2\mydim/3) {};
   \draw node(fa) [double] at (300:2\mydim/3) {};

   \draw (0:5\mydim/6) node(axi3) [above, font=\scriptsize] {$\xi$};         
   \draw (60:5\mydim/6) node(axi3) {};         
   \draw (120:5\mydim/6) node(axi3) {};         
   \draw (180:5\mydim/6) node(axi3) {};         
   \draw (240:5\mydim/6) node(axi3) {};         
   \draw (300:5\mydim/6) node(axi3) {};         
   
   \draw (0:\mydim/3) node(axi2) [above, font=\scriptsize] {$2\xi$};         
   \draw (60:\mydim/3) node(axi2) {};           
   \draw (120:\mydim/3) node(axi2) {};           
   \draw (180:\mydim/3) node(axi2) {};        
   \draw (240:\mydim/3) node(axi2) {};           
   \draw (300:\mydim/3) node(axi2) {};  
   \end{scope}

\begin{scope}[scale=0.5,xshift=12cm]

   \draw node(a) [leaf] at (0:\mydim) {};
   \draw node(b) [leaf] at (60:\mydim) {};
   \draw node(c) [leaf] at (120:\mydim) {};
   \draw node(d) [leaf] at (180:\mydim) {};
   \draw node(e) [leaf] at (240:\mydim) {};
   \draw node(f) [leaf] at (300:\mydim) {};
   \draw node(o) [kple] at (0,0) {} ;
   
   \draw (o) -- (a);     
   \draw (o) -- (b);    
   \draw (o) -- (c);   
   \draw (o) -- (d);     
   \draw (o) -- (e);    
   \draw (o) -- (f);    

   \draw node(aa) [double] at (0:4\mydim/5) {};
   \draw node(ba) [double] at (60:4\mydim/5) {};
   \draw node(ca) [double] at (120:4\mydim/5) {};
   \draw node(da) [double] at (180:4\mydim/5) {};
   \draw node(ea) [double] at (240:4\mydim/5) {};
   \draw node(fa) [double] at (300:4\mydim/5) {};   

   \draw node(ab) [single] at (0:2\mydim/5) {};
   \draw node(bb) [single] at (60:2\mydim/5) {};
   \draw node(cb) [single] at (120:2\mydim/5) {};
   \draw node(db) [single] at (180:2\mydim/5) {};
   \draw node(eb) [single] at (240:2\mydim/5) {};
   \draw node(fb) [single] at (300:2\mydim/5) {};    
   
   \draw (0:9\mydim/10) node(axi1) [above, font=\scriptsize] {$\xi$};         
   \draw (60:9\mydim/10) node(axi1) {};           
   \draw (120:9\mydim/10) node(axi1) {};  
   \draw (180:9\mydim/10) node(axi1) {};        
   \draw (240:9\mydim/10) node(axi1) {};  
   \draw (300:9\mydim/10) node(axi1) {};  
   
   \draw (0:3\mydim/5) node(axi2) [above, font=\scriptsize] {$2\xi$};         
   \draw (60:3\mydim/5) node(axi2) {};          
   \draw (120:3\mydim/5) node(axi2) {};           
   \draw (180:3\mydim/5) node(axi2) {};          
   \draw (240:3\mydim/5) node(axi2) {};  
   \draw (300:3\mydim/5) node(axi2)  {};  
   
   \draw (0:1\mydim/5) node(axi3) [above, font=\scriptsize] {$2\xi$};         
   \draw (60:1\mydim/5) node(axi3) {};           
   \draw (120:1\mydim/5) node(axi3) {};           
   \draw (180:1\mydim/5) node(axi3) {};           
   \draw (240:1\mydim/5) node(axi3) {};           
   \draw (300:1\mydim/5) node(axi3) {};           
\end{scope}

\begin{scope}[scale=0.5,xshift=24cm]

   \draw node(a) [leaf] at (0:\mydim) {};
   \draw node(b) [leaf] at (60:\mydim) {};
   \draw node(c) [leaf] at (120:\mydim) {};
   \draw node(d) [leaf] at (180:\mydim) {};
   \draw node(e) [leaf] at (240:\mydim) {};
   \draw node(f) [leaf] at (300:\mydim) {};
   \draw node(o) [rple] at (0,0) {} ;
   
   \draw (o) -- (a);     
   \draw (o) -- (b);    
   \draw (o) -- (c);   
   \draw (o) -- (d);     
   \draw (o) -- (e);    
   \draw (o) -- (f);    

   \draw node(aa) [double] at (0:11\mydim/13){};
   \draw node(ba) [double] at (60:11\mydim/13){};
   \draw node(ca) [double] at (120:11\mydim/13){};
   \draw node(da) [double] at (180:11\mydim/13){};
   \draw node(ea) [double] at (240:11\mydim/13){};
   \draw node(fa) [double] at (300:11\mydim/13){};   

   \draw node(ab) [single] at (0:7\mydim/13) {};
   \draw node(bb) [single] at (60:7\mydim/13) {};
   \draw node(cb) [single] at (120:7\mydim/13) {};
   \draw node(db) [single] at (180:7\mydim/13) {};
   \draw node(eb) [single] at (240:7\mydim/13) {};
   \draw node(fb) [single] at (300:7\mydim/13) {};    
   
   \draw node(ab) [single] at (0:3\mydim/13) {};
   \draw node(bb) [single] at (60:3\mydim/13) {};
   \draw node(cb) [single] at (120:3\mydim/13) {};
   \draw node(db) [single] at (180:3\mydim/13) {};
   \draw node(eb) [single] at (240:3\mydim/13) {};
   \draw node(fb) [single] at (300:3\mydim/13) {};    
      
   \draw (0:12\mydim/13) node(axi1) [above, font=\scriptsize] {$\xi$};         
   \draw (60:12\mydim/13) node(axi1) {};           
   \draw (120:12\mydim/13) node(axi1) {};          
   \draw (180:12\mydim/13) node(axi1) {};          
   \draw (240:12\mydim/13) node(axi1) {};  
   \draw (300:12\mydim/13) node(axi1) {};         

   \draw (0:9\mydim/13) node(axi2) [above, font=\scriptsize] {$2\xi$};         
   \draw (60:9\mydim/13) node(axi2) {};           
   \draw (120:9\mydim/13) node(axi2) {};           
   \draw (180:9\mydim/13) node(axi2) {};           
   \draw (240:9\mydim/13) node(axi2) {};           
   \draw (300:9\mydim/13) node(axi2) {};          

   \draw (0:5\mydim/13) node(axi3) [above, font=\scriptsize] {$2\xi$};         
   \draw (60:5\mydim/13) node(axi3) {};           
   \draw (120:5\mydim/13) node(axi3) {};           
   \draw (180:5\mydim/13) node(axi3) {};           
   \draw (240:5\mydim/13) node(axi3) {};          
   \draw (300:5\mydim/13) node(axi3) {};   
     
   \draw (0:1.5\mydim/13) node(axi3) [above, font=\scriptsize] {$y$};         
   \draw (60:2\mydim/13) node(axi3) {};           
   \draw (120:2\mydim/13) node(axi3) {};           
   \draw (180:1.5\mydim/13) node(axi3) {};           
   \draw (240:2\mydim/13) node(axi3) {};           
   \draw (300:2\mydim/13) node(axi3) {};           

\end{scope}

\end{tikzpicture}
~\vspace{0cm} \caption{\label{fi:eqstar} Equilibria on $S_{k}$ with $3k-1$ players, $4k-1$ players and $4k+r$ players (with $k/(2r+2+5k) \leq \xi \leq k/(2r+5k)$). In all the pictures $k=6$ and in the right one $y=1-5\xi$. On each star the disposition of players is the same on all the $k$ rays.}
\end{figure}

\begin{remark}\label{re:star}

For both the segment and the circle the price of anarchy has a nonmonotonic behavior in the number of players, but is always smaller than or equal $2$. 
The next example shows that this is not the case for the star, where the price of anarchy takes values that are larger than $2$ infinitely often. For the sake of simplicity, we consider the case $k=3$.

Consider the sequence of games $\mathcal{L}(n, S_{3})$ where $S_{3}$ is a  star with $3$ rays and the number of players is $n=3(2b+1)$, with $b>2$.

The worst equilibrium $\widehat{\boldsymbol{x}}$ is as follows: there are $b$ pairs of players on each ray, and $3$ players in the center. Its social cost is
\[
C(\widehat{\boldsymbol{x}})= 3 (2b+1)  \frac{1}{2(2b+1)^{2}}=\frac{3}{4b+2}.
\]
Consider now the profile $\overline{\boldsymbol{x}}$ such that one player is located at at the center and $2b+1$ players sit on each ray, except one ray that has $2b$ players. Its costs is
\[
C(\overline{\boldsymbol{x}})= 2(4b+3)  \frac{1}{2(4b+3)^2}
+ (4b+1)  \frac{1}{2(4b+1)^2}
= \frac{1}{4b+3}+\frac{1}{8b+2}.
\]
Figure~\ref{fi:star3EQOPT} shows profiles $\widehat{\boldsymbol{x}}$ and $\overline{\boldsymbol{x}}$ for the case $b=1$.

\begin{figure}[H]

\tikzstyle{leaf}=[rectangle,fill,scale=0.2]
\tikzstyle{single}=[circle,fill=blue,scale=0.5]
\tikzstyle{double}=[rectangle,fill=red,scale=0.7]
\tikzstyle{rple}=[star,fill=green,scale=0.6]
\tikzstyle{kple}=[regular polygon,regular polygon sides=3,fill=brown,scale=0.5]

\newdimen\mydim
\begin{tikzpicture}
\mydim=5cm

   \draw node(z) at (0,0) {};

\begin{scope}[scale=0.6,xshift=5cm]

   \draw node(a) [leaf] at (0:\mydim) {};
   \draw node(c) [leaf] at (120:\mydim) {};
   \draw node(e) [leaf] at (240:\mydim) {};
   \draw node(o) [rple] at (0,0) {} ;
   
   \draw (o) -- (a);     
   \draw (o) -- (c);   
   \draw (o) -- (e);    

   \draw node(aa) [double] at (0:2\mydim/3) {};
   \draw node(ca) [double] at (120:2\mydim/3) {};
   \draw node(ea) [double] at (240:2\mydim/3) {};
       
   \draw (0:\mydim/3) node(axi2) [above, font=\scriptsize] {$2/3$};         
   \draw (120:\mydim/3) node(axi2) [above, right, font=\scriptsize] {$2/3$};         
   \draw (240:\mydim/3) node(axi2) [below, right, font=\scriptsize] {$2/3$};         

   \draw (0:5\mydim/6) node(axi3) [above, font=\scriptsize] {$1/3$};         
   \draw (120:5\mydim/6) node(axi3) [above, right, font=\scriptsize] {$1/3$};         
   \draw (240:5\mydim/6) node(axi3) [below, right, font=\scriptsize] {$1/3$};         

\end{scope}

\begin{scope}[scale=0.6,xshift=15cm]
       
   \draw node(a) [leaf] at (0:\mydim) {};
   \draw node(c) [leaf] at (120:\mydim) {};
   \draw node(e) [leaf] at (240:\mydim) {};
   \draw node(o) [single] at (0,0) {} ;
   
   \draw (o) -- (a);     
   \draw (o) -- (c);   
   \draw (o) -- (e);    

   \draw node(aa) [single] at (0:2\mydim/7) {};
   \draw node(ca) [single] at (120:2\mydim/7) {};
   \draw node(ea) [single] at (240:2\mydim/5) {};

   \draw node(aa) [single] at (0:4\mydim/7) {};
   \draw node(ca) [single] at (120:4\mydim/7) {};
   \draw node(ea) [single] at (240:4\mydim/5) {};

   \draw node(aa) [single] at (0:6\mydim/7) {};
   \draw node(ca) [single] at (120:6\mydim/7) {};

   \draw (0:\mydim/7) node(axi2) [above, font=\scriptsize] {$2/7$};         
   \draw (120:\mydim/7) node(axi2) [left, font=\scriptsize] {$2/7$};         
   \draw (240:\mydim/5) node(axi2) [below, right, font=\scriptsize] {$2/5$};         

   \draw (0:3\mydim/7) node(axi2) [above, font=\scriptsize] {$2/7$};         
   \draw (120:3\mydim/7) node(axi2) [left, font=\scriptsize] {$2/7$};         
   \draw (240:3\mydim/5) node(axi2) [below, right, font=\scriptsize] {$2/5$};         

   \draw (0:5\mydim/7) node(axi2) [above, font=\scriptsize] {$2/7$};         
   \draw (120:5\mydim/7) node(axi2) [left, font=\scriptsize] {$2/7$};         

   \draw (0:13\mydim/14) node(axi2) [above, font=\scriptsize] {$1/7$};         
   \draw (120:13\mydim/14) node(axi2) [left, font=\scriptsize] {$1/7$};         
   \draw (240:9\mydim/10) node(axi2) [below, right, font=\scriptsize] {$1/5$};                 

\end{scope}

\begin{scope}[scale=0.6,xshift=22cm]
       
   \draw node [single] at (0, 1) {};
   \draw (0,1) node [right]  {\ {$1$ player}};
             
   \draw node [double] at (0, 0) {};
   \draw (0,0) node [right]  {\ {$2$ players}};
    
   \draw node [rple] at (0, -1) {};
   \draw (0, -1) node [right]  {\ {$3$ players}};

\end{scope}

\end{tikzpicture}
~\vspace{0cm} \caption{\label{fi:star3EQOPT}Left: equilibrium $\widehat{\boldsymbol{x}}$ on $S_{3}$ with $9$ players. Right: good configuration $\overline{\boldsymbol{x}}$ on $S_{3}$ with $9$ players.}
\end{figure}

Since the cost of the optimal profile is not greater than  $C(\overline{\boldsymbol{x}})$, we have

\[
\PoA(n) \geq \frac{C(\widehat{\boldsymbol{x}})}{C(\overline{\boldsymbol{x}})}=\frac{3}{(4b+2)(\frac{1}{4b+3}+\frac{1}{8b+2})}=\frac{3}{\frac{4b+2}{4b+3}+\frac{4b+2}{8b+2}} > 2
\]
for all $b>2$. 
The last inequality follows from the fact that the denominator is equal to 
\begin{equation*}
\frac{3}{2}-\frac{1}{4b+3}+\frac{1}{8b+2}<\frac{3}{2}.
\end{equation*}
Obviously $\PoA(n) \to 2$, as $b\to\infty$.

On the other hand in location games on the star the price of anarchy takes values smaller than $2$ for infinitely many values of $n$. 
Consider the class of games $\mathcal{L}(n,S_3)$ with $n=6b+1$, $b>2$.  Figure~\ref{fi:star3rays} shows the case $b=3$. 

The worst equilibrium $\widehat{x}$ is as follow: there are $2$ players in the center, $b$ pairs of players equally spaced on $2$ rays, and $(b-1)$ pairs plus a single player equally spaced on the third ray. We have
\begin{equation*}
C(\widehat{x})=3(2b+1)\frac{1}{2(2b+1)^2}=\frac{3}{2(2b+1)}.
\end{equation*}
The optimum profile $\overline{x}$ is as follow: there is $1$ player at the center and $2b$ players equally spaced on each of the $3$ ray. We have
\begin{equation*}
C(\overline{x})=3(4b+1)\frac{1}{2(4b+1)^2}=\frac{3}{2(4b+1)}.
\end{equation*}
Therefore,
\begin{equation*}
\PoA(n)=\frac{4b+1}{2b+1} < 2.
\end{equation*}

\begin{figure}[H]

\tikzstyle{leaf}=[rectangle,fill,scale=0.2]
\tikzstyle{single}=[circle,fill=blue,scale=0.5]
\tikzstyle{double}=[rectangle,fill=red,scale=0.7]
\tikzstyle{rple}=[star,fill=green,scale=0.6]
\tikzstyle{kple}=[regular polygon,regular polygon sides=3,fill=brown,scale=0.5]

\newdimen\mydim
\begin{tikzpicture}
\mydim=5cm

   \draw node(z) at (0,0) {};

\begin{scope}[scale=0.6,xshift=5cm]
   
   \draw node(a) [leaf] at (0:\mydim) {};

   \draw node(c) [leaf] at (120:\mydim) {};

   \draw node(e) [leaf] at (240:\mydim) {};

   \draw node(o) [double] at (0,0) {} ;
   
   \draw (o) -- (a);     
   
   \draw (o) -- (c);   
     
   \draw (o) -- (e);    
    
   \draw node(aa) [double] at (0:2\mydim/7) {};
   \draw node(aa) [double] at (0:4\mydim/7) {};
   \draw node(aa) [double] at (0:6\mydim/7) {};

   \draw node(aa) [double] at (120:2\mydim/7) {};
   \draw node(aa) [double] at (120:4\mydim/7) {};
   \draw node(aa) [double] at (120:6\mydim/7) {};
   
   \draw node(aa) [single] at (240:2\mydim/7) {};
   \draw node(aa) [double] at (240:4\mydim/7) {};
   \draw node(aa) [double] at (240:6\mydim/7) {};

\end{scope}

\begin{scope}[scale=0.6,xshift=15cm]

   \draw node(a) [leaf] at (0:\mydim) {};

   \draw node(c) [leaf] at (120:\mydim) {};

   \draw node(e) [leaf] at (240:\mydim) {};

   \draw node(o) [single] at (0,0) {} ;
   
   \draw (o) -- (a);     
   
   \draw (o) -- (c);   
     
   \draw (o) -- (e);    
    
   \draw node(aa) [single] at (0:2\mydim/13) {};
   \draw node(aa) [single] at (0:4\mydim/13) {};
   \draw node(aa) [single] at (0:6\mydim/13) {};
   \draw node(aa) [single] at (0:8\mydim/13) {};
   \draw node(aa) [single] at (0:10\mydim/13) {};
   \draw node(aa) [single] at (0:12\mydim/13) {};

   \draw node(aa) [single] at (120:2\mydim/13) {};
   \draw node(aa) [single] at (120:4\mydim/13) {};
   \draw node(aa) [single] at (120:6\mydim/13) {};
   \draw node(aa) [single] at (120:8\mydim/13) {};
   \draw node(aa) [single] at (120:10\mydim/13) {};
   \draw node(aa) [single] at (120:12\mydim/13) {};
   
   \draw node(aa) [single] at (240:2\mydim/13) {};
   \draw node(aa) [single] at (240:4\mydim/13) {};
   \draw node(aa) [single] at (240:6\mydim/13) {};
   \draw node(aa) [single] at (240:8\mydim/13) {};
   \draw node(aa) [single] at (240:10\mydim/13) {};
   \draw node(aa) [single] at (240:12\mydim/13) {};
\end{scope}

\begin{scope}[scale=0.6,xshift=22cm]
       
   \draw node [single] at (0, 0) {};
   \draw (0,0) node [right]  {\ {$1$ player}};
             
   \draw node [double] at (0, -1) {};
   \draw (0,-1) node [right]  {\ {$2$ players}};
   
\end{scope}

\end{tikzpicture}
~\vspace{0cm} \caption{\label{fi:star3rays} An example of worst equilibrium (left) and social optimum (right) on $S_{3}$ with $19$ players.}
\end{figure}
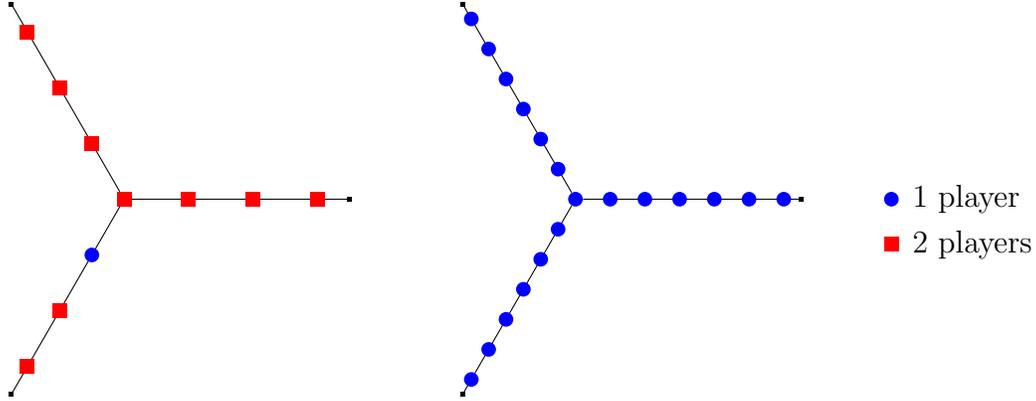

\end{remark}

\subsection{Proofs}

\subsubsection*{The circle}
The set action for the players is now a circle $\mathcal{C}$ with radius $r$. Without loss of generality we can suppose that $r=1$. We fix an arbitrary origin $0$ and identify the circle $\mathcal{C}$ with ${\mathbb{R}}/{2\pi}$ and we consider the representations in $[0,2\pi)$. We now define $3$ different profiles of locations in the game $\mathcal{L}(n, \mathcal{C})$.
\begin{enumerate}[(i)]
\item\label{it:circlebesteq}
Call $\widetilde{\boldsymbol{x}}$  the profile such that
\[
\widetilde{x}_{i} = \frac{i}{n}2\pi. 
\]

\item\label{it:circleworsteqeven}
For $n$ even call $\widehat{\boldsymbol{x}}$ the action profile such that
\[
\widehat{x}_{2i-1} = \widehat{x}_{2i} = \frac{2i}{n}2\pi, \quad i \in \{1, \dots, n/2\}. 
\]

\item\label{it:circleworsteqodd}
For $n$ odd call $\breve{\boldsymbol{x}}$ the action profile such that
\begin{align*}
&\breve{x}_{2i-1} = \breve{x}_{2i} = \frac{2i}{n}2\pi, \quad i \in \{1, \dots, \lfloor n/2\rfloor\} \\
&\breve{x}_{n} = 2\pi.
\end{align*}

\end{enumerate}

Propositions \ref{pr:existenceC} and ~\ref{pr:pricesalop} are direct consequences of the following lemma. 
Considerations about the existence of equilibria for location games on the circle were already present in \citet{EatLip:RES1975}.

\begin{lemma}\label{le:salop}
\begin{enumerate}[{\rm (a)}]

\item\label{it:le:eqbestworstcircle-profiles}
The action profiles $\widetilde{\boldsymbol{x}}$, $\widehat{\boldsymbol{x}}$, and $\breve{\boldsymbol{x}}$
are equilibria in $\mathcal{L}(n,\mathcal{C})$.

\item\label{it:le:eqbestworstcircle-check}
For all positive $n$,
\begin{align}\label{eq:beactioncircle}
\widetilde{\boldsymbol{x}} &\in \arg \min_{\boldsymbol{x}\in \mathcal{E}_{n}} C(\boldsymbol{x}),\\
\widetilde{\boldsymbol{x}} &\in \arg \min_{\boldsymbol{x}\in \mathcal{C}^{n}} C(\boldsymbol{x}), \label{eq:optactioncircle}
\end{align}
and
\begin{equation}\label{eq:bevaluecircle}
C(\widetilde{\boldsymbol{x}}) = \frac{\pi^{2}}{n}.
\end{equation}

\item\label{it:le:eqbestworstcircle-hateven}
For $n$ even
\begin{equation}\label{eq:weactionnevencircle}
\widehat{\boldsymbol{x}} \in \arg \max_{\boldsymbol{x}\in \mathcal{E}_{n}} C(\boldsymbol{x}) 
\end{equation}
and
\begin{equation}\label{eq:wevaluenevencircle}
C(\widehat{\boldsymbol{x}}) =  \frac{2\pi^{2}}{n}.
\end{equation}

\item\label{it:le:eqbestworstcircle-hatodd}
For $n$ odd
\begin{equation}\label{eq:weactionnoddcircle}
\breve{\boldsymbol{x}} \in \arg \max_{\boldsymbol{x}\in \mathcal{E}_{n}} C(\boldsymbol{x}) 
\end{equation}
and
\begin{equation}\label{eq:wevaluenoddcircle}
C(\breve{\boldsymbol{x}}) = \frac{2\pi^{2}}{n+1}.
\end{equation}
\end{enumerate}

\end{lemma}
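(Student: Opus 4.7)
The plan is to deduce all four parts of Lemma~\ref{le:salop} from two ingredients: a direct deviation analysis for part~\ref{it:le:eqbestworstcircle-profiles}, and Schur-convexity of the map $\boldsymbol{\ell} \mapsto \tfrac{1}{2}\sum \ell_i^2$ (Lemma~\ref{le:sum_convex}) applied to the vector of half-interval lengths, combined with the structural properties of equilibria on $\mathcal{C}$ that follow from Lemma~\ref{le:playersdegree}, Lemma~\ref{le:equilibriumhalfinterval} and Corollary~\ref{co:equaldelta}. Although $\mathcal{C}$ is not literally a graph in our framework, these lemmas and their proofs apply verbatim once every point of $\mathcal{C}$ is assigned degree~$2$.

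For part~\ref{it:le:eqbestworstcircle-profiles} and the cost identities \eqref{eq:bevaluecircle}, \eqref{eq:wevaluenevencircle}, and \eqref{eq:wevaluenoddcircle}, I will observe that each of the three profiles induces a uniform tiling of $\mathcal{C}$: $\widetilde{\boldsymbol{x}}$ yields $2n$ half-intervals of length $\pi/n$, $\widehat{\boldsymbol{x}}$ (for even $n$) yields $n$ half-intervals of length $2\pi/n$, and $\breve{\boldsymbol{x}}$ (for odd $n$) yields $n+1$ half-intervals of length $2\pi/(n+1)$; the corresponding cost values then follow by summing $\ell^2/2$ over the tiling. I then verify the Nash property by exhausting the three classes of unilateral deviations: (i) staying in the arc between the deviator's two nearest neighbors (share unchanged); (ii) relocating to an existing location of multiplicity $m$, which creates a cluster of multiplicity $m+1 \ge 2$ whose per-capita share is bounded by the current payoff and is strictly lower whenever the new cluster is unbalanced with respect to its neighbors; (iii) relocating alone into another arc, attracting half of that arc's length, which is bounded by the maximum half-interval in the profile and hence coincides with the pair payoff. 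In each of the three profiles no deviation is strictly profitable. Part~\ref{it:le:eqbestworstcircle-check} then follows at once from Schur-convexity: for any profile $\boldsymbol{x}$ with $k \le n$ distinct locations, $C(\boldsymbol{x}) = \tfrac{1}{2}\sum_{i=1}^{2k}\ell_i^2 \ge 2k \cdot (\pi/k)^2/2 = \pi^2/k \ge \pi^2/n$, with equality iff $\boldsymbol{x}$ is (up to rotation) $\widetilde{\boldsymbol{x}}$.

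Parts~\ref{it:le:eqbestworstcircle-hateven} and~\ref{it:le:eqbestworstcircle-hatodd} constitute the main effort. Let $\boldsymbol{x}^{*}$ be an equilibrium, let $p$ and $s$ be the number of pair and singleton locations ($n=2p+s$), let $b$ be the number of maximal singleton blocks around $\mathcal{C}$, and let $\xi$ denote the minimum player payoff. I will use three structural facts: (I) every location has multiplicity at most $2$ (Lemma~\ref{le:playersdegree} applied to the degree-$2$ points of $\mathcal{C}$); (II) every half-interval has length at most $\xi$, by the deviation argument of Lemma~\ref{le:equilibriumhalfinterval} applied to any player achieving $\xi$; and (III) both arcs adjacent to any pair location have length exactly $2\xi$ (Corollary~\ref{co:equaldelta}). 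Fact (II) combined with $\sum \ell_i = 2\pi$ yields $C(\boldsymbol{x}^{*}) \le \pi\xi$. Fact (III), together with the observation that each of the $b$ maximal pair clusters, containing $p_i$ pairs, contributes $p_i-1$ internal and $2$ boundary arcs of length $2\xi$, produces exactly $\sum_i(p_i+1) = p+b$ such arcs and hence $\xi \le \pi/(p+b)$. For $n$ even one may take $p=n/2$, $s=b=0$, yielding $\xi \le 2\pi/n$ and $C(\boldsymbol{x}^{*}) \le 2\pi^2/n = C(\widehat{\boldsymbol{x}})$. For $n$ odd one must have $s \ge 1$, whence $b \ge 1$ whenever $p \ge 1$; combined with $p \le (n-1)/2$ this gives $\xi \le 2\pi/(n+1)$ and $C(\boldsymbol{x}^{*}) \le 2\pi^2/(n+1) = C(\breve{\boldsymbol{x}})$. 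The main obstacle is to handle uniformly the residual cases in which $p+b$ is not large enough to force $\xi \le 2\pi/(n+1)$ directly---namely when some singleton block has size $\ge 3$, or when $p=0$---for which I supplement (III) with the interior-singleton payoff constraint $g_{j-1}+g_j \ge 2\xi$ and, when $p=0$, with a direct majorization on the cyclic gap vector $(g_1,\ldots,g_n)$; in both cases the extremal configuration is an alternating gap pattern that degenerates precisely to $\breve{\boldsymbol{x}}$ in the limit, so the bound $C \le 2\pi^2/(n+1)$ persists.
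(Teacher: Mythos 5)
Your parts~(a) and~(b) match the paper: a direct deviation check for the three profiles, and majorization of the half-interval vector by the uniform vector $(\pi/n,\dots,\pi/n)\in\simplex_{2n}(2\pi)$ together with Lemma~\ref{le:sum_convex}. The genuine gap is in the upper bounds of parts~(c) and~(d), which must hold for \emph{every} equilibrium, not just for $\widehat{\boldsymbol{x}}$ and $\breve{\boldsymbol{x}}$. For $n$ even you derive $\xi\le\pi/(p+b)$ and then ``take $p=n/2$, $s=b=0$'': that instantiates the worst profile rather than bounding an arbitrary equilibrium, which may well contain singletons ($\widetilde{\boldsymbol{x}}$ itself has $p=0$). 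The conclusion is salvageable from your own Fact~(II) together with Lemma~\ref{le:Lovern}: every half-interval has length at most $\xi\le 2\pi/n$, so $C(\boldsymbol{x}^{*})\le\pi\xi\le 2\pi^{2}/n$; this is essentially the paper's argument, which majorizes $\boldsymbol{\lambda}(\boldsymbol{x}^{*})$ by $(2\pi/n,\dots,2\pi/n,0,\dots,0)$ via Lemma~\ref{le:equilibriumhalfinterval}.

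Part~(d) is where the real problem lies. Your plan hinges on proving $\xi\le 2\pi/(n+1)$ for every odd-$n$ equilibrium, but this inequality is false: the equally spaced singleton profile $\widetilde{\boldsymbol{x}}$ is an equilibrium for odd $n$ in which every player's payoff is $2\pi/n>2\pi/(n+1)$, so the chain $C\le\pi\xi\le 2\pi^{2}/(n+1)$ cannot close. (That equilibrium still satisfies $C=\pi^{2}/n<2\pi^{2}/(n+1)$, but only because $C\le\pi\xi$ is far from tight there.) You acknowledge that the cases $p=0$ and singleton blocks of size at least $3$ escape your count of $2\xi$-arcs, but the ``supplement'' you describe --- an extremal alternating gap pattern that ``degenerates to $\breve{\boldsymbol{x}}$ in the limit, so the bound persists'' --- is an assertion, not an argument, and it is precisely where the content of part~(d) sits. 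The paper's route is different: since $n$ is odd, at most $(n-1)/2$ locations carry two players, hence $\boldsymbol{\lambda}(\boldsymbol{x}^{*})$ has at most $n-1$ zero components; combining this with the bound on half-interval lengths, it concludes that $\boldsymbol{\lambda}(\boldsymbol{x}^{*})\prec\boldsymbol{\lambda}(\breve{\boldsymbol{x}})=(2\pi/(n+1),\dots,2\pi/(n+1),0,\dots,0)$ and invokes Schur-convexity. To repair your proof you should either establish that majorization directly or actually carry out the case analysis you only sketch.
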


We define the simplex
\begin{equation}\label{eq:simplex}
\simplex_{n}(a):=\left\{(y_{1},\dots,y_{n})\in\mathbb{R}^{n}_{+}:\sum_{i=1}^{n}y_{i}=a\right\}.
\end{equation}

\begin{proof}[Proof of Lemma~\ref{le:salop}]
\noindent\ref{it:le:eqbestworstcircle-profiles}\ref{it:circlebesteq}
In the profile $\widetilde{\boldsymbol{x}}$ every player's payoff is $2\pi/n$. 
The length of any interval between two consecutive players is $2 \pi/n$ so a unilateral deviation to such an interval is not profitable. 
A unilateral deviation to a location already occupied by another player would produce a payoff equal to either $3 \pi/(2n)$ or $\pi/n$, depending on whether this player is or is not a neighbor. In both cases such a deviation is not profitable, hence, $\widetilde{\boldsymbol{x}}$ is a Nash equilibrium.

\ref{it:circleworsteqeven} In the profile $\widehat{\boldsymbol{x}}$  every player's payoff is $2\pi/n$. 
The length of an interval between two consecutive chosen locations is $4\pi/n$, so a unilateral deviation in such an interval is not strictly profitable. A unilateral deviation to a location already chosen by another pair of players would produce a payoff equall to $4 \pi/(3n)$, and would not be profitable. Hence, $\widehat{\boldsymbol{x}}$ is a Nash equilibrium.

\ref{it:circleworsteqodd} In the profile $\breve{\boldsymbol{x}}$,  every player's payoff is larger than $2 \pi/(n+1)$. The length of an interval between two consecutive chosen locations is $4 \pi/(n+1)$ so a unilateral deviation in such an interval is not strictly profitable. A unilateral deviation to a location already chosen by a single player would produce a payoff of $2 \pi/(n+1)$, and is not profitable. A unilateral deviation to a position already chosen by a pair of players would produce a payoff equal to either $2 \pi/(n+1)$  or $4 \pi/(3n+3)$ depending on whether the deviator is or is not the single player that deviates to one of her neighbor's location. In both cases the deviation would not be profitable and we can conclude that $\widehat{\boldsymbol{x}}$ is a Nash equilibrium.

\noindent\ref{it:le:eqbestworstcircle-check} 
We remind the reader that $\boldsymbol{\lambda}(\boldsymbol{x})$ is the vector of all $\lambda([a,b])$ such that $[a,b]\in H(\boldsymbol{x})$, where $H(\boldsymbol{x})$ is the class of all $\boldsymbol{x}$-half intervals in $S$. With these notations, we have that $\boldsymbol{\lambda}(\widetilde{\boldsymbol{x}})=(\pi/n,\dots,\pi/n) \in \simplex_{2n}(2\pi)$. 
This vector is majorized by any vector in $\simplex_{2n}(2\pi)$. 
Since the mapping $z \mapsto z^{2}/2$ is convex, we can apply Lemma \ref{le:sum_convex} and conclude by Schur convexity that $\widetilde{\boldsymbol{x}} \in \arg \min_{\boldsymbol{x}\in \mathcal{C}^{n}} C(\boldsymbol{x})$. 
Moreover, $\widetilde{\boldsymbol{x}}$ is an equilibrium, hence, $\widetilde{\boldsymbol{x}} \in \arg \min_{\boldsymbol{x}\in \mathcal{E}_{n}} C(\boldsymbol{x})$.
The cost $C(\widetilde{\boldsymbol{x}})$ can be written as
\[
C(\widetilde{\boldsymbol{x}})= \sum_{[a,b]\in H(\widetilde{\boldsymbol{x}})} \frac{\lambda([a,b])^2}{2}= \sum_{i=1}^{2n} \frac{\pi^2}{2n^2} =\frac{\pi^2}{n}.
\]

\noindent\ref{it:le:eqbestworstcircle-hateven}
The vector $\boldsymbol{\lambda}(\widehat{\boldsymbol{x}})=(2\pi/n,\dots,2\pi/n,0,\dots,0) \in \simplex_{2n}(2\pi)$, is such that its $n$ first components are $2\pi/n$ and the remaining $n$ are $0$. 
It follows from Lemma~\ref{le:equilibriumhalfinterval} that at equilibrium, no half intervals can have a length longer than $2\pi/n$. Any vector in $ \simplex_{2n}(2\pi)$ whose components are bounded by  $2\pi/n$ is majorized by $\boldsymbol{\lambda}(\widehat{\boldsymbol{x}})$. 
We conclude by Schur convexity that 
$\widehat{\boldsymbol{x}} \in \arg \max_{\boldsymbol{x}\in \mathcal{E}_{n}} C(\boldsymbol{x})$. Moreover,
\[C(\widehat{\boldsymbol{x}})= \sum_{[a,b]\in H(\widehat{\boldsymbol{x}})} \frac{\lambda([a,b])^2}{2}= \sum_{i=1}^{n} \frac{4\pi^2}{2n^2} =\frac{2\pi^2}{n}.
\]

\noindent\ref{it:le:eqbestworstcircle-hatodd}
The vector $\boldsymbol{\lambda}(\breve{\boldsymbol{x}})=(2\pi/(n+1),\dots,2\pi/(n+1),0,\dots,0) \in \simplex_{2n}(2\pi)$, has the first $n+1$ components equal to $2\pi/(n+1)$ and the remaining $n-1$ equal to $0$. Since $n$ is odd, in equilibrium there are at most $(n-1)/2$ pairs of players who share the same location, and therefore at most $n-1$ half intervals with length $0$. Therefore, for any equilibrium $\boldsymbol{x}^{*}$, the vector $\boldsymbol{\lambda}(\boldsymbol{x}^{*})$ is majorized by $\boldsymbol{\lambda}(\breve{\boldsymbol{x}})$ and we can conclude that $\breve{\boldsymbol{x}} \in \arg \max_{\boldsymbol{x}\in \mathcal{E}_{n}} C(\boldsymbol{x})$.
Moreover
\begin{equation*}
C(\breve{\boldsymbol{x}})= \sum_{[a,b]\in H(\breve{\boldsymbol{x}})} \frac{\lambda([a,b])^2}{2}= \sum_{i=1}^{n+1} \frac{4\pi^2}{2(n+1)^2} =\frac{2\pi^2}{n+1}. \qedhere
\end{equation*}
\end{proof}

\subsubsection*{The segment}

We refer to \citet{EatLip:RES1975} or \citet{Pal:mimeo2011} for a proof of Proposition~\ref{pr:zeroone}. Proposition~\ref{pr:ipoaipossegment} is a direct consequence of Lemmata~\ref{le:optsegment} and \ref{le:eqbestworstsegment} below.

\begin{lemma}\label{le:optsegment}
If for all $i \in N$ we define
\[
\overline{x}_{i} = \frac{2i-1}{2n},
\]
then 
\begin{equation}\label{eq:optaction}
\overline{\boldsymbol{x}} \in \arg \min_{\boldsymbol{x}\in [0,1]^{n}} C(\boldsymbol{x})
\end{equation}
and
\begin{equation}\label{eq:optvalue} 
C(\overline{\boldsymbol{x}}) = \frac{1}{4n}.
\end{equation}
\end{lemma}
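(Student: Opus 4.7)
The plan is to use the same majorization / Schur-convexity machinery developed earlier in the paper, combined with an exact count of half intervals on the segment. First I would observe that on $[0,1]$ the graph is trivial: $\card(E)=1$, $\card(V)=2$, and $V_I=\emptyset$, so every profile $\boldsymbol{x}\in[0,1]^n$ vacuously satisfies the vertex property, and Lemma~\ref{le:numberhalfintervals} yields that the number of $\boldsymbol{x}$-half intervals equals $2n+2-0-2=2n$ (using the convention that shared locations produce zero-length half intervals).

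Next, I would write the social cost in the half-interval form used in the proof of Lemma~\ref{le:boundeqsocialcost}, namely
\[
C(\boldsymbol{x}) \;=\; \sum_{[a,b]\in H(\boldsymbol{x})} \frac{\lambda([a,b])^{2}}{2},
\]
and note that $\boldsymbol{\lambda}(\boldsymbol{x})$ lies in the simplex $\simplex_{2n}(1)$ defined in \eqref{eq:simplex}, since the $2n$ half intervals partition $[0,1]$ up to a set of measure zero. In this simplex every vector majorizes the uniform one $(1/(2n),\ldots,1/(2n))$. Since $z\mapsto z^{2}/2$ is convex, Lemma~\ref{le:sum_convex} implies that the map $\phi(z_1,\dots,z_{2n})=\sum_{i=1}^{2n}z_i^2/2$ is Schur-convex, whence
\[
C(\boldsymbol{x}) \;\geq\; 2n\cdot\frac{1}{2}\left(\frac{1}{2n}\right)^{2} \;=\; \frac{1}{4n}
\qquad\text{for every }\boldsymbol{x}\in[0,1]^{n}.
\]

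Finally, I would verify that $\overline{\boldsymbol{x}}$ achieves this lower bound. With $\overline{x}_i=(2i-1)/(2n)$ the ordered players are equally spaced by $1/n$, the outer half intervals $[0,\overline{x}_1]$ and $[\overline{x}_n,1]$ each have length $1/(2n)$, and each interior pair of consecutive players contributes two half intervals of common length $(\overline{x}_{i+1}-\overline{x}_i)/2=1/(2n)$. Hence $\boldsymbol{\lambda}(\overline{\boldsymbol{x}})=(1/(2n),\dots,1/(2n))$ and
\[
C(\overline{\boldsymbol{x}}) \;=\; 2n\cdot \frac{1}{2}\left(\frac{1}{2n}\right)^{2} \;=\; \frac{1}{4n},
\]
which establishes both \eqref{eq:optaction} and \eqref{eq:optvalue}. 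There is no real obstacle here, since all the heavy lifting has been done in Lemmas~\ref{le:sum_convex}--\ref{le:numberhalfintervals}; the only mildly delicate point is checking that the half-interval count on the segment is exactly $2n$ under the zero-length convention, so that the uniform vector is indeed majorized by $\boldsymbol{\lambda}(\boldsymbol{x})$ for every profile (and not just for profiles with distinct locations).
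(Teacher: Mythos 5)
Your proposal is correct and follows essentially the same route as the paper's own proof: both identify $\boldsymbol{\lambda}(\overline{\boldsymbol{x}})$ as the uniform vector $(1/(2n),\dots,1/(2n))\in\simplex_{2n}(1)$, invoke the fact that it is majorized by every vector in that simplex together with Schur-convexity of $\sum_i z_i^2/2$, and then compute $C(\overline{\boldsymbol{x}})=2n\cdot\tfrac{1}{2}(1/(2n))^2=1/(4n)$. The only difference is that you make explicit (via the vacuous vertex property and Lemma~\ref{le:numberhalfintervals}) why every profile yields exactly $2n$ half intervals, a point the paper leaves implicit.
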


\begin{proof}
$\boldsymbol{\lambda}(\overline{\boldsymbol{x}})=(1/(2n),\dots,1/(2n)) \in \simplex_{2n}(1)$ is majorized by any vector in $\simplex_{2n}(1)$. Hence $\overline{\boldsymbol{x}}$ minimizes $C(\cdot)$ and 
\begin{equation*}
C(\overline{\boldsymbol{x}})= \sum_{[a,b]\in H(\bar{\boldsymbol{x}})} \frac{\lambda([a,b])^2}{2} = \sum_{i=1}^{2n} \frac{1}{8n^2}=\frac{1}{4n}. \qedhere
\end{equation*}
\end{proof}

\begin{lemma}\label{le:eqbestworstsegment}
Consider the game $\mathcal{L}(n,[0,1])$ and the action profiles $\widetilde{\boldsymbol{x}}$, $\widehat{\boldsymbol{x}}$, and $\widehat{\boldsymbol{x}}^{\ell}$, for  $\ell\in\{2,\dots, (n-3)/2 \}$, defined as follows: 

For $n \geq 5$ 
\begin{align*}
& \widetilde{x}_{1}=\widetilde{x}_{2}=\frac{1}{2n-4}, \\
& \widetilde{x}_{i}=\frac{2i-3}{2n-4}, \qquad i \in \{3, \dots, n-2\}, \\
& \widetilde{x}_{n-1}=\widetilde{x}_{n}= \frac{2n-5}{2n-4}.
\end{align*}

For $n$ even
\[
\widehat{x}_{2i-1}=\widehat{x}_{2i}=\frac{2i-1}{n}, \qquad i \in \{1, \dots, n/2\}.
\]

For $n$ odd and $\ell\in\{2,\dots, (n-3)/2 \}$
\begin{align*}
& \widehat{x}^{\ell}_{2i-1}=\widehat{x}^{\ell}_{2i}=\frac{2i-1}{n+1}, \qquad i \in \{1, \dots, \ell\}, \\
& \widehat{x}^{\ell}_{2\ell+1}=\frac{2\ell+1}{n+1}, \\
& \widehat{x}^{\ell}_{2i}=\widehat{x}^{\ell}_{2i+1}=\frac{2i+1}{n+1}, \qquad i \in \left\{\ell+1, \dots,  \frac{n-1}{2} \right\}.
\end{align*}

\begin{enumerate}[{\rm (a)}]
\item\label{it:le:eqbestworstsegment-profiles}
The action profiles $\widetilde{\boldsymbol{x}}$, $\widehat{\boldsymbol{x}}$, and $\widehat{\boldsymbol{x}}^{\ell}$
are equilibria in $\mathcal{L}(n,[0,1])$.

\item\label{it:le:eqbestworstsegment-check}
For all $n \ge 4$
\begin{equation}\label{eq:beaction}
\widetilde{\boldsymbol{x}} = \arg \min_{x\in \mathcal{E}_{n}} C(\boldsymbol{x}) 
\end{equation}
and
\begin{equation}\label{eq:bevalue}
C(\widetilde{\boldsymbol{x}}) = \frac{1}{4(n-2)}.
\end{equation}

\item\label{it:le:eqbestworstsegment-hateven}
For $n$ even
\begin{equation}\label{eq:weactionneven}
\widehat{\boldsymbol{x}} = \arg \max_{x\in \mathcal{E}_{n}} C(\boldsymbol{x}) 
\end{equation}
and
\begin{equation}\label{eq:wevalueneven}
C(\widehat{\boldsymbol{x}}) =  \frac{1}{2n}.
\end{equation}

\item\label{it:le:eqbestworstsegment-hatodd}
For $n > 3$ odd
\begin{equation}\label{eq:weactionnodd}
\widehat{\boldsymbol{x}}^{\ell} = \arg \max_{x\in \mathcal{E}_{n}} C(\boldsymbol{x}) 
\end{equation}
and
\begin{equation}\label{eq:wevaluenodd}
C(\widehat{\boldsymbol{x}}^{\ell}) = \frac{n}{2(n+1)}.
\end{equation}

\end{enumerate}
\end{lemma}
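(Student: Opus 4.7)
Part (a) is a direct verification that $\widetilde{\boldsymbol{x}}$, $\widehat{\boldsymbol{x}}$, and $\widehat{\boldsymbol{x}}^{\ell}$ admit no profitable deviation. In each profile I identify the payoff $\xi$ of the balanced end-pair players, which is the length of the boundary half-intervals ($\xi = 1/(2(n-2))$, $1/n$, and $1/(n+1)$ respectively). One then checks that every single player attracts a mass of $2\xi$ (the two adjacent half-intervals each of length $\xi$), that every interval between two consecutive occupied locations has length at most $2\xi$ so that no player can strictly improve by splitting an empty interval, and that any player moving onto an already occupied location would share and earn strictly less than $\xi$. For $\widehat{\boldsymbol{x}}^{\ell}$ one additionally verifies that the lone middle single has no profitable move.

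Next I compute each social cost by counting half-intervals, using the identity $C(\boldsymbol{x}) = \sum_{[a,b]\in H(\boldsymbol{x})} \lambda([a,b])^{2}/2$ exploited in Lemma~\ref{le:boundeqsocialcost}. In $\widetilde{\boldsymbol{x}}$ the two end-pairs yield four zero-length half-intervals and the remaining $2(n-2)$ half-intervals all have length $1/(2(n-2))$, giving
\[
C(\widetilde{\boldsymbol{x}}) = 2(n-2) \cdot \frac{1}{2}\left(\frac{1}{2(n-2)}\right)^{2} = \frac{1}{4(n-2)}.
\]
In $\widehat{\boldsymbol{x}}$ the $n/2$ pairs produce $n$ non-zero half-intervals of length $1/n$, yielding $C(\widehat{\boldsymbol{x}}) = 1/(2n)$. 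In $\widehat{\boldsymbol{x}}^{\ell}$ the $(n-1)/2$ pairs plus the lone single produce $n+1$ non-zero half-intervals of length $1/(n+1)$, giving the stated value.

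The extremality claims follow from majorization, applied to the vector $\boldsymbol{\lambda}(\boldsymbol{x}) \in \simplex_{2n}(1)$ of half-interval lengths together with Schur-convexity of $\boldsymbol{z} \mapsto \sum z_{i}^{2}/2$ (Lemma~\ref{le:sum_convex}). For (c) and (d) the argument parallels Lemma~\ref{le:salop}: by Lemma~\ref{le:equilibriumhalfinterval} each half-interval in any equilibrium satisfies $\lambda([a,b]) \le 1/n$, and any vector in $\simplex_{2n}(1)$ with entries bounded by $1/n$ is majorized by $(1/n,\dots,1/n,0,\dots,0)$ with $n$ positive coordinates, which is exactly $\boldsymbol{\lambda}(\widehat{\boldsymbol{x}})$; the parity refinement for odd $n$ uses that an equilibrium on $[0,1]$ can contain at most $(n-1)/2$ pairs (otherwise more than $n-1$ zero-length half-intervals would force one non-zero entry to exceed $1/n$), yielding $\boldsymbol{\lambda}(\widehat{\boldsymbol{x}}^{\ell})$ as the tightest majorant. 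For (b) I invoke the characterization in Figure~\ref{fi:nplayers}: every equilibrium for $n\ge 6$ consists of two end-pairs sitting at distance $\xi$ from the leaves and $n-4$ singles spaced by gaps $\eta_{1},\dots,\eta_{n-5}$ satisfying $\eta_{i}\ge\xi$ and $(\eta_{i}+\eta_{i+1})/2\le\xi$ together with $6\xi+\sum_{i}\eta_{i}=1$. These inequalities force $\eta_{i}=\xi$ for all $i$ whenever both constraints are active, hence $\xi=1/(2(n-2))$, so $\boldsymbol{\lambda}(\widetilde{\boldsymbol{x}})$ is majorized by every equilibrium $\boldsymbol{\lambda}$-vector (even when some slack is available at the boundary), which gives the minimum. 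The cases $n=4,5$ must be handled separately since (ii) is vacuous.

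The principal difficulty is establishing the structural characterization underlying (b): that every equilibrium for $n\ge 5$ is \emph{forced} to have exactly two end-pairs, rather than e.g.\ a single at a leaf or some other arrangement. This rests on Lemmata~\ref{le:playersleaf} and \ref{le:playersdegree} together with an argument that an isolated player adjacent to a leaf at distance $>\xi$ from her neighbor creates a profitable deviation either for herself (toward the leaf) or for a lowest-payoff player elsewhere (toward the bigger side). Once this structural dichotomy is proved, the conditions (i)–(ii) recover the uniqueness of the equilibrium profile up to reflection and the majorization computations are mechanical.
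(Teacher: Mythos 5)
Your treatment of parts (a), (c), (d) and all the cost computations matches the paper's: direct deviation checks for the three profiles, the half-interval identity for $C$, Lemma~\ref{le:equilibriumhalfinterval} to bound every equilibrium half-interval by $1/n$, and Schur-convexity to identify $\widehat{\boldsymbol{x}}$ (resp.\ $\widehat{\boldsymbol{x}}^{\ell}$, via the bound of at most $(n-1)/2$ pairs for odd $n$) as the majorization-maximal equilibrium vector. One small imprecision in (a): a deviation onto a singly-occupied location is not always worth ``strictly less than $\xi$''---in $\widetilde{\boldsymbol{x}}$ it can be worth up to $3/(4n-8)=\tfrac{3}{2}\xi$; the deviation is still unprofitable only because the players who could reach that value currently earn $2\xi$.

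The genuine gap is in part (b). You claim that every equilibrium for $n\ge 6$ consists of exactly two end-pairs and $n-4$ singles, that the constraints force $\eta_i=\xi$ for all $i$, hence $\xi=1/(2(n-2))$, and that the equilibrium is unique up to reflection. None of this holds. First, $\widehat{\boldsymbol{x}}$ (all players paired) is an equilibrium of a completely different shape---it is the worst equilibrium in your own part (c)---and Proposition~\ref{pr:zeroone}\ref{it:pr:zeroone-c} records that there are infinitely many equilibria for $n\ge 6$, so no such rigid characterization can be true. Second, the arithmetic is inconsistent: if all gaps between singles equalled $\xi$, the total length would be $6\xi+(n-5)\xi=(n+1)\xi$, giving $\xi=1/(n+1)$, not $1/(2(n-2))$; in fact in $\widetilde{\boldsymbol{x}}$ the gaps between consecutive singles equal $2\xi$, not $\xi$. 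The paper's argument for (b) requires none of this machinery. It uses only the weak structural fact that in any equilibrium the extreme occupied location on each side is occupied by exactly two players (no player can sit at a leaf, since moving inward is profitable; the closest player to a leaf cannot be alone by the argument of Lemma~\ref{le:playersleaf}; and no interior point holds more than two players by Lemma~\ref{le:playersdegree}). Hence every equilibrium vector $\boldsymbol{\lambda}(\boldsymbol{x}^{*})\in\simplex_{2n}(1)$ has at least four zero components. The vector $\boldsymbol{\lambda}(\widetilde{\boldsymbol{x}})$, constant on $2n-4$ coordinates and zero on four, is majorized by every vector in $\simplex_{2n}(1)$ having at least four zero components, and Schur-convexity then gives $C(\widetilde{\boldsymbol{x}})\le C(\boldsymbol{x}^{*})$ for every equilibrium $\boldsymbol{x}^{*}$, with no uniqueness claim needed. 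You should replace your rigidity argument with this majorization step.
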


\begin{proof}
\noindent\ref{it:le:eqbestworstsegment-profiles}
In the profile $\widetilde{\boldsymbol{x}}$ every player's payoff is $1/(n-2)$ and there is no interval between two players (or between a player and a leaf) with length larger than $1/(n-2)$, so no profitable deviation in the interior of an interval is possible. 
A deviation to a location with a single player (resp. a pair of players) would induce a payoff of at most $3/(4n-8)$ (resp. $1/(2n-4)$) and is therefore not profitable.

In the profile $\widehat{\boldsymbol{x}}$ every player's payoff is $1/n$. A deviation in an interval between two consecutive players (or between a player and a leaf) would induce a payoff equal to $1/n$ or less, and a deviation to an occupied location would induce a payoff of $2/(3n)$. Both deviations are unprofitable.

In the profile $\widehat{\boldsymbol{x}}^{\ell}$ every player has a payoff equal to either $1/(n+1)$ or $2/(n+1)$. A deviation to an interval between two players (or between a player and a leaf) would induce a payoff smaller than $1/(n+1)$. A deviation to the location occupied by a single player would induce a payoff of $1/(n+1)$ and a deviation to a location occupied by a pair of players would induce a payoff of $1/(n+1)$ or $2/(3n+3)$, irrespective of whether the deviator is or is not the single player deviating to one of its neighbor location. In both cases, the deviation is not profitable.

\noindent\ref{it:le:eqbestworstsegment-check}
In this case
\[
\boldsymbol{\lambda}(\widetilde{\boldsymbol{x}})=\left(\frac{1}{2n-4},\dots,\frac{1}{2n-4},0,0,0,0\right) \in \simplex_{2n}(1),
\] 
i.e., the $2n-4$ first components are equal and positive and the last $4$ are $0$.
Since at equilibrium we necessarily have  $x_{1}=x_{2}$ and $x_{n-1}=x_{n}$, for any equilibrium $\boldsymbol{x}^{*}$, the vector $\boldsymbol{\lambda}(\boldsymbol{x}^{*})$ must contain $4$ null components. The vector $\boldsymbol{\lambda}(\widetilde{\boldsymbol{x}})$ is majorized by any  vector in $\simplex_{2n}(1)$ with $4$ null components, therefore,
$\widetilde{\boldsymbol{x}} \in \arg \min_{\boldsymbol{x}\in \mathcal{E}_{n}} C(\boldsymbol{x})$. 
Moreover,
\[\min_{x\in \mathcal{E}_{n}} C(\boldsymbol{x})  = \sum_{[a,b]\in H(\widetilde{\boldsymbol{x}})} \frac{\lambda([a,b])^2}{2} = \sum_{i=1}^{2n-4} \frac{1}{2(2n-4)^2}= \frac{1}{4(n-2)}.
\]

\noindent\ref{it:le:eqbestworstsegment-hateven}
In this case
\[
\boldsymbol{\lambda}(\widehat{\boldsymbol{x}})=\left(\frac{1}{n},\dots,\frac{1}{n},0, \dots ,0\right) \in \simplex_{2n}(1),
\] 
i.e., the first $n$  components are  equal to $1/n$  and the last $n$ are $0$.
According to Lemma~\ref{le:equilibriumhalfinterval}, in equilibrium the length of any half interval is at most $1/n$. Any  vector in $\simplex_{2n}(1)$ with components smaller than $1/n$ is majorized by $\boldsymbol{\lambda}(\widehat{\boldsymbol{x}})$, hence, $\widehat{\boldsymbol{x}} \in \arg \max_{\boldsymbol{x}\in \mathcal{E}_{n}} C(\boldsymbol{x})$. Moreover, 
\[
\max_{x\in \mathcal{E}_{n}} C(\boldsymbol{x})  = \sum_{[a,b]\in H(\widehat{\boldsymbol{x}})} \frac{\lambda([a,b])^2}{2} = \sum_{i=1}^{n} \frac{1}{2(n)^2}= \frac{1}{2n}.
\]

\noindent\ref{it:le:eqbestworstsegment-hatodd}
The argument is similar to the one used to prove \ref{it:le:eqbestworstsegment-hateven}.
\end{proof}

The optimum and equilibrium profiles used in Lemmata~\ref{le:optsegment} and \ref{le:eqbestworstsegment} are depicted in Figure~\ref{fi:opteqsegment}.

\subsubsection*{The star}

For $j \in \{1, \dots, k\}$ define
$
N^{j} = \left\{i \in N: x_{i} \in  e_{v_{0}v_{j}}\setminus \{v_{0}\}\right\}
$
and call $h(j)$ the cardinality of $N^{j}$. Order the players $i_{j,1} \prec \dots \prec i_{j,h(j)} \in N^{j}$ in terms of the distance of their actions $x_{i_{j,1}}, \dots, x_{i_{j,h(j)}}$ from $v_{0}$, from the smallest to the largest (solve the ties arbitrarily).

For $j\in\{1,\dots,k\}$, define 
\[
p_{j}(\boldsymbol{x})=
\begin{cases}
d(v_{0},x_{i_{j,1}})/2 & \text{if there is at least one player on $e_{v_{0}v_{j}}$},\\
1 & \text{otherwise}. 
\end{cases}
\]
To simplify the notation we call  
\[
g_{i_{j,\ell}} = \rho_{i_{j,\ell}}(\boldsymbol{x}^{*}).
\]
Therefore, if the profile $\boldsymbol{x}$ is such that there is a player in $v_{0}$, then
\begin{enumerate}[{\rm (a)}]
\item
the payoff of the player in $v_{0}$ is 
\[
\frac{\sum_{j=1}^{k} p_{j}(\boldsymbol{x})}{\card\{\ell: x_{\ell} = v_{0}\}},
\]

\item
for $j$ in $\{1, \dots ,k \}$ 
we have
\begin{align*}
g_{i_{j,1}} &= \frac{d(v_{0}, x_{i_{j,1}})+ d(x_{i_{j,1}},x_{i_{j,2}})}{2\card  \{m: x_m = x_{i_{j,1}}\}},\\
g_{i_{j,\ell}} &= \frac{ d(x_{i_{j,\ell-1}}, x_{i_{j,\ell}}) +  d(x_{i_{j,\ell}}, x_{i_{j,\ell+1}})}{2\card  \{m: x_m = x_{i_{j,\ell}}\}}\quad\text{for }\ell \in \{ 2, \dots , h(j)-1 \},\\ 
g_{i_{j,h(j)}} &= \frac{2d( x_{i_{j,h(j)}},v_{j}) + d(x_{i_{j,h(j)}},x_{i_{j,h(j)-1}})
}{2\card  \{m: x_m = x_{i_{j,h(j)}}\}}.
\end{align*}

\end{enumerate}

Let $\bar{j}= \arg \min_{j}$  $p_{j}$ and assume that in the profile $\boldsymbol{x}$ there is no player in $v_{0}$. Then, considering that the closest player to the vertex attracts also some consumers on other edges, we have that for $j$ in $\{1, \dots ,k \}$ 
\begin{equation*}
g_{i_{j,1}} =
\begin{cases}
\displaystyle{\frac{d(x_{i_{j,1}},x_{i_{j},2}) + d(x_{i_{j,1}},x_{i_{\bar{j},1}})}{2\card  \{m: x_m = x_{i_{j,1}}\}}} &\text{for }j \ne \bar{j}, \\
\displaystyle{\frac{2p_{\bar{j}}+ \sum_{j \ne \bar{j}} (p_{j}-p_{\bar{j}})+d(x_{i_{\bar{j},1}},x_{i_{\bar{j},2}})}{2\card  \{m: x_m = x_{i_{\bar{j},1}}\}}} &\text{for }j = \bar{j}. \\
\end{cases}
\end{equation*}
with $x_{i_{j,2}}=v_{j}$ if $x_{i_{j,1}}$ is the only location on the edge $j$.

\begin{lemma}\label{le:basic}
Let $\boldsymbol{x}^{*}$ be a Nash equilibrium of the location game $\mathcal{L}(n, S_{k})$. Then the following hold:
\begin{enumerate}[{\rm (a)}]
\item\label{it:le:basic-a}
There exists $i^{*} \in \{1,...,N\}$ such that $x^{*}_{i^{*}}=v_{0}$.

\item\label{it:le:basic-b}
For every $x \in S \setminus \{v_{0}\}$, $\card\{i:x^{*}_{i} = x\} \le 2$.

\item\label{it:le:basic-c}
$\card\{i:x^{*}_{i} = v_{0}\} \le k$.

\item\label{it:le:basic-d}
If for some  $i \in N$ we have $x^{*}_{i} \in S\setminus\{v_{0}\}$, then for each $j \in \{1, \dots, k\}$ there exist at least two players $i_{j,h(j)}, i_{j,h(j)-1} \in N^{j}$ such that $x^{*}_{i_{j,h(j)}} = x^{*}_{i_{j,h(j)-1}}$.
\end{enumerate}
\end{lemma}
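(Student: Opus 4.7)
The plan is to prove (a) by contradiction and then derive (b)--(d) by combining (a) with Lemmata~\ref{le:playersleaf} and~\ref{le:playersdegree}.

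For (a), I assume no player is at $v_{0}$. Let $\bar{j}\in\arg\min_{j}p_{j}$ and let $i^{*}=i_{\bar{j},1}$ be located at $x^{*}$ on ray $\bar{j}$ at distance $2p_{\bar{j}}>0$ from $v_{0}$. A direct deviation argument independent of the vertex property shows that any interior point carries at most two players: if $m\ge 3$ share an interior point, one of them can move $\varepsilon$ onto the larger of the two incident half-intervals and capture almost all of that half-interval, which beats the $m$-way split. Hence $m^{*}:=\card\{i:x^{*}_{i}=x^{*}\}\in\{1,2\}$. If $m^{*}=1$, I would move $i^{*}$ toward $v_{0}$ by $\delta>0$: she loses at most $\delta/2$ on ray $\bar{j}$ (the midpoint with $i_{\bar{j},2}$ shifts toward $v_{0}$, with no loss if she is alone on ray $\bar{j}$) but gains $\delta/2$ on each of the other $k-1$ rays, since her ``wing'' on ray $j$ extends by $\delta/2$. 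The net change is at least $(k-2)\delta/2>0$ when $k>2$, contradicting equilibrium.

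The harder case is $m^{*}=2$; no single unilateral deviation suffices, and I would pair two. Let $T$ denote the mass attracted on the leaf side of $x^{*}$ on ray $\bar{j}$, so $T=1-2p_{\bar{j}}$ if $h(\bar{j})=2$ and $T=d(x^{*},x^{*}_{i_{\bar{j},3}})/2$ otherwise. The shared payoff at $x^{*}$ equals $g=[2p_{\bar{j}}+\sum_{j\ne\bar{j}}(p_{j}-p_{\bar{j}})+T]/2$. I would then consider one of the two players at $x^{*}$ either (i) relocating to $v_{0}$, which yields payoff $\sum_{j}p_{j}$, or (ii) relocating infinitesimally past $x^{*}$ toward $v_{\bar{j}}$, which yields payoff $T$ in the limit. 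The non-profitability conditions $g\ge\sum_{j}p_{j}$ and $g\ge T$ simplify to
\[
T\ge (k-1)p_{\bar{j}}+\sum_{j\ne\bar{j}}p_{j}
\qquad\text{and}\qquad
T\le (3-k)p_{\bar{j}}+\sum_{j\ne\bar{j}}p_{j},
\]
and subtracting gives $(2k-4)p_{\bar{j}}\le 0$, impossible for $k>2$ and $p_{\bar{j}}>0$. The main obstacle of the whole proof is precisely this $m^{*}=2$ case, because the contradiction does not come from a single deviation but from the infeasibility of the pair of non-profitability inequalities.

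With (a) established, $\boldsymbol{x}^{*}$ satisfies the vertex property (the only vertex in $V_{I}$ is $v_{0}$), so Lemma~\ref{le:playersdegree} immediately delivers (b) (with $\degree(w)=2$ for every $w$ interior to a ray) and (c) (with $\degree(v_{0})=k$). For (d), assuming some $x^{*}_{i}\ne v_{0}$, I would first apply Lemma~\ref{le:playersleaf} on every ray $j$ with $h(j)\ge 1$: the player closest to $v_{j}$ cannot be at the leaf, because moving $\varepsilon$ toward $v_{0}$ would enlarge her territory by $\varepsilon/2$, so she must be in the interior and hence not alone, which forces $h(j)\ge 2$ together with $x^{*}_{i_{j,h(j)}}=x^{*}_{i_{j,h(j)-1}}$. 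To exclude $h(j_{0})=0$ for some ray $j_{0}$, I would take one member of the leaf-end pair on any ray $j_{1}$ with $h(j_{1})\ge 1$ and relocate her to a point at distance $\varepsilon$ from $v_{0}$ on the empty ray $j_{0}$: the only competition on $j_{0}$ comes from the players at $v_{0}$ guaranteed by (a), so she captures mass $1-\varepsilon/2$, which exceeds her original payoff of at most $1/2$ (she shared at most the mass of the whole ray $j_{1}$ with her partner). The existence of this profitable deviation forces $h(j)\ge 1$ on every ray and completes the proof.
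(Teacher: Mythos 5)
Your proof is correct and follows the same skeleton as the paper's: part (a) by a deviation of the player closest to $v_{0}$, parts (b) and (c) as instances of Lemma~\ref{le:playersdegree} once (a) delivers the vertex property, and part (d) via the leaf-pairing argument of Lemma~\ref{le:playersleaf} together with a deviation onto an empty ray. The substantive difference is in (a): the paper disposes of it in one line (the closest player moves toward $v_{0}$, loses $\varepsilon/2$ on her own edge and gains $\varepsilon/2$ on each of the other $k-1$ edges), an accounting that is only valid when that player is alone at her location; if she is paired, she holds only half of the leaf-side mass $T$ before the move, and that single deviation need not be profitable when $T$ is large. You correctly isolate this case and resolve it by showing that the two non-profitability constraints (against relocating to $v_{0}$ and against stepping just past the shared location toward the leaf) are jointly infeasible for $k>2$; this genuinely completes an argument the paper leaves implicit. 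Your treatment of (d) is likewise a touch more careful than the paper's claim that \emph{any} player could profitably move to an empty ray (a player at $v_{0}$ may earn more than $1$, so one must pick a low-payoff deviator, and your leaf-end paired player with payoff at most $1/2$ does the job). The one step I would tighten is the identity $g=[2p_{\bar{j}}+\sum_{j\ne\bar{j}}(p_{j}-p_{\bar{j}})+T]/2$ in the paired case: it presumes every ray carries a player and that $\bar{j}$ is the strict minimizer; with ties the cross-ray mass is split, and with an empty ray the pair's center-side territory $C$ can exceed $2p_{\bar{j}}+\sum_{j\ne\bar{j}}(p_{j}-p_{\bar{j}})$. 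Both degeneracies are absorbed by running your two inequalities with $C$ in place of the explicit formula, i.e.\ $C\ge T$ and $C+T\ge 2\sum_{j}p_{j}$, bounding $C$ above ray by ray (an empty ray contributes $1$ to $\sum_{j}p_{j}$ while $T\le 1$, so the deviation to $v_{0}$ alone already yields the contradiction there); a sentence to this effect would make the step airtight.
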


\begin{proof}
\noindent\ref{it:le:basic-a}
Consider a star $S_{k}$ with $k\ge 3$ and assume \emph{ad absurdum} that no player is in $v_{0}$. Consider the  player $i$ such that $x^{*}_{i}$ is the closest position to $v_{0}$. For $\varepsilon<d(x^{*}_{i},v_{0})$, if player $i$ moves of $\varepsilon$ towards the center, then she loses $\varepsilon/2$ on the edge where she is and gains $\varepsilon/2$ on every other edge. Therefore moving towards the center is profitable.

\noindent\ref{it:le:basic-b}
and
\noindent\ref{it:le:basic-c}
These are particular cases of Lemma~\ref{le:playersdegree}.

\noindent\ref{it:le:basic-d} Suppose that there exists $i \in N$ such that $x^{*}_{i} \in S\setminus \{v_{0}\}$. Then player $i$ cannot be alone on her edge: if she were,  she would have a profitable deviation by moving towards the center. If one edge were empty, then any of the players could profitably deviate by moving to the empty edge, close enough to $v_{0}$.
\end{proof}

\begin{lemma}\label{le:basic2}
Let $\boldsymbol{x}^{*}$ be a Nash equilibrium of $\mathcal{L}(n, S_{k})$ and let $y \in e_{v_{0},v_{j}}\setminus \{v_{0}\},$ be such that 
\begin{equation*}
\card\{\ell:x^{*}_{\ell} = y\} = 2,
\end{equation*}
and call $i_{\ell}$ and $i_{\ell+1}$ the two players in $y$. We have
\begin{enumerate}[{\rm (a)}]
\item\label{it:le:basic2-a}
if $h(j)>\ell+1$, then
\[
g_{i_{j,\ell}}  = g_{i_{j,\ell+1}}
=d(x^{*}_{i_{j,\ell}},x^{*}_{i_{j,\ell-1}})=d(x^{*}_{i_{j,\ell+1}},x^{*}_{i_{j,\ell+2}})=:\xi(y).
\]

\item\label{it:le:basic2-b}
if $h(j)=\ell+1$, then
\[
g_{i_{j,\ell}}  = g_{i_{j,\ell+1}}
=d(x^{*}_{i_{j,\ell}},x^{*}_{i_{j,\ell-1}})=\frac{1}{2}d(x^{*}_{i_{j,\ell+1}},v_{j})=:\xi(y). 
\]

\item\label{it:le:basic2-c}
The value $\xi(y)$ does not depend on $y$ (hence we simply denote it $\xi$). 
\end{enumerate}
\end{lemma}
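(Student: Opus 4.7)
The plan is to apply the Nash equilibrium condition to the two players at $y$ and examine infinitesimal deviations along the edge $e_{v_0, v_j}$. Since $x^{*}_{i_{j,\ell}} = x^{*}_{i_{j,\ell+1}} = y$, the payoff formula \eqref{eq:payoff} produces identical values for both players, so $g_{i_{j,\ell}} = g_{i_{j,\ell+1}}$ is automatic; the content of the lemma lies in the remaining equalities and in the independence claim.

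For part (a), set $a := d(y, x^{*}_{i_{j,\ell-1}})$ and $b := d(y, x^{*}_{i_{j,\ell+2}})$. I will obtain the balance $a = b$ from two opposite deviations. If $i_{j,\ell}$ slides an infinitesimal $\varepsilon$ toward $x^{*}_{i_{j,\ell-1}}$, she becomes alone on the newly created short interval and attracts a mass of $a/2 + O(\varepsilon)$; the symmetric move by $i_{j,\ell+1}$ toward $x^{*}_{i_{j,\ell+2}}$ attracts $b/2 + O(\varepsilon)$. Non-profitability of both deviations, in the limit $\varepsilon \to 0$, combined with the explicit formula for the stacked payoff, forces $a = b$. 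Denoting the common value $\xi(y)$ and substituting into the chain of equalities completes the case.

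Part (b) follows exactly the same scheme, with one modification: the ``right'' neighbour of the pair is now the leaf $v_j$ rather than a player. A deviation of $i_{j,\ell+1}$ toward $v_j$ therefore captures the entire residual segment up to $v_j$, not just half of a gap, producing a stronger Nash inequality on that side. Pairing this with the unchanged left-side inequality yields the asymmetric relation claimed in the lemma, with both sides identified as $\xi(y)$.

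For part (c), I will show that $\xi(y)$ is independent of the saturated pair $y$ by invoking Corollary~\ref{co:equaldelta}. Any pair at an interior point of an edge is a $\boldsymbol{x}^{*}$-saturated location, since that point has degree $2$ and carries exactly two players. Because Lemma~\ref{le:basic}\ref{it:le:basic-a} places a player at the interior vertex $v_0$, the vertex property of Definition~\ref{de:VP} holds, and Corollary~\ref{co:equaldelta} then yields a single common value $\xi$ across every saturated location and every incident edge. In particular, the payoff at each saturated pair coincides with this $\xi$, so $\xi(y)$ is independent of $y$. The main obstacle is in parts (a) and (b), where the Nash inequalities produced by the two-sided deviations must be shown to \emph{tighten} into the asserted equalities—this requires combining those inequalities with the explicit formula for the stacked payoff rather than reading them as mere one-sided bounds on $a$ and $b$.
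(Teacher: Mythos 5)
Your argument is correct in substance, but it takes a more hands-on route than the paper for parts (a) and (b). The paper's entire proof is the single line ``This follows directly from Corollary~\ref{co:equaldelta}'': an interior point of an edge hosting two players is an $\boldsymbol{x}^{*}$-saturated location of degree $2$, the vertex property holds by Lemma~\ref{le:basic}\ref{it:le:basic-a}, and the corollary immediately gives both the balance of the two $\delta$'s at $y$ and the independence of the common value across saturated locations. You instead re-derive the balance condition from scratch via the two opposite $\varepsilon$-deviations combined with the explicit stacked payoff $(a+b)/4$ (resp.\ $a/4+c/2$ in the leaf case); this is exactly the mechanism inside the proofs of Lemma~\ref{le:degreecost} and Corollary~\ref{co:equaldelta}, so you are essentially reproving the corollary in the degree-$2$ special case, and the computation is sound. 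For part (c) you invoke the corollary just as the paper does.

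One caveat you should not gloss over: your (correct) computation yields $g_{i_{j,\ell}}=\tfrac12\,d(x^{*}_{i_{j,\ell}},x^{*}_{i_{j,\ell-1}})$ in case (a), and in case (b) it yields $d(x^{*}_{i_{j,\ell+1}},v_{j})=\tfrac12\,d(x^{*}_{i_{j,\ell}},x^{*}_{i_{j,\ell-1}})$, i.e.\ the leaf segment is \emph{half} the gap to the previous player. These are the relations delivered by Corollary~\ref{co:equaldelta} and the ones actually used later (e.g.\ in the proof of Proposition~\ref{pr:barStark}, where the gap between distinct locations is $2\xi$ and the outermost pair sits at distance $\xi$ from the leaf). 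They do not coincide with the chain of equalities as literally printed in the lemma, which places the factors $1$ and $\tfrac12$ the other way around. So when you write that substituting ``completes the case'' or ``yields the asymmetric relation claimed in the lemma,'' you are asserting agreement with a display that your own numbers contradict by a factor of two (and, in case (b), in the reversed direction). The discrepancy originates in the paper's statement rather than in your argument, but a complete write-up must either prove the statement as written or explicitly record that the derived relations are $g=\tfrac12 a=\tfrac12 b$ and $g=\tfrac12 a=c$; silently identifying the two is the one genuine gap in the proposal.
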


\begin{proof}
This follows directly from Corollary~\ref{co:equaldelta}.
\end{proof}

\begin{lemma}\label{le:carddegree}
Let $\boldsymbol{x}^{*}$ be a Nash equilibrium of $\mathcal{L}(n, S_{k})$.
If $\card\{\ell:x^{*}_{\ell} = v_{0}\} = \degree(v_{0}) = k$,
then 
\[
p_{j}(\boldsymbol{x}^{*})=\xi.
\]
\end{lemma}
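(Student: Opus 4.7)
The plan is to reduce the statement to a direct application of Corollary~\ref{co:equaldelta} after matching the definitions of $p_{j}(\boldsymbol{x}^{*})$ and $\delta(v_{0},\cdot)$.

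First, I would verify that $\boldsymbol{x}^{*}$ satisfies the vertex property. For the star $S_{k}$ (with $k>2$), the only vertex of degree at least three is $v_{0}$, so $V_{I}=\{v_{0}\}$. The hypothesis that $\card\{\ell:x_{\ell}^{*}=v_{0}\}=k\ge 3$ guarantees in particular that a player sits at $v_{0}$, hence the vertex property holds. It also says exactly that $v_{0}$ is $\boldsymbol{x}^{*}$-saturated, since $\degree(v_{0})=k$.

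Next, for each $j\in\{1,\dots,k\}$ I would show that $p_{j}(\boldsymbol{x}^{*})=\delta(v_{0},e_{v_{0}v_{j}})$, where $\delta$ is defined in \eqref{eq:delta}. There are two cases. If there is at least one player of $\boldsymbol{x}^{*}$ in the interior of $e_{v_{0}v_{j}}$, then $e_{v_{0}v_{j}}\in\mathcal{P}(v_{0})$ and $c(e_{v_{0}v_{j}})=x^{*}_{i_{j,1}}$, so
\[
\delta(v_{0},e_{v_{0}v_{j}})=\frac{d(v_{0},x^{*}_{i_{j,1}})}{2}=p_{j}(\boldsymbol{x}^{*}).
\]
If, on the other hand, no player of $\boldsymbol{x}^{*}$ lies in the interior of $e_{v_{0}v_{j}}$, then, since $v_{j}$ is a leaf and the only player on $e_{v_{0}v_{j}}$ is at $v_{0}$, we have $e_{v_{0}v_{j}}\in\mathcal{N}(v_{0})$, so $\delta(v_{0},e_{v_{0}v_{j}})=\lambda(e_{v_{0}v_{j}})=1=p_{j}(\boldsymbol{x}^{*})$ by the conventions of Section~\ref{suse:star}.

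Finally, I would invoke Corollary~\ref{co:equaldelta}: since $\boldsymbol{x}^{*}$ satisfies the vertex property and $v_{0}$ is an $\boldsymbol{x}^{*}$-saturated location, there exists $\xi>0$ such that $\delta(v_{0},e)=\xi$ for every $e\in\mathcal{I}(v_{0})$; moreover each of the $k$ players sitting at $v_{0}$ earns payoff $\xi$, which is consistent with the notation used in Lemma~\ref{le:basic2}. Combining this with the identification $p_{j}(\boldsymbol{x}^{*})=\delta(v_{0},e_{v_{0}v_{j}})$ established above yields $p_{j}(\boldsymbol{x}^{*})=\xi$ for every $j$, as claimed. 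There is no real obstacle here; the only subtlety is to keep track of the two cases in the definition of $\delta$ and to confirm that the constant $\xi$ produced by Corollary~\ref{co:equaldelta} is the same as the one singled out in Lemma~\ref{le:basic2}, which holds because both coincide with the common payoff of any balanced player.
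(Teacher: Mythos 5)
Your proof is correct and follows the same route as the paper, which simply observes that the claim follows from Corollary~\ref{co:equaldelta}; you have merely filled in the details the paper leaves implicit (the vertex property, the saturation of $v_{0}$, and the identification $p_{j}(\boldsymbol{x}^{*})=\delta(v_{0},e_{v_{0}v_{j}})$). No gaps.
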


\begin{proof}
This also follows from  Corollary~\ref{co:equaldelta}.
\end{proof}

\begin{proof}[Proof of Proposition~\ref{pr:barStark}]
\noindent\ref{it:pr:barStark-a}
First we prove that the profile $\boldsymbol{x}^{*}$ such that $x^{*}_{i} = v_{0}$ for all $i \in N$ is indeed an equilibrium. If any player $i$ deviates, then she will obtain a payoff that is strictly less than $1$, whereas by not deviating she obtains $k/n \ge 1$.

Now we turn to prove uniqueness. Assume by contradiction that there exists an equilibrium such that for some $i \in N$ we have $x_{i} \in e_{v_{0}v_{j}}$. Then, by Lemma~\ref{le:basic}\ref{it:le:basic-d}, each edge $e_{v_{0}v_{j}}$ has been chosen by at least two players. This implies that $n \ge 2k$, which is impossible, since $n \le k$.

\medskip
\noindent\ref{it:pr:barStark-b}
Assume by contradiction that an equilibrium $\boldsymbol{x}^{*}$ exists. If $x^{*}_{i} = v_{0}$ for all $i \in N$, then each player gains $k/n < 1$, so a profitable deviation is possible. 

We consider now the case where for some $i \in N$ we have $x_{i} \neq v_{0}$. 

First we consider the case $k < n < 2k$. 
If for some $i \in N$ we have $x^{*}_{i} \neq v_{0}$, then, by Lemma~\ref{le:basic}\ref{it:le:basic-d}, each ray contains $2$ players in the same position, so $2k$ players choose an action different from $v_{0}$, which is impossible since $n < 2k$.

If $n=2k$, and for some $i \in N$ we have $x^{*}_{i} \neq v_{0}$, then, like in the previous case, by  Lemma~\ref{le:basic}\ref{it:le:basic-d}, $x^{*}_{i} \neq v_{0}$ for all $i \in N$ and, by  Lemma~\ref{le:basic2}, $d(x^{*}_{i}, v_{0})$ is the same for all $i\in N$. If $d(x^{*}_{i}, v_{0}) < 1/2$, then, for $\varepsilon$ small enough, one player profits by choosing a position at a distance $d(x^{*}_{i}, v_{0}) + \varepsilon$ from $v_{0}$. 
If $d(x^{*}_{i}, v_{0}) \ge 1/2$, then one player profits by deviating in $v_{0}$.

Assume now $2k < n < 3k-1$. The profile where all the players choose $v_{0}$ is not an equilibrium. By Lemma~\ref{le:basic}\ref{it:le:basic-d}, for each $j \in \{1, \dots, k\}$, there exist at least two players $i_{j,h(j)-1}, i_{j,h(j)} \in N^{j}$ such that $d(x^{*}_{i_{j,h(j)-1}}, v_{j}) = d(x^{*}_{i_{j,h(j)}}, v_{j}) = \xi$. The equilibrium action of the remaining $n-2k$ players must be $v_{0}$. If this were not the case, then, for some $j \in \{1, \dots, k\}$, there would be three players on the edge $j$ and
\[
1 \leq d(v_{0},x^{*}_{j,1}) + d(x^{*}_{j,1},x^{*}_{j,h(j)})+ d(x^{*}_{j,h(j)},v_{j})= d(v_{0},x^{*}_{j,1}) + 3 \xi.
\]
On the other hand, there are at most $k-2$ remaining players, so there is an other edge with only two players. This implies $1 = 3 \xi$, which is a contradiction. Therefore the remaining $k-2$ players must be in $v_{0}$.
The payoff of every player $i$ such that $x^{*}_{i} = v_{0}$ equals $k/(3n-6k)$.
The payoff of every other player is
$1/3$ and therefore any of them would have an incentive to deviate to $v_{0}$, gaining  $k/(3n-6k+3)$ which is larger than $1/3$ when $n < 3k-1$.

\noindent\ref{it:pr:barStark-c} It is easy to prove that a profile $\boldsymbol{x}^{*}$ where on each edge two players sit at a distance $2/3$ from the origin and the remaining players sit at $v_{0}$ is an equilibrium. We now show uniqueness. Indeed we know from Lemma~\ref{le:basic}\ref{it:le:basic-c} that a profile where all players choose $v_{0}$ is not an equilibrium; moreover Lemma~\ref{le:basic}\ref{it:le:basic-d} implies that each edge has at least two players. Using the same argument that we used in the proof of \ref{it:pr:barStark-b}, we can show that no edge can have three players if another edge has only two. By Lemma~\ref{le:basic}\ref{it:le:basic-a}, at least one player chooses $v_{0}$, therefore it is not possible to have three players on each edge, if $n\in\{3k-1,3k\}$. Hence all remaining players are in $v_{0}$.

\noindent\ref{it:pr:barStark-d} We now assume $n \ge 3k+1$. Let $ n=m k + r$ be the Euclidean division of $n$ by $k$. We will construct an equilibrium $\boldsymbol{x}^{*}$ with $m$ players on each edge and $r$ players in the center, like in Figure~\ref{fi:eqstar}. Let $(2m-3)\xi + y=1$. This profile is indeed an equilibrium if and only if the following conditions are satisfied:
\begin{enumerate}[(i)]
\item
None of the $r$ players in $v_{0}$ has an incentive to deviate to an interval of length $2 \xi$, that is,
for all $i \in N$ such that $x^{*}_{i}=v_{0}$, we have $\rho_{i}(\boldsymbol{x}^{*}) = ky/(2r) \ge \xi$, which implies $y \ge 2r\xi/k$.

\item
No player has an incentive to deviate to $v_{0}$. Given that $\rho_{i}(\boldsymbol{x}^{*}) \ge \xi$ for all $i \in N$ such that $x^{*}_{i} \neq v_{0}$, we have $ky/(2r+2) \le \xi$, which implies $y \le 2(r+1)\xi/k$. 

\item
No player has an incentive to deviate to an interval of length $y$, that is $y \le 2\xi$. 

\item 
No player has an incentive to deviate to a location with another single player.
If she did, her payoff would be either $\xi$ or 
\[
\frac{\xi}{2}+\frac{y}{4} \leq \xi.
\] 
\end{enumerate}

Then, for any $\xi$ such that
\[
\frac{k}{2(r+1)+2km-3k} \le \xi \le \frac{k}{2r+2km-3k} 
\]
the profile $\boldsymbol{x}^{*}$ is an equilibrium. Hence the game has an infinite number of pure Nash equilibria.
\end{proof}

\subsection*{Acknowledgments}
The authors thank two referees, the Associate Editor, and the Area Editor for their useful suggestions.

\bibliographystyle{artbibst}
\bibliography{bibhotelling}

\end{document}